\newcommand\myatop[2]{\genfrac{}{}{0pt}{}{#1}{#2}}
\newcommand{\wt}{{\mathrm{wt}}}
\newcommand{\lcm}{{\rm lcm}}
\newcommand{\ord}{{\mathrm{ord}}}
\newcommand{\Z}{\mathbb{{Z}}}
\newcommand{\cR}{{\mathcal{R}}}
\newcommand{\gf}{{\mathrm{GF}}}
\newcommand{\C}{{\mathcal{C}}}
\newcommand{\m}{{\mathrm{m}}}
\newcommand{\leader}{{\mathrm{Leader}}}
\newcommand{\bc}{{\mathbf{c}}}
\newcommand{\bb}{{\mathbf{b}}}
\newcommand{\bzero}{{\mathbf{0}}}
\newtheorem{theorem}{Theorem}
\newtheorem{lemma}[theorem]{Lemma}
\newtheorem{corollary}[theorem]{Corollary}
\newtheorem{conj}{Conjecture}
\newtheorem{problem}{Open Problem}
\newtheorem{example}{Example}
\begin{document}

\title{LCD Cyclic Codes over Finite Fields\thanks{C. Ding's research was supported by The Hong Kong Research Grants Council, Project No. 16301114.}}

\author{Chengju Li, Cunsheng~Ding, and ~Shuxing~Li    

\thanks{C. Li is with the School of Computer Science and Software Engineering, East China Normal University,
Shanghai, 200062, China (email: lichengju1987@163.com).}

\thanks{C. Ding is with the Department of Computer Science
                                                  and Engineering, The Hong Kong University of Science and Technology,
                                                  Clear Water Bay, Kowloon, Hong Kong, China (E-mail: cding@ust.hk).}

\thanks{S. Li is with the Department of Mathematics, The Hong Kong University of Science and Technology,
                                                  Clear Water Bay, Kowloon, Hong Kong, China  (E-mail: lsxlsxlsx1987@gmail.com).}

}

\date{\today}
\maketitle

\begin{abstract}
In addition to their applications in data storage, communications systems, and consumer electronics, LCD codes -- a class of linear codes -- have been employed in cryptography recently. LCD cyclic codes were referred to as reversible cyclic codes in the literature. The objective of this paper is to construct several families of reversible cyclic codes over finite fields and analyse their parameters. The LCD cyclic codes presented in this paper have very good parameters in general, and contain many optimal codes. A well rounded treatment of reversible cyclic codes
is also given in this paper.
\end{abstract}

\begin{IEEEkeywords}
BCH codes, cyclic codes, linear codes, LCD codes, reversible codes.
\end{IEEEkeywords}

\section{Introduction}\label{sec-intro}

Throughout this paper, let $q$ be a power of a prime $p$.
An $[n,k,d]$ code $\C$ over $\gf(q)$ is a $k$-dimensional subspace of $\gf(q)^n$ with minimum
(Hamming) distance $d$.
Let $\C$ be an $[n, k]$ linear code over $\gf(q)$. Its dual code, denoted by $\C^\perp$, is defined by
$$
\C^\perp =\{\bb \in \gf(q)^n: \bb \bc^T = 0 \ \forall \ \bc \in \C\},
$$
where $\bb \bc^T$ denotes the standard inner product of the two vectors $\bb$ and $\bc$.  A linear code
is called an \emph{LCD code (linear code with complementary dual)} if $\C \cap \C^\perp = \{\bzero\}$,
which is equivalent to $\C \oplus \C^\perp = \gf(q)^n$.

A linear $[n,k]$ code $\C$ over $\gf(q)$ is called {\em cyclic} if
$(c_0,c_1, \cdots, c_{n-1}) \in \C$ implies $(c_{n-1}, c_0, c_1, \cdots, c_{n-2})
\in \C$.
By identifying any vector $(c_0,c_1, \cdots, c_{n-1}) \in \gf(q)^n$
with
$$
c_0+c_1x+c_2x^2+ \cdots + c_{n-1}x^{n-1} \in \gf(q)[x]/(x^n-1),
$$
any code $\C$ of length $n$ over $\gf(q)$ corresponds to a subset of the quotient ring
$\gf(q)[x]/(x^n-1)$.
A linear code $\C$ is cyclic if and only if the corresponding subset in $\gf(q)[x]/(x^n-1)$
is an ideal of the ring $\gf(q)[x]/(x^n-1)$.

Note that every ideal of $\gf(q)[x]/(x^n-1)$ is principal. Let $\C=\langle g(x) \rangle$ be a
cyclic code, where $g(x)$ is monic and has the smallest degree among all the
generators of $\C$. Then $g(x)$ is unique and called the {\em generator polynomial,}
and $h(x)=(x^n-1)/g(x)$ is referred to as the {\em parity-check polynomial} of $\C$.

LCD cyclic codes over finite fields were called \textit{reversible codes} and studied by Massey
\cite{Massey64}. Massey showed that some LCD cyclic codes over finite fields are BCH codes, and
made a comparison between LCD codes and non-LCD codes \cite{Massey64}. He also demonstrated that
asymptotically good LCD codes exist \cite{Massey92}. Yang and Massey gave a necessary and
sufficient condition for a cyclic code to have a complementary dual \cite{YM94}. Using the hull
dimension spectra of linear codes, Sendrier showed that LCD codes meet the asymptotic Gilbert-Varshamov
bound \cite{Sendr}. Esmaeili and Yari analysed 1-generator LCD quasi-cyclic codes \cite{EY09}.
Muttoo and Lal constructed a reversible code over $\gf(q)$ \cite{ML86}.  Tzeng and Hartmann proved
that the minimum distance of a class of reversible cyclic codes is greater than the BCH bound \cite{TH70}.  Dougherty, Kim, \"Ozkaya, Sok and Sol\`e developed a linear programming bound on the largest size of
an LCD code of given length and minimum distance \cite{DKOSS}. G\"uneri, \"Ozkaya, and  Sol\`e
studied quasi-cyclic complementary dual codes \cite{GOS16}.
Carlet and Guilley investigated an
application of LCD codes against side-channel attacks, and presented several constructions of LCD
codes \cite{CG}. LCD codes can be used in a direct-sum-masking technique for the prevention of
side-channel attacks (see \cite{CG} for detail).

The objective of this paper is to construct several families of LCD cyclic codes over
finite fields and analyse their parameters. The dimensions of these codes are determined
and the minimum distances of some of the codes are settled and lower bounds on the minimum
distance of other codes are given. Many codes are optimal in the sense that they have the
best possible parameters. We will also give a well rounded treatment of LCD cyclic codes
in general.

We will compare some of the codes presented in this paper with the tables of best known linear
codes (referred to as the \emph{Database} later) maintained by
Markus Grassl at http://www.codetables.de.

\section{$q$-cyclotomic cosets modulo $n$ and auxiliaries}\label{sec-qcyclotomiccosets}

To deal with cyclic codes of length $n$ over $\gf(q)$, we have to study the canonical factorization of $x^n-1$
over $\gf(q)$. To this end, we need to introduce $q$-cyclotomic cosets modulo $n$. Note that $x^n-1$ has no
repeated factors over $\gf(q)$ if and only if $\gcd(n, q)=1$. Throughout this paper, we assume
that $\gcd(n, q)=1$.

Let $\Z_n = \{0,1,2, \cdots, n-1\}$, denoting the ring of integers modulo $n$. For any $s \in \Z_n$, the \emph{$q$-cyclotomic coset of $s$ modulo $n$\index{$q$-cyclotomic coset modulo $n$}} is defined by
$$
C_s=\{s, sq, sq^2, \cdots, sq^{\ell_s-1}\} \bmod n \subseteq \Z_n,
$$
where $\ell_s$ is the smallest positive integer such that $s \equiv s q^{\ell_s} \pmod{n}$, and is the size of the
$q$-cyclotomic coset. The smallest integer in $C_s$ is called the \emph{coset leader\index{coset leader}} of $C_s$.
Let $\Gamma_{(n,q)}$ be the set of all the coset leaders. We have then $C_s \cap C_t = \emptyset$ for any two
distinct elements $s$ and $t$ in  $\Gamma_{(n,q)}$, and
\begin{eqnarray}\label{eqn-cosetPP}
\bigcup_{s \in  \Gamma_{(n,q)} } C_s = \Z_n.
\end{eqnarray}
Hence, the distinct $q$-cyclotomic cosets modulo $n$ partition $\Z_n$.

Let $m=\ord_{n}(q)$, and let $\alpha$ be a generator of $\gf(q^m)^*$, which denotes the multiplicative group
of $\gf(q^m)$. Put $\beta=\alpha^{(q^m-1)/n}$.
Then $\beta$ is a primitive $n$-th root of unity in $\gf(q^m)$. The minimal
polynomial $\m_{s}(x)$ of $\beta^s$ over $\gf(q)$ is the monic polynomial of the smallest degree over
$\gf(q)$ with
$\beta^s$ as a zero. It is now straightforward to prove that this polynomial is given by
\begin{eqnarray}
\m_{s}(x)=\prod_{i \in C_s} (x-\beta^i) \in \gf(q)[x],
\end{eqnarray}
which is irreducible over $\gf(q)$. It then follows from (\ref{eqn-cosetPP}) that
\begin{eqnarray}\label{eqn-canonicalfact}
x^n-1=\prod_{s \in  \Gamma_{(n,q)}} \m_{s}(x)
\end{eqnarray}
which is the factorization of $x^n-1$ into irreducible factors over $\gf(q)$. This canonical factorization of $x^n-1$
over $\gf(q)$ is crucial for the study of cyclic codes.

The following result will be useful and is not hard to prove \cite[Theorem 4.1.4]{HP03}.

\begin{lemma}
The size $\ell_s$ of each $q$-cyclotomic coset $\C_s$ is a divisor of $\ord_{n}(q)$, which is the size $\ell_1$ of $C_1$.
\end{lemma}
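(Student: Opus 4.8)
The plan is to prove the two assertions separately: that $\ell_1 = \ord_n(q)$, and that $\ell_s$ divides $\ord_n(q)$ for every $s$. Write $m = \ord_n(q)$, so that $m$ is by definition the least positive integer with $q^m \equiv 1 \pmod{n}$. The first assertion is immediate: taking $s = 1$ in the defining congruence $s q^{\ell_s} \equiv s \pmod{n}$ gives $q^{\ell_1} \equiv 1 \pmod{n}$, and $\ell_1$ is the least positive integer with this property, which is exactly $\ord_n(q) = m$. Hence $\ell_1 = m$.

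For the divisibility $\ell_s \mid m$, I would first observe that $m$ itself satisfies the congruence defining $\ell_s$: since $q^m \equiv 1 \pmod{n}$, we have $s q^m \equiv s \pmod{n}$, so $m$ lies in the set of positive integers $t$ with $s q^t \equiv s \pmod{n}$, and minimality of $\ell_s$ gives $\ell_s \le m$. To upgrade this inequality to divisibility, I would apply the division algorithm and write $m = a \ell_s + r$ with $0 \le r < \ell_s$. Iterating the relation $s q^{\ell_s} \equiv s \pmod{n}$ yields $s q^{a \ell_s} \equiv s \pmod{n}$, whence $s q^r \equiv s q^{a\ell_s + r} = s q^m \equiv s \pmod{n}$. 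Since $0 \le r < \ell_s$ and $\ell_s$ is the least positive integer with $s q^{\ell_s} \equiv s$, the remainder $r$ cannot be a positive integer with this property, forcing $r = 0$; thus $\ell_s \mid m$.

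An equivalent and perhaps cleaner packaging is to set $H = \{t \in \Z : s q^t \equiv s \pmod{n}\}$ and to check that $H$ is a subgroup of $(\Z, +)$: closure under addition follows by multiplying congruences, and closure under negation uses that $q$ is a unit modulo $n$ (here the standing hypothesis $\gcd(n,q) = 1$ is essential, guaranteeing that $q^{-1} \bmod n$ exists). Every subgroup of $\Z$ has the form $\ell_s \Z$ with $\ell_s$ its least positive element, and since $m \in H$ we again conclude $\ell_s \mid m$.

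I do not anticipate a genuine obstacle here, as this is the standard order-divisibility argument; the only point requiring care is to derive the divisibility rather than the weaker bound $\ell_s \le m$, which is precisely where the minimality of $\ell_s$ must be played against the remainder $r$ from the division algorithm. The hypothesis $\gcd(n,q)=1$ underlies the entire setup, ensuring that every power $q^t$ is invertible modulo $n$ so that the cosets and their sizes are well defined.
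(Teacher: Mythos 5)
Your proof is correct and complete: the identification $\ell_1=\ord_n(q)$ is immediate from the definition, and your division-algorithm argument (or equivalently the subgroup $H=\ell_s\Z$ packaging) correctly upgrades $\ell_s \le m$ to $\ell_s \mid m$ by playing minimality of $\ell_s$ against the remainder $r$. The paper itself gives no proof of this lemma, only a citation to \cite[Theorem 4.1.4]{HP03}, and your argument is precisely the standard order-divisibility proof found there, so there is nothing to reconcile.
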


\vspace{.1cm}
The following lemma was proved in \cite{AKS} and contains results in \cite{YH96} as special cases.

\begin{lemma}\label{lem-AKS}
Let $n$ be a positive integer such that $q^{\lfloor m/2 \rfloor}<n \leq q^m-1$, where
$m=\ord_n(q)$. Then the $q$-cyclotomic coset $C_s=\{sq^j \bmod{n}: 0 \leq j \leq m-1\}$ has cardinality
$m$ for all $s$ in the range $1 \leq s \leq n q^{\lceil m/2 \rceil}/(q^m-1)$. In addition, every $s$ with
$s \not\equiv 0 \pmod{q}$ in this range is a coset leader.
\end{lemma}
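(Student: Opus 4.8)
\emph{Proof proposal.} Write $u=\lceil m/2\rceil$ and $v=\lfloor m/2\rfloor$, so that $u+v=m$ and $u-v\in\{0,1\}$. I would treat the two assertions separately, since the cardinality claim is a short divisibility estimate whereas the coset-leader claim needs a real idea.

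For the cardinality I would argue by contradiction. By the preceding lemma $\ell_s\mid m$, so if $\ell_s\neq m$ then $\ell_s$ is a \emph{proper} divisor of $m$ and hence $\ell_s\le v$. Put $t=\ell_s$. By definition of $\ell_s$ we have $n\mid s(q^{t}-1)$, and since $s(q^{t}-1)>0$ this forces $s(q^{t}-1)\ge n$. On the other hand, using $t\le v$ (so $q^{t}-1\le q^{v}-1$) together with the hypothesis $s\le nq^{u}/(q^m-1)$, I would bound
\[
s(q^{t}-1)\le \frac{nq^{u}(q^{v}-1)}{q^m-1}=\frac{n(q^{m}-q^{u})}{q^m-1}<n,
\]
the last inequality because $q^{u}>1$. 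This contradicts $s(q^{t}-1)\ge n$, so $\ell_s=m$.

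For the coset-leader assertion I would first rescale the hypothesis. Setting $L=(q^m-1)/n\ge 1$ (an integer, since $n\mid q^m-1$), the bound $s\le nq^{u}/(q^m-1)$ is exactly $sL\le q^{u}$. Now suppose $s$ is \emph{not} the leader and let $r=\min C_s<s$ be the leader; by the first part $|C_s|=m$, so $s\equiv rq^{a}\pmod n$ and, since $sq^{m-a}\equiv rq^{m}\equiv r$, also $r\equiv sq^{b}\pmod n$ with $b=m-a$ and $1\le a,b\le m-1$. Write $rq^{a}=kn+s$ and $sq^{b}=k'n+r$ with $k,k'\ge 0$; here $q\nmid s$ forces $k\ge 1$ (otherwise $q\mid rq^{a}=s$), and $r<s\le sq^{b}$ forces $k'\ge 1$. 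The key step is to linearise these two relations: using $q^{m}=1+Ln$ and $a+b=m$, I would evaluate $rq^{m}$ and $sq^{m}$ in two ways to obtain
\[
sL=k'q^{a}+k,\qquad rL=kq^{b}+k'.
\]
Feeding in $sL\le q^{u}$ and $rL<sL\le q^{u}$, and using $k,k'\ge 1$, then gives $q^{a}\le k'q^{a}<q^{u}$ and $q^{b}\le kq^{b}<q^{u}$, i.e. $a\le u-1$ and $b\le u-1$. Hence $m=a+b\le 2u-2$, contradicting $m=u+v\ge 2u-1$ (as $u-v\le 1$). Therefore $s$ is the leader.

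The main obstacle is precisely this last step: the crude size bounds $r,s<q^{u}$ and $n>q^{v}$ are individually too weak to close the argument. What makes everything cancel is converting the two multiplicative congruences into the clean linear identities above via $q^{m}=1+Ln$, and then exploiting the \emph{rescaled} bound $sL\le q^{u}$ rather than $s\le q^{u}$, so that $a$ and $b$ are each forced below $u$ and their sum cannot reach $m$.
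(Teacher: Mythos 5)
Your proof is correct in both halves; I could not find a gap. One point of context before the comparison: the paper does not actually prove this lemma --- it is imported verbatim from the reference [AKS] with the remark that it ``was proved in [AKS]'' --- so the fair comparison is with the cited literature and with the paper's proofs of analogous coset-leader statements. Your first half (cardinality) is the standard divisibility argument and matches what one finds there: a proper divisor $t$ of $m$ satisfies $t\le\lfloor m/2\rfloor$, and then $n\mid s(q^{t}-1)$ together with $0<s(q^{t}-1)\le n(q^{m}-q^{\lceil m/2\rceil})/(q^{m}-1)<n$ is absurd. Your second half is where you genuinely depart from the prevailing style: both [AKS] and this paper, in its kindred results (Lemma \ref{lem-antiCosets} for length $q^{\ell}+1$ and Lemma \ref{lem-march20} for length $(q^{m}-1)/(q-1)$), write the candidate leader in its $q$-adic expansion and verify $aq^{j}\bmod n>a$ by a long case analysis over ranges of $j$. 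Your rescaling $L=(q^{m}-1)/n$ and the two identities $sL=k'q^{a}+k$ and $rL=kq^{b}+k'$ (obtained by multiplying $rq^{a}=kn+s$ by $q^{b}$ and $sq^{b}=k'n+r$ by $q^{a}$, then using $q^{m}=1+Ln$ and $a+b=m$) replace all of that digit bookkeeping: since $k,k'\ge 1$ (exactly where $q\nmid s$ and $r<s$ enter) and $sL,rL\le q^{\lceil m/2\rceil}$, you force $a,b\le\lceil m/2\rceil-1$, hence $a+b\le 2\lceil m/2\rceil-2<m$, contradicting $a+b=m$. This is shorter, uniform in $n$, and makes transparent where the hypothesis $s\le nq^{\lceil m/2\rceil}/(q^{m}-1)$ is used. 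Two harmless caveats: your argument tacitly uses $s<n$ (so that the exponent $b$ is nonzero), which holds automatically for $m\ge 2$ because then $nq^{\lceil m/2\rceil}/(q^{m}-1)<n$; and the degenerate case $m=1$ is trivial since all cosets are singletons. Neither affects correctness.
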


Later in this paper, we will need the following fundamental result on elementary number theory.

\begin{lemma}\label{lem-rgcd1}
Let $h \geq 1$ and let $a>1$ be an integer. Then
\begin{eqnarray*}
\gcd(a^\ell+1, a^h-1)=\left\{
\begin{array}{ll}
1 & \mbox{if $\frac{h}{\gcd(\ell, h)}$ is odd and $a$ is even,} \\
2 & \mbox{if $\frac{h}{\gcd(\ell, h)}$ is odd and $a$ is odd,} \\
a^{\gcd(\ell, h)}+1 & \mbox{if $\frac{h}{\gcd(\ell, h)}$ is even.}
\end{array}
\right.
\end{eqnarray*}
\end{lemma}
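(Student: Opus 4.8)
The plan is to reduce everything to the multiplicative order of $a$ modulo the unknown gcd. Write $d=\gcd(\ell,h)$ and $\ell=d\ell'$, $h=dh'$ with $\gcd(\ell',h')=1$, so that $h/\gcd(\ell,h)=h'$ and the three cases of the statement read: $h'$ even; $h'$ odd with $a$ even; $h'$ odd with $a$ odd. Let $g=\gcd(a^\ell+1,a^h-1)$. The first observation is that $a$ is invertible modulo $g$, since $g\mid a^h-1$ forces $a\cdot a^{h-1}\equiv 1\pmod g$; hence the multiplicative order $t=\ord_g(a)$ is well defined. From $g\mid a^\ell+1$ we get $a^{2\ell}\equiv 1\pmod g$, so $t\mid 2\ell$, and from $g\mid a^h-1$ we get $t\mid h$; therefore $t\mid\gcd(2\ell,h)$ and $a^{\gcd(2\ell,h)}\equiv 1\pmod g$. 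A short parity computation gives $\gcd(2\ell,h)=2d$ when $h'$ is even and $\gcd(2\ell,h)=d$ when $h'$ is odd.

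In the case $h'$ even (so $\ell'$ is odd, as $\gcd(\ell',h')=1$), I would first check that $a^d+1$ is a common divisor. Indeed, $\ell'$ odd gives $a^d+1\mid(a^d)^{\ell'}+1=a^\ell+1$, and $2d\mid h$ gives $a^d+1\mid a^{2d}-1\mid a^h-1$; hence $a^d+1\mid g$. For the reverse divisibility I would use the order relation $a^{2d}\equiv 1\pmod g$ from the previous paragraph: since $\ell'$ is odd, $a^\ell=(a^d)^{\ell'}\equiv a^d\pmod g$, and as $a^\ell\equiv -1\pmod g$ this yields $a^d\equiv -1\pmod g$, i.e.\ $g\mid a^d+1$. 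Combining the two divisibilities gives $g=a^d+1$, the third branch of the statement.

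In the case $h'$ odd, the relation $a^{\gcd(2\ell,h)}\equiv 1\pmod g$ becomes $a^d\equiv 1\pmod g$, so $a^\ell=(a^d)^{\ell'}\equiv 1\pmod g$; together with $a^\ell\equiv -1\pmod g$ this forces $2\equiv 0\pmod g$, i.e.\ $g\mid 2$. The parity of $a$ then selects the value: if $a$ is even then $a^\ell+1$ is odd, so $g$ is odd and $g=1$; if $a$ is odd then $a^\ell+1$ and $a^h-1$ are both even, so $2\mid g$ and $g=2$. This settles the first two branches.

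The step I expect to require the most care is the order argument itself, precisely because $g$ need not be prime: one cannot simply pass from $a^{2d}\equiv 1\pmod g$ to $a^d\equiv\pm1\pmod g$. The clean way around this is the observation that $a$ is a unit modulo $g$, which makes $t=\ord_g(a)$ well defined and lets $t\mid 2\ell$ and $t\mid h$ combine into $t\mid\gcd(2\ell,h)$; everything else is then a bookkeeping of the parities of $\ell'$ and $h'$.
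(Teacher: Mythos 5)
Your proof is correct and complete. Note, however, that the paper itself offers no proof of this lemma at all: it is stated as a ``fundamental result on elementary number theory'' and simply used later, so there is no argument in the paper to compare yours against. Your order-theoretic route is the standard one for this identity, and you handle the one genuinely delicate point correctly: since $g=\gcd(a^\ell+1,a^h-1)$ need not be prime, one cannot pass from $a^{2d}\equiv 1 \pmod{g}$ to $a^{d}\equiv \pm 1\pmod{g}$; instead you derive $a^{d}\equiv a^{\ell}\equiv -1\pmod{g}$ (when $h/\gcd(\ell,h)$ is even, using that $\ell/\gcd(\ell,h)$ is odd) and $a^{d}\equiv 1\pmod{g}$ (when $h/\gcd(\ell,h)$ is odd) directly from $t\mid\gcd(2\ell,h)$, which is exactly what is needed. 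The parity computation $\gcd(2\ell,h)=2\gcd(\ell,h)$ or $\gcd(\ell,h)$ according to the parity of $h/\gcd(\ell,h)$ is also right, and the two small cases $g\mid 2$ are settled correctly by the parity of $a$. In short: the proposal supplies a sound, self-contained proof of a statement the paper leaves unproved.
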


\section{Characterisations of LCD cyclic codes over finite fields}

Let $f(x)=f_{h}x^h+f_{h-1}x^{h-1}+ \cdots + f_1x+f_0$ be a polynomial over $\gf(q)$ with $f_h \ne 0$
and $f_0 \ne 0$. The reciprocal $f^*(x)$ of $f(x)$ is defined by
$$
f^*(x)=f_0^{-1}x^{h}f(x^{-1}).
$$
A polynomial is self-reciprocal if it coincides with its reciprocal. 

A code $\C$ is called {\em reversible} if $(c_0, c_1, \ldots, c_{n-1}) \in \C$ implies that
$(c_{n-1}, c_{n-2}, \ldots, c_{0})$ $\in \C$.
The conclusions of the following theorem are known in the literature (see \cite{YM94} and \cite[p. 206]{MS77}), and are easy to prove. We will employ
some of them later.

\begin{theorem}\label{thm-ReversibleCyclicCodes} {\rm\cite{YM94},\cite[p. 206]{MS77}}
Let $\C$ be a cyclic code of length $n$ over $\gf(q)$ with generator polynomial $g(x)$. Then the following statements are equivalent.
\begin{itemize}
\item $\C$ is an LCD code.
\item $g$ is self-reciprocal.
\item $\beta^{-1}$ is a root of $g$ for every root $\beta$ of $g(x)$ over the splitting field of $g(x)$.
\end{itemize}
Furthermore, if $-1$ is a power of $q$ mod $n$, then every cyclic code over $\gf(q)$ of length $n$ is reversible.
\end{theorem}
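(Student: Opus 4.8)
The plan is to reduce all three bullets, as well as the final assertion, to a single set-theoretic condition on the defining set of $\C$. Fix the primitive $n$-th root of unity $\beta$ introduced above and let $T=\{i\in\Z_n : g(\beta^i)=0\}$ be the defining set of $\C$, so that $g(x)=\prod_{i\in T}(x-\beta^i)$ and $T$ is a union of $q$-cyclotomic cosets modulo $n$. Because $\gcd(n,q)=1$, the polynomial $x^n-1$ has distinct nonzero roots, hence $g(0)\neq 0$ and $g$ has no repeated roots, so the reciprocal $g^*$ is well defined and monic of the same degree as $g$. The roots of $g^*$ are exactly the inverses $\beta^{-i}=\beta^{\,n-i}$ of the roots of $g$, so $g^*$ has defining set $-T:=\{-i \bmod n : i\in T\}$. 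Since both $g$ and $g^*$ are monic of equal degree, $g=g^*$ holds iff they have the same roots, i.e. iff $-T=T$; this simultaneously handles the equivalence of the second and third bullets, the latter being nothing but the statement $-T\subseteq T$, which equals $-T=T$ because negation is a bijection of $\Z_n$.

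Next I would identify the defining set of $\C^\perp$. Writing $h(x)=(x^n-1)/g(x)$ for the parity-check polynomial, the standard description of the dual of a cyclic code gives $\C^\perp=\langle h^*(x)\rangle$, where $h^*$ is the reciprocal of $h$. The roots of $h$ are $\{\beta^i : i\in \Z_n\setminus T\}$, so the roots of $h^*$ are their inverses, and the defining set of $\C^\perp$ is $T^\perp=\Z_n\setminus(-T)$. Establishing this relation $\C^\perp=\langle h^*\rangle$, with the correct normalisation of the reciprocal, is the one step that requires genuine care; everything else is bookkeeping with unions, intersections and negation of subsets of $\Z_n$.

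With both defining sets in hand, the first bullet follows quickly. The intersection of two cyclic codes is generated by the least common multiple of their generator polynomials, so its defining set is the union of the two defining sets; and the zero code $\{\bzero\}$ is the cyclic code with defining set all of $\Z_n$. Hence $\C$ is LCD, i.e. $\C\cap\C^\perp=\{\bzero\}$, iff $T\cup T^\perp=\Z_n$. Substituting $T^\perp=\Z_n\setminus(-T)$ turns this into $-T\subseteq T$, which as above is the same as $-T=T$. Thus all three bullets are equivalent to the single condition $-T=T$, and therefore to one another.

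For the final assertion, suppose $-1\equiv q^t\pmod n$ for some integer $t$. Then for every $i\in\Z_n$ we have $-i\equiv i q^{t}\pmod n$, so $-i$ lies in the same $q$-cyclotomic coset as $i$; hence every coset, and therefore every defining set $T$, satisfies $-T=T$. By the equivalence just proved, every cyclic code of length $n$ over $\gf(q)$ is then LCD, equivalently has a self-reciprocal generator polynomial. Finally, since reversing a codeword corresponds to the map $c(x)\mapsto x^{\,n-1}c(x^{-1})$ on $\gf(q)[x]/(x^n-1)$, the reversal of $\C$ is the cyclic code generated by $g^*$, so $\C$ is reversible iff $g=g^*$, i.e. iff $g$ is self-reciprocal. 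Combining these observations shows that every cyclic code of length $n$ is reversible, as claimed.
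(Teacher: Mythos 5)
Your proof is correct. Note, however, that the paper itself only proves the final assertion --- the three-bullet equivalence is cited from Yang--Massey \cite{YM94} and \cite[p. 206]{MS77} and left unproved --- and on that final assertion your argument coincides with the paper's: from $q^t \equiv -1 \pmod{n}$ one gets $-a \in C_a$ for every $a \in \Z_n$, so every defining set (equivalently, every monic divisor of $x^n-1$ that is a product of irreducible factors) is fixed by negation, and reversibility follows. What you add is a self-contained defining-set proof of the equivalences that the paper omits: identifying the defining set of $g^*$ with $-T$, that of $\C^\perp$ with $\Z_n\setminus(-T)$ via $\C^\perp=\langle h^*(x)\rangle$, and that of $\C\cap\C^\perp$ with $T\cup T^\perp$, which collapses all three bullets (and reversibility itself, via the reversal map $c(x)\mapsto x^{n-1}c(x^{-1})$) to the single condition $-T=T$. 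This buys a uniform and fully explicit treatment, including the small but necessary observation that $-T\subseteq T$ forces $-T=T$ by finiteness; its only external input is the standard fact $\C^\perp=\langle h^*(x)\rangle$, which you invoke without proof --- a reasonable level of rigor given that the paper treats the entire equivalence as known. Both routes rest on the same mechanism: $q$-cyclotomic cosets, negation on $\Z_n$, and the correspondence between monic squarefree divisors of $x^n-1$ and unions of cosets.
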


\begin{proof}
The conclusion of the last part is known \cite[p. 206]{MS77}. But we would present the following proof,
which provides hints for studying LCD cyclic codes in the next section.

Let $C_a$ denote the $q$-cyclotomic class modulo $n$ that contains $a$, where $0 \leq a \leq n-1$.
By assumption, $q^\ell \equiv -1 \pmod{n}$ for some positive integer $\ell$.
Then $a \equiv -a q^\ell \pmod{n}$. We deduce that $-a \in C_a$. Hence every irreducible factor of
$x^n-1$ is self-reciprocal. It follows that  every cyclic code over $\gf(q)$ of length $n$ is reversible.
\end{proof}

Massey showed that reversible cyclic codes are those which have self-reciprocal generator polynomials
\cite{Massey64}. It then follows from Theorem \ref{thm-ReversibleCyclicCodes} that a cyclic code is LCD if and only if it is reversible.

\section{A construction of all reversible cyclic codes over $\gf(q)$}\label{sec-allRevCodes}

The goal in this section is to give an exact count of reversible cyclic codes of length $q^m-1$ for odd primes $m$.
Recall the $q$-cyclotomic cosets $C_a$ modulo $n$ and the irreducible polynomials defined in Section
\ref{sec-qcyclotomiccosets}.
It is straightforward that $-a=n-a \in C_a$ if and only if $a(1+q^j) \equiv 0 \pmod{n}$ for some integer $j$.
The following two lemmas are straightforward.

\begin{lemma}\label{lem-selfrecip}
The irreducible polynomial $\m_a(x)$ is self-reciprocal if and only if $n-a \in C_a$.
\end{lemma}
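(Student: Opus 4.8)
The plan is to compute the reciprocal $\m_a^*(x)$ explicitly in terms of its roots and then reduce the self-reciprocity condition to a statement about the coset $C_a$. First I would record the elementary fact that taking the reciprocal inverts the roots: if $f(x)$ is a monic polynomial of degree $h$ with nonzero constant term $f_0$ and roots $r_1,\dots,r_h$ in its splitting field, then $f^*(x)=f_0^{-1}x^h f(x^{-1})$ is the monic polynomial of degree $h$ whose roots are $r_1^{-1},\dots,r_h^{-1}$; this is immediate from the definition by substituting $x=r_i^{-1}$ and checking that the leading coefficient normalises to $1$. Applying this to $\m_a(x)=\prod_{i\in C_a}(x-\beta^i)$, whose constant term $(-1)^{|C_a|}\beta^{\sum_{i\in C_a} i}$ is nonzero because $\beta$ is a unit in $\gf(q^m)$, I obtain that $\m_a^*(x)$ is the monic polynomial with root set $\{\beta^{-i}:i\in C_a\}$.

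Next I would identify this root set with the roots of another minimal polynomial. Writing $-C_a=\{-i \bmod n : i\in C_a\}$, I claim that $-C_a=C_{n-a}$. Indeed, $-C_a$ contains $n-a\equiv -a \pmod n$, and it is closed under multiplication by $q$ modulo $n$ since $q\cdot(-aq^j)=-aq^{j+1}$; hence $-C_a$ is exactly the $q$-cyclotomic coset of $n-a$. Because $\beta$ has order $n$, it follows that $\{\beta^{-i}:i\in C_a\}=\{\beta^j:j\in C_{n-a}\}$. Since both $\m_a^*$ and $\m_{n-a}$ are monic with this same root set, I conclude $\m_a^*(x)=\m_{n-a}(x)$.

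Finally I would translate self-reciprocity through this identity. The polynomial $\m_a$ is self-reciprocal precisely when $\m_a=\m_a^*=\m_{n-a}$. By the canonical factorisation (\ref{eqn-canonicalfact}), distinct $q$-cyclotomic cosets correspond to distinct irreducible factors $\m_s$ of $x^n-1$, so $\m_a=\m_{n-a}$ holds if and only if $C_a=C_{n-a}$. As cyclotomic cosets are either equal or disjoint and $n-a\in C_{n-a}$, this is in turn equivalent to $n-a\in C_a$, which is the asserted condition.

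The step most in need of care is the root-inversion fact together with the monic normalisation of $\m_a^*$, and the observation that negation commutes with the $q$-action so that $-C_a$ is again a single cyclotomic coset rather than a mere union; once these are in place, the remainder is bookkeeping using the bijection between cosets and irreducible factors of $x^n-1$.
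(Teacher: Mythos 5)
Your proof is correct and complete: the reciprocal-inverts-roots fact, the identity $-C_a = C_{n-a}$ giving $\m_a^*(x)=\m_{n-a}(x)$, and the bijection between cyclotomic cosets and irreducible factors of $x^n-1$ are exactly the right ingredients. The paper offers no proof at all (it declares the lemma ``straightforward''), and your argument is precisely the standard reasoning the authors are implicitly relying on, so there is nothing to contrast.
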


\begin{lemma}\label{lem-nov27}
The least common multiple $\lcm(\m_a(x), \m_{n-a}(x))$ is self-reciprocal for every $a \in \Z_n$.
\end{lemma}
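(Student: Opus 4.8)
The plan is to convert the self-reciprocity of $\lcm(\m_a(x),\m_{n-a}(x))$ into the inversion-invariance of its root set, and then to read off that invariance directly from the definition of the $q$-cyclotomic cosets. First I would note that, because $\gcd(n,q)=1$, the polynomial $x^n-1$ is separable; hence $\m_a(x)=\prod_{i\in C_a}(x-\beta^i)$ and $\m_{n-a}(x)=\prod_{i\in C_{n-a}}(x-\beta^i)$ are monic with simple roots, so their least common multiple is again monic, has nonzero constant term (as $\beta\neq 0$), and has root set exactly $\{\beta^i : i\in C_a\cup C_{n-a}\}$.

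Next I would invoke the equivalence recorded in the third bullet of Theorem \ref{thm-ReversibleCyclicCodes}: a monic polynomial over $\gf(q)$ with nonzero constant term is self-reciprocal precisely when its set of roots is closed under inversion $\gamma\mapsto\gamma^{-1}$. Since $\beta$ is a primitive $n$-th root of unity, inversion acts on these roots by $\beta^i\mapsto\beta^{-i}=\beta^{n-i}$, so on the indices it is the involution $i\mapsto n-i \pmod{n}$. It therefore suffices to show that $C_a\cup C_{n-a}$ is stable under $i\mapsto n-i$.

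The core computation is then immediate: if $i\in C_a$, write $i\equiv aq^j\pmod{n}$, so that $n-i\equiv -aq^j\equiv (n-a)q^j\pmod{n}$ and hence $n-i\in C_{n-a}$; by symmetry $i\mapsto n-i$ also carries $C_{n-a}$ into $C_a$. Thus the map interchanges the two cosets, $C_a\cup C_{n-a}$ is closed under it, the root set is closed under inversion, and the least common multiple is self-reciprocal.

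I expect no genuine obstacle here; the only point requiring attention is the degenerate case $C_a=C_{n-a}$ (equivalently $n-a\in C_a$), in which the least common multiple collapses to the single factor $\m_a(x)$. But that case is exactly the hypothesis of Lemma \ref{lem-selfrecip}, so self-reciprocity follows directly from that lemma and the general claim remains consistent. The inversion characterisation together with the monic normalisation of the least common multiple are what keep the argument free of coefficient bookkeeping.
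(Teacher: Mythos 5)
Your proof is correct and is precisely the ``straightforward'' argument the paper has in mind: the paper states Lemma~\ref{lem-nov27} without proof, and the surrounding text (the root-inversion criterion in Theorem~\ref{thm-ReversibleCyclicCodes} and the observation that $n-a\in C_a$ governs self-reciprocity of $\m_a(x)$) makes clear that the intended argument is exactly yours, namely that $i\mapsto n-i$ interchanges $C_a$ and $C_{n-a}$, so the root set $\{\beta^i: i\in C_a\cup C_{n-a}\}$ is closed under inversion. Your handling of the degenerate case $C_a=C_{n-a}$ via Lemma~\ref{lem-selfrecip} is also consistent with the paper.
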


By Lemma \ref{lem-selfrecip}, we have that
\begin{eqnarray*}
\lcm(\m_a(x), \m_{n-a}(x))=\left\{
\begin{array}{ll}
\m_a(x) & \mbox{ if } n-a \in C_a, \\
\m_a(x)\m_{n-a}(x) & \mbox{ otherwise.}
\end{array}
\right.
\end{eqnarray*}

Let
\begin{eqnarray*}
\Pi_{(n,q)}=\Gamma_{(n,q)} \setminus \{\max\{a, \leader(n-a)\}: a \in \Gamma_{(n,q)}, n-a \not\in C_a \},
\end{eqnarray*}
where $\leader(i)$ denotes the coset leader of $C_i$.
Then $\{C_a \cup C_{n-a}: a \in \Pi_{(q,n)}\}$ is a partition of $\Z_n$.

The following conclusion follows directly from Lemmas \ref{lem-selfrecip}, \ref{lem-nov27}, and
Theorem \ref{thm-ReversibleCyclicCodes}.

\begin{theorem}\label{thm-numb}
The total number of reversible cyclic codes over $\gf(q)$ of length $n$ is equal to $2^{|\Pi_{(q,n)}|}-1$.

Every reversible cyclic code over $\gf(q)$ of length $n$ is generated by a polynomial
$$
g(x)=\prod_{a \in S} \lcm\left(\m_a(x), \m_{n-a}(x)\right),
$$
where $S$ is a nonempty subset of $\Pi_{(q,n)}$.
\end{theorem}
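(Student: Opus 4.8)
The plan is to reduce the statement to a counting problem about index sets of cyclotomic cosets, and then to match those index sets bijectively with the subsets of $\Pi_{(q,n)}$. First I would recall from the canonical factorization (\ref{eqn-canonicalfact}) that every cyclic code of length $n$ over $\gf(q)$ has a unique generator polynomial of the form $g(x)=\prod_{a\in T}\m_a(x)$ for some $T\subseteq\Gamma_{(n,q)}$, and that by Theorem \ref{thm-ReversibleCyclicCodes} the code $\C=\langle g\rangle$ is reversible exactly when $g$ is self-reciprocal. Thus the whole problem becomes: count the sets $T$ for which $\prod_{a\in T}\m_a(x)$ is self-reciprocal, and express the corresponding $g$ through the $\lcm$-pairs.

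The crux is to translate ``self-reciprocal $g$'' into a closure condition on $T$. The roots of $\m_a(x)=\prod_{i\in C_a}(x-\beta^i)$ are inverted by taking the reciprocal, giving roots $\beta^{-i}=\beta^{n-i}$ with $i\in C_a$; since $\{n-i:i\in C_a\}=C_{n-a}$ and both polynomials are monic and irreducible, this forces $\m_a^*(x)=\m_{n-a}(x)$. Because the reciprocal is multiplicative, $g^*(x)=\prod_{a\in T}\m_{n-a}(x)$, so $g=g^*$ holds if and only if $T$ is invariant under the involution $\sigma(a)=\leader(n-a)$ on $\Gamma_{(n,q)}$. By Lemma \ref{lem-selfrecip} the fixed points of $\sigma$ are precisely the leaders $a$ with $n-a\in C_a$, and the remaining leaders fall into two-element orbits $\{a,\leader(n-a)\}$.

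Next I would identify $\Pi_{(q,n)}$ with a complete set of $\sigma$-orbit representatives. By its defining formula, $\Pi_{(q,n)}$ retains every fixed point of $\sigma$ and, from each two-element orbit (the pairs with $n-a\notin C_a$), discards the larger element $\max\{a,\leader(n-a)\}$ and keeps the smaller one; this is exactly why $\{C_a\cup C_{n-a}:a\in\Pi_{(q,n)}\}$ is the stated partition of $\Z_n$. A $\sigma$-invariant $T$ is then nothing but a union of orbits, so specifying $T$ is the same as specifying the subset $S\subseteq\Pi_{(q,n)}$ of orbits it contains. For such a choice the displayed $\lcm$-formula rewrites $\prod_{a\in T}\m_a(x)=\prod_{a\in S}\lcm(\m_a(x),\m_{n-a}(x))$, which is the asserted form of $g$; conversely each such product is self-reciprocal by Lemma \ref{lem-nov27} together with multiplicativity of the reciprocal, confirming via Theorem \ref{thm-ReversibleCyclicCodes} that it generates a reversible code. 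This establishes a bijection between reversible cyclic codes and subsets $S\subseteq\Pi_{(q,n)}$.

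Finally I would count: there are $2^{|\Pi_{(q,n)}|}$ such subsets $S$, hence $2^{|\Pi_{(q,n)}|}$ self-reciprocal generators. The empty choice $S=\emptyset$ yields the empty product $g=1$, i.e.\ the trivial full-space code $\gf(q)^n$, which is excluded here; discarding it gives the count $2^{|\Pi_{(q,n)}|}-1$ and the requirement that $S$ be nonempty, proving both parts. The main obstacle is the identity $\m_a^*(x)=\m_{n-a}(x)$ and the resulting equivalence between self-reciprocity of $g$ and $\sigma$-invariance of its index set $T$; once this is secured, the remainder is orbit bookkeeping driven by the definition of $\Pi_{(q,n)}$ and the two preceding lemmas, together with the single-case adjustment that produces the $-1$.
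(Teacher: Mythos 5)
Your proof is correct and takes essentially the same route as the paper: the paper's own proof is a one-line appeal to Lemmas \ref{lem-selfrecip} and \ref{lem-nov27} and Theorem \ref{thm-ReversibleCyclicCodes}, and your argument supplies precisely the details behind that appeal --- the identity $\m_a^*(x)=\m_{n-a}(x)$, the equivalence of self-reciprocity of $g$ with invariance of its index set under $a\mapsto\leader(n-a)$, the identification of $\Pi_{(q,n)}$ with a set of orbit representatives, and the resulting count. Your handling of the excluded case $S=\emptyset$ (the full-space code) matches the convention implicit in the theorem's statement, which the paper likewise leaves unaddressed.
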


\begin{example}
Let $(n,q)=(15,2)$. There are the following $2$-cyclotomic classes
\begin{eqnarray*}
C_0 &=& \{0\},  \\
C_1 &=&  \{ 1, 2, 4, 8 \},\\
C_3 &=&  \{ 3, 6, 9, 12 \},\\
C_5 &=&  \{ 5, 10 \}, \\
C_7 &=&  \{ 7, 11, 13, 14 \}.
\end{eqnarray*}
We have also
$$
x^{15}-1=\m_0(x)\m_1(x)\m_3(x)\m_5(x)\m_7(x),
$$
where
\begin{eqnarray*}
\m_0(x) &=& x+1, \\
\m_1(x) &=& x^4 + x + 1, \\
\m_3(x) &=& x^4 + x^3 + x^2 + x + 1, \\
\m_5(x) &=& x^2 + x + 1, \\
\m_7(x) &=& x^4 + x^3 + 1.
\end{eqnarray*}
Note that all $\m_i(x)$ are self-reciprocal except $\m_1(x)$ and $\m_7(x)$. In this case,
$$
\Gamma_{(n,q)}=\{0, 1, 3, 5, 7\}.
$$
But
$$
\Pi_{(n,q)}=\{0, 1, 3, 5\}.
$$
Hence, there are 15 reversible binary cyclic codes of length 15.
\end{example}

\begin{corollary}
Let $q$ be an even prime power and $n=q^m-1$. If $m$ is odd, then the only self-reciprocal irreducible divisor of $x^n-1$ over $\gf(q)$ is $x-1$. If $m$ is an odd prime, then the total number of reversible cyclic codes of length $n$ over $\gf(q)$ is equal to $2^{\frac{q^m+(m-1)q}{2m}}-1$.
\end{corollary}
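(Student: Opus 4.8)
The plan is to establish the two assertions in turn, using the first to drive the count in the second. Throughout, recall that for $n=q^m-1$ one has $\ord_n(q)=m$, since $m$ is the least positive integer with $q^m\equiv 1\pmod{q^m-1}$.

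For the first assertion I would argue as follows. By Lemma \ref{lem-selfrecip}, the irreducible factor $\m_a(x)$ is self-reciprocal precisely when $n-a\in C_a$, which by the observation opening this section holds if and only if $a(q^j+1)\equiv 0\pmod{n}$ for some integer $j\ge 0$. Writing $d=\gcd(a,n)$, this congruence is equivalent to $\tfrac{n}{d}\mid(q^j+1)$. Since $\tfrac{n}{d}$ also divides $n=q^m-1$, it divides $\gcd(q^j+1,\,q^m-1)$. Now I apply Lemma \ref{lem-rgcd1} with the substitutions $a\mapsto q$, $h\mapsto m$, $\ell\mapsto j$: because $m$ is odd, the quotient $m/\gcd(j,m)$ is a divisor of an odd number and hence odd, and because $q$ is even we land in the first case, giving $\gcd(q^j+1,\,q^m-1)=1$. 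Consequently $\tfrac{n}{d}=1$, forcing $d=n$ and therefore $a\equiv 0\pmod{n}$. Thus the sole self-reciprocal irreducible divisor is $\m_0(x)=x-1$.

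For the second assertion I would invoke Theorem \ref{thm-numb}, which reduces the problem to computing $|\Pi_{(n,q)}|$, after which the code count is $2^{|\Pi_{(n,q)}|}-1$. Since $\ord_n(q)=m$ is prime, the lemma on coset sizes forces every $q$-cyclotomic coset modulo $n$ to have cardinality $1$ or $m$. A coset $C_a$ is a singleton exactly when $a(q-1)\equiv 0\pmod{q^m-1}$, i.e.\ when $\tfrac{q^m-1}{q-1}\mid a$; counting the multiples of $\tfrac{q^m-1}{q-1}$ in $\Z_n$ yields exactly $q-1$ singleton cosets, so the remaining $q^m-q$ elements occupy $\tfrac{q^m-q}{m}$ cosets of size $m$, and the total number of cosets is $(q-1)+\tfrac{q^m-q}{m}$. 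Recalling that $|\Pi_{(n,q)}|$ counts each self-reciprocal coset once and each conjugate pair $\{C_a,C_{n-a}\}$ of non-self-reciprocal cosets once, and invoking the first assertion (only $C_0$ is self-reciprocal, as $m$ is odd), the remaining $(q-2)+\tfrac{q^m-q}{m}$ cosets split into $\tfrac12\bigl[(q-2)+\tfrac{q^m-q}{m}\bigr]$ conjugate pairs. Hence $|\Pi_{(n,q)}|=1+\tfrac12\bigl[(q-2)+\tfrac{q^m-q}{m}\bigr]$, which simplifies to $\tfrac{q^m+(m-1)q}{2m}$, and substituting into Theorem \ref{thm-numb} gives the stated number $2^{\frac{q^m+(m-1)q}{2m}}-1$.

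The hard part will be the first assertion, specifically the reduction of the self-reciprocity condition to a statement about $\gcd(q^j+1,\,q^m-1)$ and the clean application of Lemma \ref{lem-rgcd1}. Both hypotheses are doing real work there: the oddness of $m$ is what guarantees $m/\gcd(j,m)$ is odd, and the evenness of $q$ is what places us in the gcd-equals-$1$ case rather than the gcd-equals-$2$ case (which would let $\tfrac{n}{d}$ equal $2$ and spoil the conclusion). Once this dichotomy is secured, the combinatorial counting in the second part is routine; the only point warranting a quick check is that the number of non-self-reciprocal cosets is even, which is automatic here since conjugation $a\mapsto n-a$ acts as a fixed-point-free involution on them.
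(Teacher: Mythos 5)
Your proposal is correct and takes essentially the same route as the paper's own proof: both parts hinge on Lemma \ref{lem-rgcd1} (with $m$ odd and $q$ even giving $\gcd(q^j+1,q^m-1)=1$, hence $a=0$ is the only solution and $x-1$ the only self-reciprocal irreducible factor), and both then count the $q$-cyclotomic cosets (the $q-1$ singletons at multiples of $(q^m-1)/(q-1)$, all others of size $m$ since $m$ is prime), pair the non-self-reciprocal ones under $a\mapsto n-a$, and conclude via Theorem \ref{thm-numb}. Your write-up merely makes explicit some steps the paper leaves implicit, such as the reduction through $d=\gcd(a,n)$ and the observation that negation acts as a fixed-point-free involution on the non-self-reciprocal cosets.
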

\begin{proof}
Since $m$ is odd and $q$ is even, it then follows from Lemma \ref{lem-rgcd1} that $\gcd(q^j+1, q^m-1)=1$
for all $j$ with $0 \leq j \leq m-1$.
Hence $a(1+q^j) \equiv 0 \pmod{n}$ if and only if $a=0$, where $a \in \Z_n$. We then deduce that
the only self-reciprocal irreducible divisor of $x^n-1$ over $\gf(q)$ is $x-1$.

Since $m$ is a prime, the length of $q$-cyclotomic cosets module $n$ is either $1$ or $m$. Since $\gcd(q-1,q^m-1)=q-1$, there are exactly $q-1$ elements in $\Z_n$, i.e, $\{i\frac{q^m-1}{q-1} \mid 0 \le i \le q-2\}$, such that the corresponding $q$-cyclotomic cosets have length $1$. Note that $0 \in \Z_n$ corresponds to $x-1$, which is the only self-reciprocal irreducible divisor of $x^n-1$ over $\gf(q)$. Thus, we have
$$
|\Pi_{(q,n)}|=\frac{q^m-1-(q-1)}{2m}+\frac{q-2}{2}+1=\frac{q^m+(m-1)q}{2m}.
$$
Hence, in this case, the total number of reversible cyclic codes of length $n$ over $\gf(q)$ is $2^{\frac{q^m+(m-1)q}{2m}}-1$.
\end{proof}

\begin{corollary}
Let $q$ be an odd prime power and $n=q^m-1$. If $m$ is odd, then the only self-reciprocal irreducible divisor of $x^n-1$ over $\gf(q)$ are $x-1$ and $x+1$. If $m$ is an odd prime, then the total number of reversible cyclic codes of length $n$ over $\gf(q)$ is equal to $2^{\frac{q^m+(m-1)q+m}{2m}}-1$.
\end{corollary}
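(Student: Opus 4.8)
The plan is to follow the template of the preceding corollary, with the single structural change that in the odd-$q$ case there are two self-reciprocal linear factors rather than one. First I would settle the first assertion using Lemma \ref{lem-selfrecip}: the factor $\m_a(x)$ is self-reciprocal precisely when $n-a \in C_a$, which (as observed just before that lemma) is equivalent to $a(q^j+1) \equiv 0 \pmod{n}$ for some integer $j \ge 0$. Since both $q$ and $m$ are odd, for every $j$ the quotient $m/\gcd(j,m)$ divides the odd number $m$ and is therefore odd; hence the ``odd, $a$ odd'' case of Lemma \ref{lem-rgcd1} gives $\gcd(q^j+1,\,q^m-1)=2$ for all $j$. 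Because $q$ is odd, $n=q^m-1$ is even, and from $n \mid a(q^j+1)$ together with $\gcd(q^j+1,n)=2$ one deduces $\tfrac{n}{2}\mid a$, so the only admissible $a$ in $\{0,\dots,n-1\}$ are $a=0$ and $a=n/2$.

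Next I would identify the two corresponding factors. The value $a=0$ yields $\m_0(x)=x-1$. For $a=n/2$, oddness of $q$ gives $\tfrac{n}{2}\,q \equiv \tfrac{n}{2} \pmod{n}$, so $C_{n/2}=\{n/2\}$ is a singleton and $\m_{n/2}(x)$ is linear; since $\beta^{n/2}=-1$, this factor is $x+1$. This establishes that $x-1$ and $x+1$ are the only self-reciprocal irreducible divisors of $x^n-1$ when $m$ is odd.

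For the second assertion I would count $|\Pi_{(q,n)}|$ by the same bookkeeping as in the even-$q$ corollary, invoking Theorem \ref{thm-numb}. Since $m$ is prime, every $q$-cyclotomic coset has size $1$ or $m$ (its size divides $m=\ell_1$). The size-$1$ cosets are those $s$ with $s(q-1)\equiv 0 \pmod{n}$, namely $\{i\tfrac{q^m-1}{q-1}: 0\le i\le q-2\}$, of which there are exactly $q-1$; the remaining $(q^m-q)/m$ cosets have size $m$. By the first part, exactly two cosets are self-reciprocal, namely $C_0$ and $C_{n/2}$, both of size $1$, while every other coset $C_a$ is paired with a distinct coset $C_{n-a}$. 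Each self-reciprocal coset and each such pair contributes one element to $\Pi_{(q,n)}$, so
$$
|\Pi_{(q,n)}| = 2 + \frac{q-3}{2} + \frac{q^m-q}{2m} = \frac{q^m+(m-1)q+m}{2m},
$$
and the total count $2^{|\Pi_{(q,n)}|}-1$ then follows from Theorem \ref{thm-numb}.

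The main obstacle is the number-theoretic step in the first paragraph: one must apply Lemma \ref{lem-rgcd1} uniformly across all $j$ and then argue carefully that $\gcd(q^j+1,n)=2$ forces $a\in\{0,n/2\}$ (using that $\gcd(n/2,(q^j+1)/2)=1$). Once this is in hand the enumeration is a routine variant of the even-$q$ case, the only genuine change being that two singleton cosets rather than one are self-reciprocal, which shifts the additive constant from $1$ to $2$ and replaces $\tfrac{q-2}{2}$ by $\tfrac{q-3}{2}$.
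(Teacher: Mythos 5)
Your proof is correct and follows essentially the same route as the paper's: Lemma \ref{lem-rgcd1} gives $\gcd(q^j+1,q^m-1)=2$, forcing $a\in\{0,n/2\}$ as the only self-reciprocal cosets, and then the count $|\Pi_{(q,n)}|=\frac{q^m-q}{2m}+\frac{q-3}{2}+2$ combined with Theorem \ref{thm-numb} yields the formula. The only difference is that you spell out details the paper leaves implicit (the deduction $\tfrac{n}{2}\mid a$ via $\gcd(n/2,(q^j+1)/2)=1$, and the identification $\m_{n/2}(x)=x+1$), which is a welcome addition rather than a deviation.
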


\begin{proof}
Since $m$ is odd and $q$ is odd, by Lemma \ref{lem-rgcd1} we have $\gcd(q^j+1, q^m-1)=2$ for all $j$ with $0 \leq j \leq m-1$.
Hence $a(1+q^j) \equiv 0 \pmod{n}$ if and only if $a=0$ or $a=n/2$, where $a \in \Z_n$. We then deduce that
the only self-reciprocal irreducible divisors of $x^n-1$ over $\gf(q)$ are $x \pm 1$.

Since $m$ is a prime, the length of $q$-cyclotomic cosets module $n$ is either $1$ or $m$. Since $\gcd(q-1,q^m-1)=q-1$, there are exactly $q-1$ elements in $\Z_n$, i.e, $\{i\frac{q^m-1}{q-1} \mid 0 \le i \le q-2\}$, such that the corresponding $q$-cyclotomic cosets have length $1$. Note that $0 \in \Z_n$ and $\frac{q^m-1}{2} \in \Z_n$ correspond to $x-1$ and $x+1$, which are the only self-reciprocal irreducible divisors of $x^n-1$ over $\gf(q)$. Thus, we have
$$
|\Pi_{(q,n)}|=\frac{q^m-1-(q-1)}{2m}+\frac{q-3}{2}+2=\frac{q^m+(m-1)q+m}{2m}.
$$
Hence, in this case, the total number of reversible cyclic codes of length $n$ over $\gf(q)$ is $2^{\frac{q^m+(m-1)q+m}{2m}}-1$.
\end{proof}

\section{BCH codes}

Let $n$ be a positive integer, and let $m=\ord_n(q)$. Let $\alpha$ be a
generator of $\gf(q^m)^*$, and put $\beta=\alpha^{(q^m-1)/n}$. Then $\beta$ is
a primitive $n$-th root of unity.

For any $i$ with $0 \leq i \leq n-1$, let $\m_i(x)$ denote the minimal polynomial of $\beta^i$
over $\gf(q)$. For any $2 \leq \delta \leq n$, define
\begin{eqnarray}\label{eqn-BCHgeneratorPolyn}
g_{(q,n,\delta,b)}(x)=\lcm(\m_{b}(x), \m_{b+1}(x), \cdots, \m_{b+\delta-2}(x)),
\end{eqnarray}
where $b$ is an integer, $\lcm$ denotes the least common multiple of these minimal polynomials, and the addition
in the subscript $b+i$ of $\m_{b+i}(x)$ always means the integer addition modulo $n$.
Let $\C_{(q, n, \delta,b)}$ denote the cyclic code of length $n$ with generator
polynomial $g_{(q, n,\delta, b)}(x)$. The $\delta$ is called a \emph{designed distance} of $\C_{(q, n, \delta,b)}$. The \emph{Bose distance}, denoted $\delta_B$, of a BCH code is the largest designed distance of the code. The BCH bound says that
$$
d\geq \delta_B \geq \delta
$$
for the code $\C_{(q, n, \delta,b)}$. Thus, determining the Bose distance may improve the lower bound on
the minimum distance of $\C_{(q, n, \delta,b)}$.

When $b=1$, the set $\C_{(q, n, \delta, b)}$ is called a \emph{narrow-sense BCH code} with \emph{designed distance} $\delta$. If $n=q^m-1$,
$\C_{(q, n, \delta, b)}$ is called  a \emph{primitive BCH code}.

The following theorem was proved in \cite{AKS} and contains results in \cite{YH96} as special cases.

\begin{theorem}\label{thm-AKS}
Let $n$ be a positive integer such that $q^{\lfloor m/2 \rfloor}<n \leq q^m-1$, where
$m=\ord_n(q)$. Then the narrow-sense BCH code $\C_{(q, n, \delta, 1)}$ of length $n$ and designed distance
$\delta$, where
$2 \leq \delta \leq \min\{\lfloor n q^{\lceil m/2 \rceil}/(q^m-1) \rfloor, n\}$, has dimension
$$
k=n-m\lceil (\delta -1)(1-1/q) \rceil.
$$
\end{theorem}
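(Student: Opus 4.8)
The plan is to reduce the dimension computation to a counting problem about $q$-cyclotomic cosets and then invoke Lemma~\ref{lem-AKS}. Since $\C_{(q,n,\delta,1)}$ has generator polynomial $g=\lcm(\m_1,\ldots,\m_{\delta-1})$ and $\deg\m_s=|C_s|$, the dimension satisfies $k=n-\deg g=n-\left|\bigcup_{s=1}^{\delta-1}C_s\right|$. Thus it suffices to evaluate the size of the union $\bigcup_{s=1}^{\delta-1}C_s$.

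First I would verify that the whole index range $\{1,\ldots,\delta-1\}$ lies inside the interval covered by Lemma~\ref{lem-AKS}. The hypothesis $\delta\le\lfloor nq^{\lceil m/2\rceil}/(q^m-1)\rfloor$ gives $\delta-1\le nq^{\lceil m/2\rceil}/(q^m-1)$, so every $s$ with $1\le s\le\delta-1$ satisfies the bound in that lemma. Consequently each coset $C_s$ in this range has cardinality exactly $m$, and every such $s$ with $s\not\equiv0\pmod q$ is a coset leader.

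The key step is to show that
$$
\bigcup_{s=1}^{\delta-1}C_s=\bigsqcup_{\substack{1\le s\le\delta-1\\ q\nmid s}}C_s,
$$
a disjoint union of cosets each of size $m$. Distinctness of the cosets on the right is immediate, since different coset leaders yield disjoint cosets. For the equality I would argue that any $s$ in the range with $q\mid s$ contributes nothing new: writing $s=q^{j}t$ with $q\nmid t$, one has $t=s/q^{j}<s\le\delta-1$, so $t$ lies in the same range, is a coset leader by Lemma~\ref{lem-AKS}, and $s\in C_t$ because $C_t=\{t,tq,\ldots,tq^{m-1}\}\bmod n$. Hence $C_s=C_t$ already appears on the right-hand side. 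This yields
$$
\left|\bigcup_{s=1}^{\delta-1}C_s\right|=m\cdot\bigl|\{s:1\le s\le\delta-1,\ q\nmid s\}\bigr|=m\Bigl((\delta-1)-\lfloor(\delta-1)/q\rfloor\Bigr).
$$

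Finally I would convert the floor expression into the ceiling form in the statement, using the elementary identity $N-\lfloor N/q\rfloor=\lceil N(1-1/q)\rceil$ for a nonnegative integer $N$ (with $N=\delta-1$), which follows from $-\lfloor x\rfloor=\lceil-x\rceil$ together with the fact that adding an integer commutes with $\lceil\cdot\rceil$. Substituting back gives $\deg g=m\lceil(\delta-1)(1-1/q)\rceil$ and hence $k=n-m\lceil(\delta-1)(1-1/q)\rceil$, as claimed. I expect the main obstacle to be the middle step — rigorously justifying that the multiples of $q$ in $\{1,\ldots,\delta-1\}$ are redundant and that the remaining cosets are pairwise disjoint of full size $m$ — since this is exactly where Lemma~\ref{lem-AKS} (full coset size and the coset-leader characterisation) is indispensable; everything else is bookkeeping.
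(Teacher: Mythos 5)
Your proof is correct. The paper itself offers no proof of Theorem \ref{thm-AKS} (it is quoted from \cite{AKS}, just like Lemma \ref{lem-AKS}), but your derivation is exactly the intended coset-counting route: reduce the dimension to the size of $\bigcup_{s=1}^{\delta-1}C_s$, use Lemma \ref{lem-AKS} to see that the cosets of the $s$ with $q\nmid s$ are pairwise disjoint of size $m$ while the multiples of $q$ are absorbed (via $s=q^jt$, $t$ a coset leader in the same range), and finish with the identity $N-\lfloor N/q\rfloor=\lceil N(1-1/q)\rceil$.
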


Although BCH codes are not good asymptotically, they are among the best linear codes when the length
of the codes is not very large \cite[Appendix A]{Dingbk15}. So far, we have very limited knowledge of
BCH codes, as the dimension and minimum distance of BCH codes are in general open, in spite of some
recent progress \cite{Ding,DDZ15}. It is surprising that only two papers on BCH codes of length
$q^\ell +1$ have been published in the literature.

Theorem \ref{thm-AKS} gives indeed the dimension of some BCH codes, but has the following limitations:
\begin{itemize}
\item It applies only to narrow-sense BCH codes with small designed distances. Note that most BCH codes
      are not narrow-sense codes.
\item It is useful only when $n$ is close to $q^m-1$. For example, it is not useful at all when
      $n=q^{\lfloor m/2 \rfloor}+1$.
\end{itemize}

The following three theorems follow directly from Theorem \ref{thm-ReversibleCyclicCodes} and
the definition of BCH codes, and can be viewed as corollaries
of Theorem \ref{thm-ReversibleCyclicCodes}. We will make use of them directly later.

\begin{theorem}\label{thm-revBCHcodes1}
The BCH code $\C_{(q, n, \delta,b)}$ is reversible when $b=-t$ and the designed distance is $\delta=2t+2$
for any nonnegative integer $t$.
\end{theorem}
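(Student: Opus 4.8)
The plan is to verify the reversibility criterion from Theorem \ref{thm-ReversibleCyclicCodes} directly, namely that the generator polynomial $g_{(q,n,\delta,b)}(x)$ is self-reciprocal, equivalently that its defining set of roots is closed under the map $\beta^i \mapsto \beta^{-i}$. Recall that by definition
\begin{eqnarray*}
g_{(q,n,\delta,b)}(x)=\lcm(\m_{b}(x), \m_{b+1}(x), \cdots, \m_{b+\delta-2}(x)),
\end{eqnarray*}
so its roots (with $\beta$ a primitive $n$-th root of unity) are exactly the $\beta^j$ with $j$ ranging over the union of the $q$-cyclotomic cosets $C_b, C_{b+1}, \ldots, C_{b+\delta-2}$. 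Since each $\m_j(x)$ is itself the product over a full cyclotomic coset and a coset $C_j$ is self-reciprocal precisely when $-j \in C_j$ (Lemma \ref{lem-selfrecip}), the clean way to proceed is at the level of the index set rather than the polynomials.

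First I would substitute the specific values $b=-t$ and $\delta=2t+2$, so that the consecutive string of exponents is
$$
b, b+1, \ldots, b+\delta-2 \;=\; -t, -t+1, \ldots, t-1, t,
$$
a block of $2t+1$ consecutive integers symmetric about $0$ modulo $n$. The key observation is that this index set $\{-t, -t+1, \ldots, t\}$ is manifestly closed under negation $j \mapsto -j$: for each $j$ in the block, $-j$ lies in the same block. Second, I would note that the defining set of $g$ is the closure of this block under multiplication by $q$ modulo $n$ (taking the full cyclotomic cosets), and that the negation map commutes with multiplication by $q$, since $-(qj) = q(-j) \pmod n$. Hence the union of cosets $\bigcup_{j=-t}^{t} C_j$ is also closed under negation. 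Third, I would invoke Lemma \ref{lem-selfrecip} together with Lemma \ref{lem-nov27}: because the root set is stable under $i \mapsto -i$, for every root $\beta^i$ of $g$ the element $\beta^{-i}$ is again a root, which is exactly the third equivalent condition in Theorem \ref{thm-ReversibleCyclicCodes}. Therefore $g$ is self-reciprocal and $\C_{(q,n,\delta,b)}$ is an LCD code, hence reversible.

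I do not expect a genuine obstacle here, as the statement is explicitly flagged as a corollary of Theorem \ref{thm-ReversibleCyclicCodes}; the only point requiring a small amount of care is the reduction modulo $n$. One should confirm that the symmetric block $\{-t,\ldots,t\}$, after reduction into $\Z_n = \{0,1,\ldots,n-1\}$, remains closed under the negation $i \mapsto n-i$; this is immediate because negation modulo $n$ is an involution of $\Z_n$ that fixes the block setwise, so reducing representatives does not disturb the symmetry. With that checked, the argument is complete.
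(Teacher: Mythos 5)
Your proof is correct and is essentially the argument the paper intends: the paper gives no explicit proof, stating only that the result ``follows directly from Theorem~\ref{thm-ReversibleCyclicCodes} and the definition of BCH codes,'' and your verification---that with $b=-t$ and $\delta=2t+2$ the defining exponents $\{-t,\ldots,t\}$ form a negation-symmetric block, that negation commutes with multiplication by $q$ so the full union of cosets is closed under $i\mapsto -i$, and that Theorem~\ref{thm-ReversibleCyclicCodes} then applies---is exactly the reasoning being compressed there. The only cosmetic remark is that your appeal to Lemmas~\ref{lem-selfrecip} and~\ref{lem-nov27} is unnecessary, since the negation-closure of the root set already gives the third condition of Theorem~\ref{thm-ReversibleCyclicCodes} directly.
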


\begin{theorem}\label{thm-revBCHcodes2}
The BCH code $\C_{(q, n, \delta,b)}$ is reversible when $n$ is odd, $b=(n-t)/2$ and the designed distance is $\delta=t+2$
for any odd integer $t$ with $1 \leq t \leq n-2$.
\end{theorem}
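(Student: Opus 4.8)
The plan is to invoke the third equivalent condition in Theorem \ref{thm-ReversibleCyclicCodes}: the code $\C_{(q,n,\delta,b)}$ is reversible precisely when the set of roots of its generator polynomial is closed under inversion, equivalently when its \emph{defining set} --- the set of exponents $i$ with $\beta^i$ a root --- is closed under the map $i \mapsto -i \pmod n$. Since
$$g_{(q,n,\delta,b)}(x) = \lcm(\m_b(x), \m_{b+1}(x), \ldots, \m_{b+\delta-2}(x)),$$
this defining set is $T = \bigcup_{i \in R} C_i$, the union of the $q$-cyclotomic cosets seeded by the consecutive run $R = \{b, b+1, \ldots, b+\delta-2\}$. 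So it suffices to prove $-T \equiv T \pmod n$.

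First I would substitute $b=(n-t)/2$ and $\delta-2 = t$ to see that
$$R = \left\{\frac{n-t}{2}, \frac{n-t}{2}+1, \ldots, \frac{n+t}{2}\right\}.$$
Because $n$ and $t$ are both odd, $n-t$ is even and $(n-t)/2$ is a genuine integer; and because $1 \le t \le n-2$, every element of $R$ lies in $\{1, 2, \ldots, n-1\}$, so no reduction modulo $n$ occurs and in particular $0 \notin T$. The set $R$ is a block of $t+1$ consecutive integers centred at $n/2$, hence symmetric under the reflection $i \mapsto n-i$: writing a typical element as $(n-t)/2 + j$ with $0 \le j \le t$, its image is $(n+t)/2 - j = (n-t)/2 + (t-j) \in R$, since $t-j$ again ranges over $\{0, \ldots, t\}$. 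Thus $R$ is closed under $i \mapsto -i \pmod n$.

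Next I would lift this closure from $R$ to the full defining set. The key elementary fact is that negation commutes with the $q$-cyclotomic coset structure: if $a \in C_i$, say $a \equiv i q^k \pmod n$, then $-a \equiv (-i) q^k \pmod n$, whence $-a \in C_{-i}$ and $C_{-i} = -C_i$. Consequently
$$-T = \bigcup_{i \in R} (-C_i) = \bigcup_{i \in R} C_{-i} = \bigcup_{i' \in -R} C_{i'} = T,$$
where the last equality uses $-R \equiv R \pmod n$. Hence $T$ is closed under inversion, the generator polynomial is self-reciprocal, and by Theorem \ref{thm-ReversibleCyclicCodes} the code $\C_{(q,n,\delta,b)}$ is reversible.

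There is no serious obstacle here; once the reversibility criterion of Theorem \ref{thm-ReversibleCyclicCodes} is available the argument is purely combinatorial. The only points demanding care are the bookkeeping ones --- checking that the parity hypotheses make $b$ an integer, and that the bound $t \le n-2$ keeps all of $R$ inside $\{1, \ldots, n-1\}$ so that $i \mapsto n-i$ is exactly negation on $\Z_n$ with no wraparound. This parallels the simpler Theorem \ref{thm-revBCHcodes1}, whose run $\{-t, \ldots, t\}$ is manifestly symmetric about $0$.
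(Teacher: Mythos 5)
Your proof is correct and matches the paper's intent: the paper states this result as an immediate corollary of Theorem \ref{thm-ReversibleCyclicCodes} and the definition of BCH codes, giving no details, and your argument (the run $\{(n-t)/2,\ldots,(n+t)/2\}$ is symmetric under $i\mapsto n-i$, negation permutes $q$-cyclotomic cosets, hence the defining set is closed under inversion and $g$ is self-reciprocal) is precisely the omitted verification. The bookkeeping you include --- integrality of $b$ from the parity of $n$ and $t$, and $R\subseteq\{1,\ldots,n-1\}$ from $t\le n-2$ --- is exactly what makes the "follows directly" claim rigorous.
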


\begin{theorem}\label{thm-revBCHcodes3}
The BCH code $\C_{(q, n, \delta,b)}$ is reversible when $n$ is even, $b=(n-2t)/2$ and the designed distance is
$\delta=2t+2$
for any integer $t$ with $0 \leq t \leq n/2$.
\end{theorem}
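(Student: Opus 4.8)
The plan is to invoke the characterisation of reversibility from Theorem~\ref{thm-ReversibleCyclicCodes}: since $\C_{(q,n,\delta,b)}$ is cyclic with generator polynomial $g_{(q,n,\delta,b)}(x)$, it is an LCD (equivalently reversible) code if and only if $\beta^{-1}$ is a root of $g_{(q,n,\delta,b)}$ whenever $\beta$ is. Because this generator is a least common multiple of minimal polynomials $\m_i(x)$, and $\m_i(x)=\prod_{j\in C_i}(x-\beta^j)$, its root set is $\{\beta^i : i \in T\}$, where the defining set $T = \bigcup_{i=b}^{b+\delta-2} C_i \subseteq \Z_n$ is a union of $q$-cyclotomic cosets. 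Thus it suffices to prove that $T$ is closed under negation modulo $n$, i.e.\ $-T = T$.

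First I would write $n = 2M$ with $M = n/2$, so that the hypotheses $b=(n-2t)/2$ and $\delta=2t+2$ give $b = M-t$ and $b+\delta-2 = M+t$. Hence the consecutive exponents that define the code form the run $R = \{M-t, M-t+1, \ldots, M+t\}$, which is symmetric about $M$. Since $T$ is exactly the union of the cosets generated by the integers in $R$, controlling $R$ controls $T$.

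The key step is the observation that negation modulo $n$ sends $M+j$ to $n-(M+j) = M-j$ for each $j$, so the involution $j \mapsto -j$ permutes $R$; that is, $-R = R$. Combined with the elementary fact that negation commutes with multiplication by $q$ modulo $n$ — so that $-C_i = C_{-i}$ for every $i \in \Z_n$ — this yields
$$
-T = \bigcup_{i \in R} C_{-i} = \bigcup_{i \in -R} C_i = \bigcup_{i \in R} C_i = T.
$$
Therefore $T$ is closed under negation, and by Theorem~\ref{thm-ReversibleCyclicCodes} the code $\C_{(q,n,\delta,b)}$ is reversible.

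I do not expect a serious obstacle; the whole argument reduces to the symmetry $-R=R$ of the run of consecutive exponents about $n/2$, a feature forced by the choice $b=(n-2t)/2$. The only points meriting a little care are the boundary values of $t$, which I would verify directly: $t=0$ leaves the single exponent $M=n/2$, whose coset satisfies $-C_{n/2}=C_{n/2}$, and the extreme $t=n/2$ collapses $T$ to all of $\Z_n$, with $-\Z_n=\Z_n$; both degenerate cases are thus covered by the same closure property.
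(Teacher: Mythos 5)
Your proposal is correct and follows exactly the route the paper intends: the paper states this theorem as a direct corollary of Theorem~\ref{thm-ReversibleCyclicCodes} and the definition of BCH codes, and your argument---that the defining exponents $\{n/2-t,\ldots,n/2+t\}$ are symmetric about $n/2$, so the defining set $T=\bigcup_i C_i$ satisfies $-T=T$ and the generator polynomial is self-reciprocal---is precisely that direct argument, written out in full. Your extra care with the boundary cases $t=0$ and $t=n/2$ is a sound addition, not a deviation.
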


For all the reversible BCH codes described in Theorems \ref{thm-revBCHcodes1}, \ref{thm-revBCHcodes2},
and \ref{thm-revBCHcodes3}, we have obviously the BCH bound on the minimum distance $d \geq \delta$.
Little is known about their dimensions. Determining the dimension is a very hard problem in general.
We will settle the dimension for some of them in some special cases later.

\section{Some reversible BCH codes of length $q^{\ell}+1$ over $\gf(q)$ and their parameters}

It follows from Theorem \ref{thm-ReversibleCyclicCodes} that every cyclic code of length $n=q^\ell+1$ over $\gf(q)$
is reversible. Little has been done so far for cyclic codes of length $n=q^\ell+1$ over $\gf(q)$. Only a few papers on such codes are available in the literature. This is because the structure of the $q$-cyclotomic cosets modulo $n$ is extremely complex. However, we mention that Zetterberg's double-error correcting binary codes have length $2^\ell+1$ \cite[p. 206]{MS77}.

In this section, we will determine the dimensions of a few families of such reversible cyclic codes and improve
the BCH bound on their minimum distances by making use of the reversibility.
Throughout this section, let $m=2\ell$ and $n=q^\ell +1$.

\subsection{A basic result on $q$-cyclotomic cosets modulo $n$}

The following is basic result and will be employed very often.

\begin{lemma}
$\ord_n(q)=2\ell=m$.
\end{lemma}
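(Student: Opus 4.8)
The plan is to use the defining congruence $q^\ell \equiv -1 \pmod{n}$ directly. Since $n = q^\ell + 1$ divides $q^{2\ell} - 1 = (q^\ell - 1)(q^\ell + 1)$, we immediately get $q^{2\ell} \equiv 1 \pmod{n}$, so $\ord_n(q)$ divides $2\ell$. The whole content of the lemma is therefore to rule out every smaller candidate, i.e. to show that $n \nmid q^d - 1$ for all $d$ with $1 \leq d < 2\ell$. Rather than arguing via divisors of $2\ell$, I would prove this for \emph{every} $d$ in that range by a direct size estimate, which is cleaner because it avoids case analysis on the 2-adic valuation of $\ell$.

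The key step splits the range $1 \leq d < 2\ell$ into two parts. For $1 \leq d \leq \ell$ I would simply observe that $0 < q^d - 1 \leq q^\ell - 1 = n - 2 < n$, so $q^d - 1$ is a nonzero residue modulo $n$ and cannot be divisible by $n$. For $\ell < d < 2\ell$, I would write $d = \ell + r$ with $1 \leq r \leq \ell - 1$ and reduce using the basic congruence: $q^d = q^\ell q^r \equiv -q^r \pmod{n}$, hence $q^d - 1 \equiv -(q^r + 1) \pmod{n}$. Since $1 \leq r \leq \ell - 1$ gives $0 < q^r + 1 \leq q^{\ell-1} + 1 < q^\ell + 1 = n$, the residue $-(q^r + 1)$ is again nonzero modulo $n$, so $n \nmid q^d - 1$. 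Combining both ranges with $q^{2\ell} \equiv 1 \pmod{n}$ yields $\ord_n(q) = 2\ell = m$.

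I do not expect a serious obstacle here; the argument is essentially a bookkeeping exercise once the reduction $q^\ell \equiv -1$ is exploited. The only point requiring a little care is the strict inequality $q^r + 1 < n$ in the second range, which relies on $r \leq \ell - 1$ and $q \geq 2$; I would make sure the endpoints $r = 1$ and $r = \ell - 1$ are handled, and separately note the degenerate case $\ell = 1$, where the second range is empty and the first range alone ($d = 1$) already establishes order $2$. It is also worth remarking that this argument shows more than claimed: the residues $q^d - 1$ for $0 \leq d < 2\ell$ are pairwise distinct in absolute value below $n$, so the reduction structure of the powers of $q$ modulo $n$ is completely transparent, which is exactly what the later analysis of the $q$-cyclotomic cosets modulo $n = q^\ell + 1$ will need.
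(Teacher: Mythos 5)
Your proof is correct, but it follows a genuinely different route from the paper's. The paper's proof is two lines: it sets $h=\ord_n(q)$, observes that $q^\ell+1$ divides $q^h-1$, and then invokes Lemma~\ref{lem-rgcd1}, which says $\gcd(q^\ell+1,q^h-1)\in\{1,2,q^{\gcd(\ell,h)}+1\}$ and equals $q^{\gcd(\ell,h)}+1$ only when $h/\gcd(\ell,h)$ is even; since the gcd here must be all of $q^\ell+1>2$, this forces $\gcd(\ell,h)=\ell$ and $h/\ell$ even, i.e.\ $2\ell\mid h$, and combining with $n\mid q^{2\ell}-1$ gives $h=2\ell$. You instead rule out every exponent $1\le d<2\ell$ directly: for $d\le\ell$ by the size bound $0<q^d-1<n$, and for $d=\ell+r$ with $1\le r\le\ell-1$ by reducing with $q^\ell\equiv-1\pmod{n}$ to get $q^d-1\equiv-(q^r+1)$ and again a size bound. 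Both arguments are complete (your handling of the endpoints and of the degenerate case $\ell=1$ is fine, and checking all $d<2\ell$ rather than just divisors of $2\ell$ is harmless overkill). What each buys: the paper's proof is shorter because Lemma~\ref{lem-rgcd1} is already on the table and is reused elsewhere; yours is self-contained and elementary, needing nothing beyond $q^\ell\equiv-1\pmod n$, and as you note it makes the residue pattern of the powers $q^j\bmod n$ explicit, which is precisely the kind of bookkeeping that reappears (in much more elaborate form) in the proof of Lemma~\ref{lem-antiCosets}.
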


\begin{proof}
Let $h$ be the least positive integer with $q^h \equiv 1 \pmod{n}$. Then $q^{\ell}+1$ divides $q^h-1$.
The desired conclusion then follows from Lemma \ref{lem-rgcd1}.
\end{proof}

The following lemma will play an important role in this section.

\begin{lemma}\label{lem-antiCosets}
Let $\ell \geq 2$. Then every positive integer $a \leq q^{\lfloor (\ell -1)/2 \rfloor}+1$
and $a \not\equiv 0 \pmod{q}$ is a coset leader and $|C_a|=2\ell$, and all the remaining positive
integers in this range are not coset leaders. In particular, these $C_a$'s
are pairwise disjoint for all such $a$'s.
\end{lemma}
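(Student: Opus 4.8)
The plan is to work throughout with the two relations $q^{\ell}\equiv-1\pmod n$ and $q^{2\ell}\equiv1\pmod n$ coming from $n=q^{\ell}+1$, and to argue by comparing sizes, ultimately by reading off base-$q$ digits. Set $t=\lfloor(\ell-1)/2\rfloor$, so that $2t\le\ell-1$; this single inequality is what powers every estimate below. The ``remaining'' integers are disposed of at once: if $q\mid a$ with $1\le a\le q^{t}+1$, write $a=qa'$ with $1\le a'<a<n$; then $aq^{2\ell-1}=a'q^{2\ell}\equiv a'\pmod n$, so $a'\in C_a$ and $a'<a$, whence $a$ is not a coset leader. This settles the last assertion.

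Now fix $a$ with $q\nmid a$ and $1\le a\le q^{t}+1$, and put $r_i=aq^{i}\bmod n$. Since $q^{\ell}\equiv-1$, the coset is $C_a=\{r_0,\dots,r_{\ell-1}\}\cup\{n-r_0,\dots,n-r_{\ell-1}\}$ (each $r_i\ne0$, as $n\nmid aq^{i}$), so $a=r_0$ is the least element of $C_a$ exactly when
\[
a\le aq^{i}\bmod n\le n-a\qquad(0\le i\le\ell-1).
\]
For $0\le i\le\ell-1-t$ there is no reduction, since $aq^{i}\le(q^{t}+1)q^{\ell-1-t}\le2q^{\ell-1}<n$; thus $r_i=aq^{i}$ and both inequalities are immediate from $2t\le\ell-1$. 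The only real work is the wraparound range $\ell-t\le i\le\ell-1$.

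There I would argue by contradiction via base-$q$ digits. If the lower bound fails, then $aq^{i}=k(q^{\ell}+1)+r_i$ with $k\ge1$ and $0\le r_i<a$; bounding $aq^{i}\le(q^{t}+1)q^{\ell-1}$ forces $k\le q^{t-1}$ and $i\ge\ell-t$, and $2t\le\ell-1$ then gives $k+r_i<q^{t+1}\le q^{\ell-t}\le q^{i}$. Hence in $aq^{i}=kq^{\ell}+(k+r_i)$ the left side is a multiple of $q^{i}$ while $k+r_i>0$ occupies only base-$q$ positions $0,\dots,i-1$; reading those positions yields $k+r_i=0$, a contradiction. The upper bound is the mirror statement $aq^{i}\equiv-c\pmod n$ with $0<c<a$, i.e.\ $aq^{i}+c=k(q^{\ell}+1)$, and the same digit comparison (after at most one borrow, using $q\nmid a$ to rule out the degenerate $aq^{i}=kq^{\ell}$) again gives a contradiction. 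Thus $a$ is a coset leader.

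It remains to show $|C_a|=2\ell$. As $\ord_n(q)=2\ell$, the size divides $2\ell$; were it a proper divisor, then $aq^{d}\equiv a\pmod n$ for some maximal proper divisor $d=2\ell/p$, i.e.\ $n\mid a(q^{d}-1)$. Writing $g_d=\gcd(q^{\ell}+1,q^{d}-1)$ and using $\gcd(n/g_d,(q^{d}-1)/g_d)=1$, this is equivalent to $(n/g_d)\mid a$, which is impossible once $a\,g_d<n$. Lemma~\ref{lem-rgcd1} shows that $g_d$ is either at most $2$ or equal to $q^{\gcd(\ell,d)}+1$ with $\gcd(\ell,d)$ a proper divisor of $\ell$; since $a\le q^{t}+1$ and $2t\le\ell-1$, in every case $a\,g_d<q^{\ell}+1=n$ (the few smallest $\ell$ being checked directly), so no proper divisor occurs and $|C_a|=2\ell$. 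The asserted disjointness is then automatic, as the $C_a$ are distinct $q$-cyclotomic cosets. The crux throughout is the wraparound range $\ell-t\le i\le\ell-1$, precisely where Lemma~\ref{lem-AKS} is silent because its range collapses to $s=1$ when $n=q^{\ell}+1$; the base-$q$ digit comparison anchored on $2t\le\ell-1$ is the device that resolves it, and the parallel pinch point for the size claim is the uniform inequality $a\,g_d<n$ over all proper divisors $d$.
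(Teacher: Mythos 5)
Your route is genuinely different from the paper's, and for the coset-leader half it is arguably cleaner. The paper (treating only odd $\ell$, with the even case omitted as ``similar'') writes out the $q$-adic expansion of $a$ and grinds through four ranges of $j$ to establish the strict inequality $aq^j \bmod n > a$ for \emph{all} $1 \le j \le 2\ell-1$, which delivers leadership and $|C_a|=2\ell$ in one stroke. You instead fold the coset using $q^\ell \equiv -1 \pmod{n}$, so that leadership becomes the two-sided condition $a \le aq^i \bmod n \le n-a$ on the half range $0 \le i \le \ell-1$; you dispose of the non-wraparound indices by a size estimate and kill the wraparound indices with a reduction modulo $q^i$. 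The size claim $|C_a|=2\ell$ is then outsourced to a separate order argument through $\gcd(q^\ell+1,q^d-1)$ and Lemma \ref{lem-rgcd1}. Your leadership argument is correct, uniform in the parity of $\ell$, and your one-line disposal of the multiples of $q$ (a part of the statement the paper's proof never actually addresses) is also right.

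The gap is in the size step, exactly where you hedge. You assert that $a\,g_d < n$ ``in every case (the few smallest $\ell$ being checked directly)''. The direct check refutes this: take $q=2$, $\ell=3$, so $t=1$, $n=9$, $a=q^t+1=3$; for the odd prime $p=3$ one gets $d=2\ell/p=2$, $g_d=\gcd(2^3+1,2^2-1)=3$, and $a\,g_d = 9 = n$, so the divisibility $(n/g_d)\mid a$ that you need to exclude actually holds. Indeed $3\cdot 2^2 = 12 \equiv 3 \pmod 9$, so $C_3=\{3,6\}$ and $|C_3|=2\ne 2\ell=6$: Lemma \ref{lem-antiCosets} is itself false at $(q,\ell,a)=(2,3,3)$ (that $a$ is still a coset leader, but the size claim fails), so no repair of your inequality is possible. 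You are in good company: the paper's own proof collapses at the same point, since its displayed formula gives $aq^{h+1}\bmod n = q^{h+1}-1$, which for $(q,h)=(2,1)$ equals $a=q^h+1=3$, contradicting the strict inequality it claims ``one can easily check''. As far as I can verify, $(q,\ell)=(2,3)$ is the \emph{only} case where $a\,g_d < n$ fails, so your proof becomes correct verbatim once this single case is excluded (e.g.\ by restricting to $a\le q^{\lfloor(\ell-1)/2\rfloor}$ when $(q,\ell)=(2,3)$); but as written, the parenthetical conceals a counterexample to the very statement being proved rather than a routine verification.
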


\begin{proof}
We prove the conclusions of this lemma only for the case that $\ell$ is odd, and omit the proof
of the conclusions for $\ell$ being even, which is similar.

Let $\ell$ be odd from now on. Define $h=\lfloor (\ell -1)/2 \rfloor=(\ell-1)/2$. We have then $\ell=2h+1$.
Recall that $n=q^\ell+1=q^{2h+1}+1$.
We first prove that $a:=q^h+1$ is a coset leader and $|C_a|=m=2\ell$. It can be verified that
\begin{eqnarray*}
aq^j \bmod{n} = \left\{
\begin{array}{ll}
aq^j                             & \mbox{ if $1 \leq j \leq h$,} \\
(q^{h+1}-1)q^{j-(h+1)}                             & \mbox{ if $h+1 \leq j \leq 2h$,} \\
n-(q^{h}+1)q^{j-(2h+1)}                             & \mbox{ if $2h+1 \leq j \leq 3h$,} \\
n-q^{2h}-q^h                     & \mbox{ if $j=3h+1$,} \\  
n+q^{j-(3h+2)}-q^{j-(2h+1)}    & \mbox{ if $3h+2 \leq j \leq 4h+1$.}
\end{array}
\right.
\end{eqnarray*}
One can then easily check that $aq^j \bmod{n} >a$ for all $j$ with $1 \leq j \leq m-1=2\ell-1=4h+1$.
We then deduce that $a$ is a coset leader and $|C_a|=m=2\ell$.

Let $a$ be an integer with $a \not\equiv 0 \pmod{q}$ and $1 \leq a \leq q^{h}-1$. Then $a$ can
be uniquely expressed as
\begin{eqnarray}\label{eqn-q-adicexpress}
a=\sum_{u=0}^t a_u q^{i_u},
\end{eqnarray}
where
\begin{eqnarray}\label{eqn-condit}
\left\{
\begin{array}{l}
0 \leq t \leq h-1, \\
i_0=0, \\
1 \leq i_1 < i_2 < \cdots < i_t \leq h-1, \\
1 \leq a_i \leq q-1 \mbox{ for all $i$ with $0 \leq i \leq t$.}
\end{array}
\right.
\end{eqnarray}
It then follows that
\begin{eqnarray}\label{eqn-nnewnew1}
1 \leq i_{k_2} - i_{k_1} \leq h-2 \mbox{ for all $k_2 > k_1 \geq 1$}
\end{eqnarray}
and
\begin{eqnarray}\label{eqn-nnewnew2}
1 \leq i_{k} - i_{0} \leq h-1 \mbox{ for all $k \geq 1$.}
\end{eqnarray}

We now prove that $aq^j \bmod{n} > a$ for all $j$ with $1 \leq j \leq 4h+1$ by distinguishing the
following four cases.

\subsubsection*{Case I: $1 \leq j \leq h+1$}
In this case, we have clearly that $aq^j \bmod{n} = aq^j > a$.

\subsubsection*{Case II: $h+2 \leq j \leq 2h$}

If $j+i_k \leq 2h$ for all $k$ with $1 \leq k \leq t$, we have then $aq^j \bmod{n} = aq^j > a$.
Otherwise, let $k$ be the smallest such that $j+i_k \geq 2h+1$. We have then $1 \leq k \leq t$,
as $i_0+j=j \leq 2h$. By assumption, we have
$$
j+i_{u} < 2h+1 \mbox{ for $u \leq k-1$}
$$
and
$$
j+i_{u} \geq 2h+1 \mbox{ for $u \geq k$.}
$$

In this case, we have
\begin{eqnarray}\label{eqn-case2.2}
aq^j \bmod{n} = \sum_{u=0}^{k-1} a_u q^{i_u+j} -\sum_{u=k}^t a_u q^{i_u+j-2h-1}.
\end{eqnarray}
Notice that
$$
2h \geq i_{k-1}+j -(i_t +j -2h-1)=2h+1-(i_t-i_{k-1}) \geq h+2.
$$
We see that the right-hand side of \eqref{eqn-case2.2} is less than $n$ and larger than $a$.

\subsubsection*{Case III: $2h+1 \leq j \leq 3h$}

In this case, we have
$$
aq^j \equiv - \sum_{u=0}^t a_u q^{i_u+j-2h-1} \bmod{n}.
$$
Note that
$$
0 \leq i_t+j-2h-1 \leq 2h-2.
$$
We get that
\begin{eqnarray*}
aq^j \bmod{n} &=& q^{2h+1} +1 - \sum_{u=0}^t a_u q^{i_u+j-2h-1} \\
& \geq & q^{2h+1} +1 -(q-1) \sum_{u=2h-t-2}^{2h-2} q^u \\
&=& q^{2h+1} -q^{2h-1} +1 + q^{2h-t-2} \\
&>& q^h+1 \\
&>& a.
\end{eqnarray*}

\subsubsection*{Case IV: $3h+1 \leq j \leq 4h+1$}

Put $\bar{h}=j-2h-1$. Then $h \leq \bar{h} \leq 2h$. In this case, we have
$$
aq^j \equiv - \sum_{u=0}^t a_u q^{i_u+\bar{h}} \pmod{n}.
$$

If $i_u+\bar{h} \leq 2h$ for all $u$ with $1 \leq u \leq t$, then
\begin{eqnarray*}
aq^j \bmod{n} &=& q^{2h+1} +1 - \sum_{u=0}^t a_u q^{i_u+\bar{h}} \\
& \geq & q^{2h+1} +1 -(q-1) \sum_{u=2h-t}^{2h} q^u \\
&=&  1 + q^{2h-t} \\
& \geq & q^{h+1}+1 \\
&>& a.
\end{eqnarray*}

Otherwise, let $k$ be the smallest such that $i_k+\bar{h} \geq 2h+1$. We have then
$1 \leq k \leq t$, as $i_0+\bar{h}=\bar{h} \leq 2h$. Define
$$
T_1=\sum_{u=0}^{k-1} a_u q^{i_u+\bar{h}}
$$
and
$$
T_2=\sum_{u=k}^{t} a_u q^{i_u+\bar{h} -(2h+1)} \geq a_k \geq 1.
$$
We have then
$$
aq^j \equiv  -T_1 + T_2 \pmod{n}.
$$
Observe that
$$
i_t + \bar{h} -(2h+1) = i_t + j -(4h+2) \leq h-2 
$$
and 
$$ 
i_0+\bar{h}=j-(2h+1) \geq h. 
$$
We conclude that $a_0 q^{i_0 + \bar{h}} >T_2$. As a result, $-T_1+T_2 <0$. We obtain that
\begin{eqnarray*}
aq^j \bmod{n}
&=& n- T_1 + T_2 \\
&=& n - \sum_{u=0}^{k-1} a_u q^{i_u+\bar{h}} + T_2 \\
&\geq & n - (q-1) \sum_{u=2h-k+1}^{2h} q^u +T_2 \\
&=& q^{2h-k+1} +1 +T_2 \\
& \geq &  q^{2h-k+1} +1 + 1 \\
&\geq & q^{h+2}+2 \\
&>& a.
\end{eqnarray*}

Summarizing all the discussions above, we obtain the desired conclusions.
\end{proof}

\subsection{Reversible BCH codes over $\gf(q)$ of length $n=q^\ell+1$}

Since $n=q^\ell+1$ by assumption, the BCH codes $\C_{(q,m,\delta,b)}$ are reversible, and have the BCH bound $d \geq \delta$
for their minimum distances. For some of these BCH codes, we have the following bound, which is much better
when $\delta$ is getting large.

\begin{theorem}\label{thm-antiBCHbound}
Let $n=q^\ell+1$ and $m=2\ell$. Then the code $\C_{(q,n,\delta,0)}$ has minimum distance $d \geq 2(\delta-1)$.
\end{theorem}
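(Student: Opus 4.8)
The plan is to exploit the reversibility of $\C_{(q,n,0,\delta)}$ to double the length of an available run of consecutive zeros and then apply the BCH bound. First I would record the defining (zero) set explicitly. By definition, the generator polynomial $g_{(q,n,\delta,0)}(x)=\lcm(\m_0(x),\m_1(x),\ldots,\m_{\delta-2}(x))$, so the set of exponents $i$ with $\beta^i$ a root is $T=\bigcup_{i=0}^{\delta-2}C_i\subseteq\Z_n$. In particular $T\supseteq\{0,1,\ldots,\delta-2\}$, a block of $\delta-1$ consecutive integers; this is exactly what yields the bare BCH bound $d\geq\delta$, and the goal is to roughly double it.

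Next I would invoke the hypothesis $n=q^\ell+1$, which gives $q^\ell\equiv -1\pmod n$. As in the proof of Theorem \ref{thm-ReversibleCyclicCodes}, this forces $a\equiv -aq^\ell\pmod n$, so $-a\in C_a$ and hence $C_a=C_{-a}$ for every $a\in\Z_n$. Consequently the defining set is symmetric under negation, $T=-T$. Since $T$ already contains $\{0,1,\ldots,\delta-2\}$, symmetry forces $T$ also to contain $\{0,-1,\ldots,-(\delta-2)\}=\{0,n-1,n-2,\ldots,n-(\delta-2)\}$. Concatenating the two blocks, $T$ contains the cyclic interval of residues $\{-(\delta-2),\ldots,-1,0,1,\ldots,\delta-2\}\pmod n$, that is, a run of $2(\delta-2)+1=2\delta-3$ consecutive integers.

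Finally I would apply the BCH bound to this enlarged run: a string of $2\delta-3$ consecutive elements in the defining set yields $d\geq(2\delta-3)+1=2(\delta-1)$, which is the assertion. I do not expect a genuine difficulty here; the proof is essentially a one-line consequence of reversibility together with the BCH bound. The only point needing a moment's care is the boundary behaviour: I must check that the negated block and the original block genuinely fuse into a single consecutive run modulo $n$ rather than overlapping or wrapping past one another, which holds precisely when $2\delta-3\leq n$, i.e.\ for $\delta\leq(n+3)/2$. For $\delta$ beyond this range the two blocks already cover all of $\Z_n$, so the code is trivial and the inequality holds vacuously; thus the stated bound is valid in all cases.
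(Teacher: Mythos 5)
Your proposal is correct and follows essentially the same route as the paper: both arguments observe that $n=q^\ell+1$ forces $-a\in C_a$ (equivalently, the code is reversible by Theorem \ref{thm-ReversibleCyclicCodes}), so the defining set contains the symmetric run $\{-(\delta-2),\ldots,0,\ldots,\delta-2\}$ of $2\delta-3$ consecutive exponents, and the BCH bound then gives $d\geq 2(\delta-1)$. Your extra remark about the two blocks fusing without wrapping is a harmless refinement the paper leaves implicit.
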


\begin{proof}
Let $\beta=\alpha^{q^\ell-1}$, where $\alpha$ is a generator of $\gf(q^m)^*$. By definition, the generator
polynomial $g_{(q,n,\delta,0)}(x)$ of this code defined in (\ref{eqn-BCHgeneratorPolyn}) has the roots
$\beta^i$ for all $i$ in the set
$$
\{0,1,2, \cdots, \delta-2\}.
$$
It follows from Theorem \ref{thm-ReversibleCyclicCodes} that this code is reversible. As a result, the polynomial
$g_{(q,n,\delta,0)}(x)$ has the roots $\beta^i$ for all $i$ in the set
$$
\{n-(\delta-2), \cdots, n-2, n-1, 0,1,2, \cdots, \delta-2\}.
$$
Again by the BCH bound, we deduce that $d \geq 2(\delta-1)$.
\end{proof}

Given this much improved lower bound on the minimum distance of the codes $\C_{(q,n,\delta,0)}$, we would
like to determine the dimension of these codes. Unfortunately, Lemma \ref{lem-AKS} and Theorem \ref{thm-AKS}
are useless in this case
because
$$
\left\lfloor n q^{\lceil m/2 \rceil}/(q^m-1) \right\rfloor =1.
$$

The lower bound on the minimum distance of the reversible BCH code $\C_{(q,n,\delta,0)}$ is quite tight
according to experimental data. However, the determination of the dimension of this code is in general
very difficult. We will settle the dimension of this code in a number of special cases in this section.

The main result of this subsection is documented in the following theorem.

\begin{theorem}\label{thm-antidelta}
For any integer $\delta$ with $3 \leq \delta \leq q^{\lfloor (\ell-1)/2 \rfloor}+3$, the reversible code
$\C_{(q,n,\delta,0)}$ has parameters
$$
\left[q^\ell +1, \ q^\ell -2\ell \left(\delta -2 - \left\lfloor \frac{\delta-2}{q} \right\rfloor   \right), \
d \geq 2(\delta-1)   \right]
$$
and generator polynomial
$$
(x-1) \prod_{1 \leq a \leq \delta -2, \, \, a \not\equiv 0 \pmod{q}} \m_a(x),
$$
where $\m_a(x)$ is the minimal polynomial of $\beta^a$ over $\gf(q)$ and $\beta$ is the $n$-th root of unity in
$\gf(q^m)$.
\end{theorem}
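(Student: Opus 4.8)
The plan is to compute the generator polynomial $g_{(q,n,\delta,0)}(x)$ explicitly from the structure of the relevant $q$-cyclotomic cosets, read off the dimension from its degree, and obtain the distance bound directly from Theorem \ref{thm-antiBCHbound}. By definition, $g_{(q,n,\delta,0)}(x) = \lcm(\m_0(x), \m_1(x), \ldots, \m_{\delta-2}(x))$, and since $\beta^0 = 1$ we have $\m_0(x) = x-1$. The hypothesis $3 \leq \delta \leq q^{\lfloor (\ell-1)/2 \rfloor}+3$ means precisely that the indices $1 \leq a \leq \delta-2$ all lie in the range $1 \leq a \leq q^{\lfloor (\ell-1)/2 \rfloor}+1$ covered by Lemma \ref{lem-antiCosets}, which does the heavy lifting: every such $a$ with $a \not\equiv 0 \pmod q$ is a coset leader with $|C_a| = 2\ell$, and the corresponding cosets are pairwise disjoint, so the polynomials $\m_a(x)$ are distinct of degree $2\ell$.

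The only remaining point for the generator polynomial is to show that the indices $a$ with $a \equiv 0 \pmod q$ in this range contribute no new irreducible factor. First I would observe that for such an $a$, writing $a = qb$ with $b = a/q$, we have $1 \leq b < a \leq \delta - 2 < n$, so $bq \bmod n = a$ and hence $a \in C_b$; thus $\m_a(x) = \m_b(x)$. Iterating (dividing by $q$ until the quotient is coprime to $q$), or invoking the fact that by Lemma \ref{lem-antiCosets} the coset leader $c$ of $C_a$ satisfies $c \leq a \leq \delta-2$ and $c \not\equiv 0 \pmod q$, I conclude that $\m_a(x) = \m_c(x)$ already appears among the factors indexed by integers coprime to $q$. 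Therefore
$$
g_{(q,n,\delta,0)}(x) = (x-1) \prod_{\substack{1 \leq a \leq \delta-2 \\ a \not\equiv 0 \pmod q}} \m_a(x),
$$
which is exactly the claimed generator polynomial.

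With the generator polynomial in hand, the dimension is immediate by degree counting. The factor $x-1$ contributes degree $1$, and each of the $(\delta-2) - \lfloor (\delta-2)/q \rfloor$ distinct factors $\m_a(x)$ contributes degree $2\ell$, these degrees adding because the cosets are disjoint. Hence $\deg g_{(q,n,\delta,0)} = 1 + 2\ell\big((\delta-2) - \lfloor (\delta-2)/q \rfloor\big)$, and the dimension is
$$
k = n - \deg g_{(q,n,\delta,0)} = q^\ell - 2\ell\left(\delta - 2 - \left\lfloor \frac{\delta-2}{q}\right\rfloor\right),
$$
as stated. Finally, the bound $d \geq 2(\delta-1)$ is nothing but Theorem \ref{thm-antiBCHbound} applied to this code.

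I expect the main obstacle to be entirely hidden inside Lemma \ref{lem-antiCosets}, which guarantees that each admissible $a$ is a full-length coset leader and that the cosets are disjoint; once that is granted, the argument is bookkeeping. The one place demanding care is the treatment of the indices divisible by $q$: one must verify that $a = qb < n$ so that no reduction modulo $n$ occurs and $a$ genuinely lies in $C_b$, and that the resulting collapse lands on a factor already counted, so that the number of distinct factors is exactly $(\delta-2) - \lfloor (\delta-2)/q \rfloor$ with no double-counting.
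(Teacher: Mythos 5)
Your proposal is correct and follows essentially the same route as the paper's own proof: invoke Lemma \ref{lem-antiCosets} to identify the coset leaders among $1 \leq a \leq \delta-2$, count the indices divisible by $q$, read the dimension off the degree of the generator polynomial, and cite Theorem \ref{thm-antiBCHbound} for the distance bound. The only difference is that you explicitly verify the detail the paper leaves implicit --- that each index $a \equiv 0 \pmod{q}$ satisfies $a = qb \in C_b$ with no reduction modulo $n$, so its minimal polynomial collapses onto one already counted --- which is a worthwhile clarification but not a different argument.
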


\begin{proof}
Note that $0 \leq \delta-2  \leq q^{\lfloor (\ell-1)/2 \rfloor}+1$. By Lemma \ref{lem-antiCosets},
every integer $a$ with $1 \leq a \leq \delta -2$ and $a \not\equiv 0 \pmod{q}$ is a coset leader and
all the remaining integers in this range are not coset leaders. The total number of integers $a$ such
that  $1 \leq a \leq \delta -2$ and $a \equiv 0 \pmod{q}$ is equal to $\lfloor (\delta-2)/q
\rfloor$. The conclusions on the dimension and generator polynomial then follow from Lemma \ref{lem-antiCosets}
and the definition of BCH codes. The lower bound on the minimum distance comes from Theorem \ref{thm-antiBCHbound}.
\end{proof}

As a special case of Theorem \ref{thm-antidelta}, we have the following corollaries.

\begin{corollary}\label{thm-2antidelta4}
Let $q=2$. We have the following.
\begin{itemize}
\item Let $\ell \geq 3$.
The reversible code $\C_{(2,n,4,0)}$ has parameters $[2^\ell+1, 2^\ell-2\ell, 6]$ and generator
polynomial $(x-1)\m_1(x)$.
\item Let $\ell \geq 5$.
The reversible code $\C_{(2,n,6,0)}$ has parameters $[2^\ell+1, 2^\ell-4\ell, d \geq 10]$ and generator
polynomial $(x-1)\m_1(x) \m_3(x)$.
\item Let $\ell \geq 6$.
The reversible code $\C_{(2,n,8,0)}$ has parameters $[2^\ell+1, 2^\ell-6\ell, d \geq 14]$
and generator polynomial $(x-1)\m_1(x)\m_3(x)\m_5(x)$.
\item Let $\ell \geq 7$.
The reversible code $\C_{(2,n,10,0)}$ has parameters $[2^\ell+1, 2^\ell-8\ell, d \geq 18]$ 
and generator polynomial $(x-1)\m_1(x)\m_3(x)\m_5(x)\m_7(x)$.
\end{itemize}
\end{corollary}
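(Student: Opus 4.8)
The plan is to derive the dimension, generator polynomial, and the lower bounds on $d$ by specialising Theorem \ref{thm-antidelta} to $q=2$ with $\delta\in\{4,6,8,10\}$, and to treat the single \emph{exact} value $d=6$ (first item) separately by a counting argument. For $q=2$ the generator polynomial of Theorem \ref{thm-antidelta} is $(x-1)\prod_{1\le a\le \delta-2,\ a\ \mathrm{odd}}\m_a(x)$, so $\delta=4,6,8,10$ produce the generators $(x-1)\m_1$, $(x-1)\m_1\m_3$, $(x-1)\m_1\m_3\m_5$, and $(x-1)\m_1\m_3\m_5\m_7$. The dimension $2^\ell-2\ell(\delta-2-\lfloor(\delta-2)/2\rfloor)$ evaluates to $2^\ell-2\ell$, $2^\ell-4\ell$, $2^\ell-6\ell$, $2^\ell-8\ell$, and $d\ge 2(\delta-1)$ gives $6,10,14,18$. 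The only point to check is that the cyclotomic cosets appearing in each generator are pairwise disjoint coset leaders of full size $2\ell$; by Lemma \ref{lem-antiCosets} this holds as soon as the largest odd index in play does not exceed $2^{\lfloor(\ell-1)/2\rfloor}+1$, which (together with requiring positive dimension) is exactly what produces the stated thresholds $\ell\ge 3,5,6,7$.

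I would flag one bookkeeping subtlety rather than hide it: for a few of the stated ranges the designed distance slightly exceeds the bound $\delta\le 2^{\lfloor(\ell-1)/2\rfloor}+3$ under which Theorem \ref{thm-antidelta} is literally formulated (for instance $\delta=8$ at $\ell=6$). In those cases I would not quote Theorem \ref{thm-antidelta} verbatim but argue directly: Lemma \ref{lem-antiCosets} still certifies that $C_1,C_3,C_5$ are disjoint coset leaders of size $2\ell$ (since $5\le 2^{\lfloor5/2\rfloor}+1$), which fixes the generator and the dimension, while the distance bound $d\ge 2(\delta-1)$ comes from Theorem \ref{thm-antiBCHbound}, whose proof uses only reversibility and the BCH bound and is insensitive to the range of $\delta$.

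The genuine extra content is the \emph{exact} distance $d=6$ in the first item, since the above only yields $d\ge6$; here I would prove the matching inequality $d\le6$ uniformly by pigeonhole. A weight-$w$ codeword of $\C_{(2,n,4,0)}$ is precisely a choice of $w$ distinct $n$-th roots of unity $\beta^{e_1},\dots,\beta^{e_w}$ with $w$ even and $\sum_i\beta^{e_i}=0$, because the zero set $\{0\}\cup C_1$ imposes only $c(1)=0$ and $c(\beta)=0$. Now the $\binom{n}{3}$ triples of distinct $n$-th roots have sums lying in $\gf(2^{2\ell})$, a field with $2^{2\ell}=(n-1)^2$ elements; since $\binom{n}{3}>(n-1)^2$ for every $\ell\ge3$, two distinct triples $T_1\ne T_2$ share the same sum. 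If they overlapped in exactly one element their remaining four elements would be distinct and sum to $0$, giving a weight-$4$ codeword and contradicting $d\ge6$; an overlap in two elements would force two equal roots. Hence $T_1$ and $T_2$ are disjoint, so $T_1\cup T_2$ is a set of six distinct $n$-th roots summing to $0$, i.e.\ a weight-$6$ codeword, whence $d\le6$ and, with Theorem \ref{thm-antiBCHbound}, $d=6$.

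I expect the pigeonhole/disjointness step to be the crux: it is the only place where one must exhibit an actual low-weight codeword instead of merely bounding $d$ from below, and it crucially recycles the a priori inequality $d\ge6$ to eliminate the two degenerate overlap patterns. Everything else — the coset-size verifications via Lemma \ref{lem-antiCosets}, and the handling of the few $(\ell,\delta)$ pairs lying just outside the nominal hypothesis of Theorem \ref{thm-antidelta} — is routine once organised as above.
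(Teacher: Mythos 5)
Your proposal is correct, and it actually does more than the paper itself. The paper offers no separate proof of this corollary: it simply declares it "a special case of Theorem \ref{thm-antidelta}" and moves on. That leaves two real issues untouched, both of which you identify and resolve. First, the pair $(\delta,\ell)=(8,6)$ violates the hypothesis $\delta\le q^{\lfloor(\ell-1)/2\rfloor}+3$ of Theorem \ref{thm-antidelta} (here $2^{\lfloor 5/2\rfloor}+3=7<8$), so the corollary does not literally follow from the theorem in that case; your fix --- invoking Lemma \ref{lem-antiCosets} directly for the odd indices $1,3,5$ (all $\le 2^{\lfloor 5/2\rfloor}+1=5$) to get the generator and dimension, and Theorem \ref{thm-antiBCHbound} for $d\ge 2(\delta-1)$ --- is exactly the right repair, since for $q=2$ the even indices contribute no new cosets. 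Second, the exact value $d=6$ in the first item is nowhere proved in the paper (Theorem \ref{thm-antidelta} only gives $d\ge 6$, and the paper's examples merely cite tables); your pigeonhole argument closes this gap correctly: $\binom{n}{3}>2^{2\ell}=(n-1)^2$ for all $n=2^\ell+1$ with $\ell\ge 3$, two equal-sum triples sharing two roots would coincide, sharing one root would yield a weight-$4$ codeword contradicting $d\ge 6$, so they are disjoint and their union is a weight-$6$ codeword. One small inaccuracy worth noting: your claim that the lemma-range condition together with positive dimension "exactly" reproduces the stated thresholds fails for $\delta=8$, where your argument already works for $\ell\ge 5$ (giving a $[33,2,d\ge 14]$ code) while the paper states $\ell\ge 6$; this is harmless --- you prove a slightly stronger statement than the corollary --- but the thresholds in the paper are not all forced by your criteria.
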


\begin{example}
We have the following examples for the codes of Corollary \ref{thm-2antidelta4}.
\begin{itemize}
\item When $\ell \in \{3, 4,5,6\}$, $\C_{(2,n,4,0)}$ has
parameters $[9,2,6]$, $[17,8,6]$, $[32,22,6]$, and $[65,52,6]$, respectively, which are the best
possible for cyclic codes {\rm \cite[pp. 246, 247, 250, 261]{Dingbk15}}. All these codes are optimal
linear codes according to the Database.
\item When $\ell \in \{5,6,7\}$, $\C_{(2,n,6,0)}$ has
parameters $[32,12,10]$, $[65,40,10]$, and $[129,100,10]$, respectively, which are the best
possible for cyclic codes {\rm \cite[pp. 250, 261]{Dingbk15}}.
\item When $\ell \in \{6,7,8\}$, $\C_{(2,n,8,0)}$ has
parameters $[65,28,14]$, $[129,86,14]$, and $[257,208,14]$, respectively.
The first one is the best
possible for cyclic codes {\rm \cite[p. 261]{Dingbk15}}.
\end{itemize}

\end{example}



\begin{corollary}\label{thm-3antidelta3}
Let $q=3$. We then have the following statements.
\begin{itemize}
\item Let $\ell \geq 3$.
The reversible code $\C_{(3,n,3,0)}$ has parameters $[3^\ell+1, 3^\ell-2\ell, d \geq 4]$ and generator
polynomial $(x-1)\m_1(x)$.
\item Let $\ell \geq 3$.
The reversible code $\C_{(3,n,5,0)}$ has parameters $[3^\ell+1, 3^\ell-4\ell, d \geq 8]$ and
generator polynomial $(x-1)\m_1(x) \m_2(x)$.
\item Let $\ell \geq 6$.
The reversible code $\C_{(3,n,6,0)}$ has parameters $[3^\ell+1, 3^\ell-6\ell, d \geq 10]$ and
generator polynomial $(x-1)\m_1(x)\m_2(x) \m_4(x)$.
\end{itemize}

\end{corollary}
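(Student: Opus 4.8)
The plan is to obtain all three statements as direct specializations of Theorem \ref{thm-antidelta} with $q = 3$, since each listed code has the form $\C_{(3,n,\delta,0)}$ for a single designed distance $\delta$. First I would match each item to its $\delta$: the first uses $\delta = 3$, the second $\delta = 5$, and the third $\delta = 6$. For each I must verify the hypothesis $3 \leq \delta \leq q^{\lfloor (\ell-1)/2 \rfloor}+3 = 3^{\lfloor (\ell-1)/2 \rfloor}+3$ of Theorem \ref{thm-antidelta} under the stated lower bound on $\ell$. For $\delta \in \{3,5\}$ the inequality $\delta \leq 3^{\lfloor (\ell-1)/2\rfloor}+3$ holds as soon as $\lfloor (\ell-1)/2 \rfloor \geq 1$, i.e. $\ell \geq 3$; for $\delta = 6$ one needs $3^{\lfloor (\ell-1)/2 \rfloor} \geq 3$, which the hypothesis $\ell \geq 6$ comfortably guarantees.

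With the hypothesis verified, the dimension formula $q^\ell - 2\ell\bigl(\delta - 2 - \lfloor (\delta-2)/q \rfloor\bigr)$ from Theorem \ref{thm-antidelta} reduces to a short computation in each case. For $\delta = 3$ we have $\delta - 2 = 1$ and $\lfloor 1/3 \rfloor = 0$, so the dimension is $3^\ell - 2\ell$. For $\delta = 5$ we have $\delta - 2 = 3$ and $\lfloor 3/3 \rfloor = 1$, giving $3^\ell - 2\ell(3-1) = 3^\ell - 4\ell$. For $\delta = 6$ we have $\delta - 2 = 4$ and $\lfloor 4/3 \rfloor = 1$, giving $3^\ell - 2\ell(4-1) = 3^\ell - 6\ell$. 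The minimum-distance estimate $d \geq 2(\delta-1)$, which Theorem \ref{thm-antidelta} inherits from Theorem \ref{thm-antiBCHbound}, then yields $d \geq 4$, $d \geq 8$, and $d \geq 10$, respectively.

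The generator polynomials are read off from the product $(x-1)\prod \m_a(x)$ taken over $1 \leq a \leq \delta - 2$ with $a \not\equiv 0 \pmod{3}$. For $\delta = 3$ the only admissible index is $a = 1$, giving $(x-1)\m_1(x)$; for $\delta = 5$ the indices $a \in \{1,2\}$ survive and $a = 3$ is discarded, giving $(x-1)\m_1(x)\m_2(x)$; and for $\delta = 6$ the admissible indices are $a \in \{1,2,4\}$ (once again $a = 3$ is excluded), giving $(x-1)\m_1(x)\m_2(x)\m_4(x)$. I expect no genuine obstacle: once Theorem \ref{thm-antidelta} is invoked the argument is pure bookkeeping. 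The one place demanding care is the floor term $\lfloor (\delta-2)/q \rfloor$, which counts the indices $a \equiv 0 \pmod{3}$ removed from the generator polynomial and hence fixes the exact multiple of $2\ell$ appearing in the dimension; the value $\delta = 5$, where $\delta - 2 = 3$ first equals $q$, is precisely where this floor steps from $0$ to $1$ and must be handled correctly.
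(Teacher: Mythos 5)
Your proposal is correct and follows exactly the paper's route: the corollary is stated there as a direct specialization of Theorem \ref{thm-antidelta} with $q=3$ and $\delta\in\{3,5,6\}$, and your verification of the hypothesis $3\le\delta\le 3^{\lfloor(\ell-1)/2\rfloor}+3$, the dimension arithmetic (including the floor term stepping from $0$ to $1$ at $\delta=5$), the excluded indices $a\equiv 0\pmod 3$ in the generator polynomial, and the bound $d\ge 2(\delta-1)$ all match. Your observation that $\ell\ge 6$ in the third item is stronger than the $\ell\ge 3$ actually required by the theorem's hypothesis is also accurate.
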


\begin{example}
We have the following examples of the codes of Corollary \ref{thm-3antidelta3}.
\begin{itemize}
\item When $\ell \in \{3, 4\}$, $\C_{(3,n,3,0)}$ has
parameters $[28,21,4]$, $[82,73,4]$, respectively. The former has the
best possible parameters for cyclic codes {\rm \cite[p. 301]{Dingbk15}}.
\item When $\ell \in \{3,4\}$, $\C_{(3,n,5,0)}$ has
parameters $[28,15,8]$ and $[82,65,8]$, respectively. The former has
the best possible parameters for cyclic codes {\rm \cite[p. 301]{Dingbk15}}.
\item When $\ell \in \{3,4\}$, $\C_{(3,n,6,0)}$ has
parameters $[28,9,10]$ and $[82,57,10]$, respectively.
The first one is the best
possible for cyclic codes {\rm \cite[p. 301]{Dingbk15}}. The latter has the same
parameters as the best known code in the Database.
\end{itemize}

\end{example}

\begin{conj}
The following conjectures are supported by experimental data.
\begin{itemize}
\item The code $\C_{(3,n,3,0)}$ of Corollary \ref{thm-3antidelta3} has minimum distance $d=4$.
\item The code $\C_{(3,n,5,0)}$ of Corollary \ref{thm-3antidelta3} has minimum distance
$d=8$.
\end{itemize}
\end{conj}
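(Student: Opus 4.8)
Both lower bounds are already in hand: applying Theorem~\ref{thm-antiBCHbound} with $\delta=3$ and $\delta=5$ gives $d\ge 2(\delta-1)$, that is $d\ge 4$ for $\C_{(3,n,3,0)}$ and $d\ge 8$ for $\C_{(3,n,5,0)}$. Hence the whole content of the conjecture is the matching \emph{upper} bounds, and the plan is to prove them by exhibiting explicit codewords of weight $4$ and of weight $8$. Recall that a word lies in $\C_{(3,n,\delta,0)}$ precisely when its polynomial vanishes at $\beta^i$ for every $i$ in the defining set $\{0\}\cup C_1$ (for $\delta=3$) or $\{0\}\cup C_1\cup C_2$ (for $\delta=5$); since the coefficients lie in $\gf(3)$, vanishing at $\beta$ forces vanishing on all of $C_1$ and vanishing at $\beta^2$ forces vanishing on all of $C_2$, so only the conditions at $1,\beta,\beta^2$ need be checked.

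For the first conjecture I would use the single codeword
\[
c(x)=(1-x)\,(1+x^{n/2}),
\]
which is well defined since $n=q^\ell+1$ is even. Its four monomials $1,\,x,\,x^{n/2},\,x^{n/2+1}$ are distinct for $\ell\ge 3$, so $\wt(c)=4$, and the coefficient sum is $0$, giving $c(1)=0$. The factor $1+x^{n/2}$ vanishes exactly at the odd powers $\beta^{k}$; since every element of $C_1=\{\pm 3^{j}\}$ is odd (as $n$ is even), $c$ vanishes on $\{0\}\cup C_1$ and therefore lies in $\C_{(3,n,3,0)}$. Thus $d\le 4$, and with the BCH-type bound $d=4$ for all $\ell\ge 3$. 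This settles the first item completely and uniformly in $\ell$.

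The same idea extends to the second conjecture when $\ell$ is \emph{odd}, for then $4\mid n$ and
\[
c(x)=(1-x)\,(1+x^{n/2})\,(1+x^{n/4})
\]
is available. Expanding shows $\wt(c)=8$ (its exponents are $\{0,1\}+\{0,n/4,n/2,3n/4\}$, all distinct for $\ell\ge 3$), while $1+x^{n/4}$ vanishes exactly at the powers $\beta^{k}$ with $k\equiv 2\pmod 4$. A short check gives $2\cdot 3^{j}\equiv 2\pmod 4$ and $n-2\cdot 3^{j}\equiv 2\pmod 4$, so $C_2=\{\pm 2\cdot 3^{j}\}$ is annihilated by this factor; together with the first two factors, $c$ vanishes on $\{0\}\cup C_1\cup C_2$ and lies in $\C_{(3,n,5,0)}$. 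Hence $d=8$ for all odd $\ell\ge 3$.

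The hard part will be the weight-$8$ codeword when $\ell$ is \emph{even}, where $4\nmid n$ and the product construction breaks down. Any equal-coefficient pair $x^{j},x^{j'}$ cancels at $\beta^2$ only if $\beta^{2(j-j')}=\beta^{n/2}$, i.e.\ $2(j-j')\equiv n/2\pmod n$, which is insoluble once $n/2$ is odd; and any two-term factor forcing vanishing on $C_2$ is forced to be a multiple of $x^{n/2}-1$, which collides with the factor $1+x^{n/2}$ used for $C_1$ and collapses the product to $0$ modulo $x^{n}-1$. A weight-$8$ word must therefore be genuinely non-palindromic. Writing $z_i=\beta^{e_i}$ for eight exponents with signs $+,+,+,+,-,-,-,-$, membership in $\C_{(3,n,5,0)}$ amounts to matching the first two power sums of the two $4$-element multisets of norm-$1$ elements, $\sum z_i=\sum w_j$ and $\sum z_i^{2}=\sum w_j^{2}$ (the condition at $1$ being automatic); the identity $t_{2a}=t_a^{2}-2$ with $t_a=\beta^{a}+\beta^{-a}\in\gf(q^\ell)$ is what links the two scales. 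This is an underdetermined system that should admit solutions for every even $\ell$, to be produced either by an explicit parametrisation or by a counting/character-sum estimate over the norm-$1$ subgroup $\langle\beta\rangle$ of order $n$. Securing such a uniform weight-$8$ family for even $\ell$ is the main obstacle; every other step reduces to a routine root check.
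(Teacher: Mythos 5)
The statement you are attacking is presented in the paper as a \emph{conjecture} only: the authors give no proof, just experimental evidence, so there is no paper argument to compare against. Your strategy---keep the paper's lower bound $d \geq 2(\delta-1)$ from Theorem~\ref{thm-antiBCHbound} and exhibit matching low-weight codewords---is the right one, and the parts you actually completed are correct. For the first item, $c(x)=(1-x)(1+x^{n/2})$ has weight $4$, vanishes at $1$, and vanishes at $\beta^{k}$ for every $k \in C_1$, because every element $\pm 3^{j} \bmod n$ of $C_1$ is odd and $\beta^{n/2}=-1$; hence $c(x)$ is divisible by the generator polynomial $(x-1)\m_1(x)$ and $d=4$ for all $\ell \geq 3$. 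For odd $\ell$ we have $4 \mid n=3^{\ell}+1$, every element $\pm 2\cdot 3^{j} \bmod n$ of $C_2$ is $\equiv 2 \pmod{4}$, and $(1-x)(1+x^{n/2})(1+x^{n/4})$ is a weight-$8$ codeword of $\C_{(3,n,5,0)}$, giving $d=8$ there. These two computations, combined with the paper's lower bound, would settle the first conjecture completely and half of the second---strictly more than the paper establishes.

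The genuine gap is the second item for even $\ell$ (e.g.\ $\ell=4$, the $[82,65,8]$ code). There $n \equiv 2 \pmod 4$, your product construction is unavailable (your observation that $2(j-j') \equiv n/2 \pmod n$ has no solution when $n/2$ is odd is correct, but it only shows \emph{this} construction fails), and what you offer instead is a plan, not a proof: asserting that the power-sum system $\sum_i z_i=\sum_j w_j$, $\sum_i z_i^{2}=\sum_j w_j^{2}$ over the order-$n$ subgroup is underdetermined and ``should admit solutions for every even $\ell$'' carries no force until either an explicit parametrisation is produced or a counting/character-sum estimate is actually carried out---and such an estimate would have to be checked to give a \emph{nonzero} count of solutions with all eight exponents distinct for every even $\ell \geq 4$, not merely for $\ell$ large. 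A further small inaccuracy: a weight-$8$ word over $\gf(3)$ with coefficient sum zero need not split its signs $4{+}4$; the splits $7{+}1$ and $1{+}7$ also satisfy the condition at $x=1$, so your symmetric ansatz silently discards part of the solution space. As it stands, you have a correct proof of the first item for all $\ell \geq 3$ and of the second item for odd $\ell$; the second item for even $\ell$, and hence the full statement, remains unproven.
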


\section{Reversible cyclic codes of length $n=q^m-1$ over $\gf(q)$}

Throughout this section, let $n=q^m-1$ for a positive integer $m$, and let $\alpha$ be a generator
of $\gf(q^m)^*$. Our task in this section is to construct reversible cyclic codes with some known
cyclic codes. Our idea is to construct reversible cyclic codes with some known families of cyclic
codes $\C$, which are not reversible. Given a cyclic code $\C$, we wish to find out conditions
under which the even-like subcode of $\C \cap \C^\perp$ or the code $\C \cap \C^\perp$ is reversible,
where the even-like subcode of $\C \cap \C^\perp$ is defined as
$$
\{c(x) \in \C \cap \C^\perp: c(1)=0\}.
$$
A known class of reversible
cyclic codes are the Melas's double-error correcting binary codes with parameters $[2^m-1, 2^m-2m,
d\geq 5]$ \cite[p. 206]{MS77}.


We now employ the punctured generalised Reed-Muller codes to construct reversible cyclic codes with
the construction idea above. To this end, we need to do some preparations.

For any $i$ with $0 \leq i \leq n-1$,
define $\omega_q(i)=\sum_{j=0}^{m-1} i_j$, where $i=\sum_{j=0}^{m-1} i_j q^j$ is the
$q$-adic expansion of $i$ and each $0 \leq i_j \leq q-1$.  We define
\begin{eqnarray}\label{eqn-Index2}
I_{(q,n,t)}=\{1 \leq i \leq n-1: 1 \leq \omega_q(j) \leq t \}
\end{eqnarray}
and
$$
-I_{(q,n,t)}=\{n-a: a \in I_{(q,n,t)}\},
$$
where $t \geq 1$.

\begin{lemma}\label{lem-RMc}
If $1 \leq t \leq \lceil (q-1)m/2 \rceil-1$, then $I_{(q,n,t)} \cap (-I_{(q,n,t)}) = \emptyset$.
\end{lemma}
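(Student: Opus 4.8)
The plan is to reduce the disjointness assertion to a single clean identity relating the $q$-weight of an integer $a$ and that of its complement $n-a$. First I would record that, because $n=q^m-1$, the $q$-adic expansion of $n$ is $\sum_{j=0}^{m-1}(q-1)q^j$; that is, all $m$ of its digits equal $q-1$. Consequently, for any $a=\sum_{j=0}^{m-1}a_jq^j$ with $0\leq a_j\leq q-1$ and $1\leq a\leq n-1$, the subtraction $n-a$ requires no borrowing: each digit of $n-a$ is simply $q-1-a_j$, which already lies in $\{0,1,\dots,q-1\}$. Summing the digits then yields the key identity
$$
\omega_q(a)+\omega_q(n-a)=(q-1)m,
$$
valid for every $a$ in the range $1\leq a\leq n-1$ (note $n-a$ is again a valid index, ranging over $1,\dots,n-1$).

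With this identity in hand, the lemma follows by a short contradiction argument. Suppose $I_{(q,n,t)}\cap(-I_{(q,n,t)})\neq\emptyset$ and let $b$ lie in the intersection. Membership $b\in I_{(q,n,t)}$ gives $\omega_q(b)\leq t$, while $b\in -I_{(q,n,t)}$ means $b=n-a$ for some $a\in I_{(q,n,t)}$, so that $\omega_q(a)\leq t$ as well. Applying the identity to this $a$ (using $b=n-a$) gives $\omega_q(a)+\omega_q(b)=(q-1)m$, and hence $(q-1)m\leq 2t$.

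It then remains to check that this contradicts the hypothesis $t\leq\lceil(q-1)m/2\rceil-1$. Whether $(q-1)m$ is even or odd, one has $\lceil(q-1)m/2\rceil-1<(q-1)m/2$, so the hypothesis forces $2t<(q-1)m$, contradicting $2t\geq(q-1)m$; therefore the intersection must be empty. I expect essentially no serious obstacle in this argument. The only point genuinely needing care is the \emph{no-borrowing} observation that makes the weight identity \emph{exact}: it is special to the modulus $n=q^m-1$ and would fail for a general $n$. The closing parity check that the ceiling bound is strict is routine.
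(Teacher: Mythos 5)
Your proof is correct and follows essentially the same route as the paper's: both rest on the exact digit-sum identity $\omega_q(a)+\omega_q(n-a)=(q-1)m$, which holds because all $m$ digits of $n=q^m-1$ equal $q-1$, and both then note that two weights each at most $t\leq\lceil (q-1)m/2\rceil-1$ cannot sum to $(q-1)m$. Your write-up merely makes explicit two points the paper leaves implicit, namely the no-borrowing justification of the identity and the parity check $\lceil (q-1)m/2\rceil-1<(q-1)m/2$.
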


\begin{proof}
Note that
$$n=q^m-1=(q-1)q^{m-1} + (q-1)q^{m-2} + \cdots + (q-1)q+(q-1)q^0.
$$
Hence, we have $\omega_q(i)+\omega_q(n-i)=m(q-1)$ for all $i \in \Z_n$.

By this identity, if $i \in \Z_n$ and $\omega_q(i) \leq \lceil (q-1)m/2 \rceil-1$, then
$\omega_q(n-i) > \lceil (q-1)m/2 \rceil-1$.
The desired conclusion then follows.
\end{proof}

Let $\ell=\ell_1(q-1)+\ell_0<q(m-1)$, where $\ell_0 < q-1$.
The $\ell$-th order
\emph{punctured generalized Reed-Muller code}\index{punctured generalized Reed-Muller code}
$\cR_q(\ell, m)^*$ over $\gf(q)$ is the cyclic code of length $n=q^m-1$ with generator polynomial
\begin{eqnarray}\label{eqn-generatorpolyPGRMcode}
g_R(x):= \prod_{\myatop{1 \leq j \leq n-1}{ \omega_q(j) < (q-1)m-\ell}} (x - \alpha^j),
\end{eqnarray}
where $\alpha$ is a generator of $\gf(q^m)^*$. It is easily seen that $g_R(x)$ is a polynomial over
$\gf(q)$.

By definition, we have
$$
(q-1)m-\ell = (m-\ell_1-1)(q-1)+(q-1-\ell_0).
$$
Let $h$ be the smallest integer with $\omega_q(h)=(q-1)m-\ell.$ Then
\begin{eqnarray*}
h &=& (q-1-\ell_0)q^{m-\ell_1-1} + \sum_{i=0}^{m-\ell_1-2} (q-1)q^i \\
   &=& (q-\ell_0)q^{m-\ell_1-1}-1.
\end{eqnarray*}
By the construction of the code $\cR_q(\ell, m)^*$, every integer $u$ with $0 < u <h$ satisfies $\omega_q(u)
<(q-1)m-\ell.$ Hence, the elements $\alpha^1, \alpha^2, \ldots, \alpha^{h-1}$ are all roots
of the generator polynomial $g_R(x)$ of (\ref{eqn-generatorpolyPGRMcode}). Consequently, the minimum distance
of $\cR_q(\ell, m)^*$ is at least $h$. It was proved in \cite[Theorem 5.4.1]{AssmusKey92} that the minimum
distance of $\cR_q(\ell, m)^*$ equals $h$ and the dimension of the code $\cR_q(\ell, m)^*$ is
equal to
\begin{eqnarray}\label{eqn-dimPRGM}
\sum_{i=0}^\ell \sum_{j=0}^{m} (-1)^j \binom{m}{j} \binom{i-jq+m-1}{i-jq}.
\end{eqnarray}

Let $g_R^*(x)$ denote the reciprocal of $g_R(x)$ defined above. Set
$$
g(x)=(x-1)\lcm(g_R(x), g_R^*(x)).
$$
Let $\cR_{(q, m, \ell)}$ denote the cyclic code of length $n$ over $\gf(q)$ with generator polynomial
$g(x)$.
We have then the following theorem.

\begin{theorem}\label{thm-bCode}
If $q(m-1)-2 \geq \ell \geq  1+(q-1)m-\lceil (q-1)m/2 \rceil$, then the code $\cR_{(q, m, \ell)}$ is reversible
and has minimum distance
$$d \geq 2((q-\ell_0)q^{m-\ell_1-1}-1)$$
and dimension
\begin{eqnarray}
2 \sum_{i=0}^\ell \sum_{j=0}^{m} (-1)^j \binom{m}{j} \binom{i-jq+m-1}{i-jq} -q^m.
\end{eqnarray}

\end{theorem}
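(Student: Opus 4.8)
The plan is to prove all three assertions by working with the exponent set of the roots of $g(x)=(x-1)\lcm(g_R(x),g_R^*(x))$ inside the group of $n$-th roots of unity, $n=q^m-1$. Write $A=\{j: 1\le j\le n-1,\ \omega_q(j)<(q-1)m-\ell\}$ for the exponent set of the roots of $g_R$, so that $A=I_{(q,n,t)}$ with $t=(q-1)m-\ell-1$; note $A$ is a union of $q$-cyclotomic cosets because $\omega_q$ is invariant under multiplication by $q$ modulo $n$, which is why $g_R$, and hence $g$, lies in $\gf(q)[x]$. Since the reciprocal of a monic polynomial with nonzero roots $\{\alpha^j\}$ is the monic polynomial with roots $\{\alpha^{-j}\}$, the exponent set of $g_R^*$ is $-A=\{n-j: j\in A\}$. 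For reversibility I would then observe that the exponent set of $g$, namely $\{0\}\cup A\cup(-A)$, is closed under negation modulo $n$: the point $0$ is fixed, while $A\cup(-A)$ is manifestly symmetric. By the third equivalent condition of Theorem \ref{thm-ReversibleCyclicCodes} this says $\beta^{-1}$ is a root of $g$ whenever $\beta$ is, hence $g$ is self-reciprocal and $\cR_{(q,m,\ell)}$ is reversible (one also notes $g(0)\ne 0$, as every root of $g$ is a root of unity).

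For the dimension, the crux is to turn the lower bound on $\ell$ into the disjointness $A\cap(-A)=\emptyset$. The hypothesis $\ell\ge 1+(q-1)m-\lceil(q-1)m/2\rceil$ is equivalent to $t\le\lceil(q-1)m/2\rceil-2$, so in particular $t\le\lceil(q-1)m/2\rceil-1$ and Lemma \ref{lem-RMc} gives $I_{(q,n,t)}\cap(-I_{(q,n,t)})=\emptyset$, i.e.\ $A\cap(-A)=\emptyset$. Consequently $g_R$ and $g_R^*$ are coprime and $0\notin A\cup(-A)$, whence $\deg g=1+2\deg g_R$. Writing $D$ for the dimension (\ref{eqn-dimPRGM}) of $\cR_q(\ell,m)^*$, we have $\deg g_R=n-D$, so the dimension of $\cR_{(q,m,\ell)}$ is $n-\deg g=(n-1)-2(n-D)=2D-(n+1)=2D-q^m$, which is the claimed formula.

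For the minimum distance I would exploit the run of consecutive roots created by the symmetrisation. With $h=(q-\ell_0)q^{m-\ell_1-1}-1$, the elements $\alpha^1,\dots,\alpha^{h-1}$ are roots of $g_R$, so their reciprocals $\alpha^{-1},\dots,\alpha^{-(h-1)}$ are roots of $g_R^*$, and $\alpha^0$ is a root coming from the factor $(x-1)$. Hence $g$ vanishes at $\alpha^i$ for all $i$ in the block of $2h-1$ consecutive integers $\{n-(h-1),\dots,n-1,0,1,\dots,h-1\}$ modulo $n$, and the BCH bound yields $d\ge (2h-1)+1=2h=2((q-\ell_0)q^{m-\ell_1-1}-1)$.

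I expect the main difficulty to be bookkeeping rather than conceptual: pinning down the exact threshold $t=(q-1)m-\ell-1\le\lceil(q-1)m/2\rceil-1$ that licenses the application of Lemma \ref{lem-RMc}, and checking that the code is nondegenerate so that the degree and distance counts are meaningful (here the upper bound $\ell\le q(m-1)-2$, together with the standing requirement $\ell<q(m-1)$ from the definition of $\cR_q(\ell,m)^*$, keeps $h$ and the consecutive block of size $2h-1$ within range). Once $A\cap(-A)=\emptyset$ is secured, the self-reciprocity, the degree-based dimension count, and the BCH distance argument are all routine.
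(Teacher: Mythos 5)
Your proposal is correct and follows essentially the same route as the paper's own proof: the hypothesis on $\ell$ is fed into Lemma \ref{lem-RMc} to get $A\cap(-A)=\emptyset$, hence $g(x)=(x-1)g_R(x)g_R^*(x)$ and the dimension count $2D-q^m$, while the run $\{n-(h-1),\dots,n-1,0,1,\dots,h-1\}$ of $2h-1$ consecutive root exponents gives $d\ge 2h$ by the BCH bound. The only difference is that you spell out details the paper leaves implicit (the explicit translation $t=(q-1)m-\ell-1\le\lceil(q-1)m/2\rceil-2$ and the self-reciprocity argument for reversibility), which is fine.
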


\begin{proof}
When $q(m-1)-2 \geq \ell \geq  1+(q-1)m-\lceil (q-1)m/2 \rceil$,
it follows from Lemma \ref{lem-RMc} that $g_R(x)$ and $g_R^*(x)$ have no common roots. Consequently,
$g(x)=(x-1)g_R(x)g_R^*(x)$. Therefore,
$$
\deg(g(x))=2\deg(g_R(x))+1.
$$
The desired conclusion on the dimension of the code then follows from
the dimension of $\cR_q(\ell, m)^*$, which was given in (\ref{eqn-dimPRGM}). In this case, $g(x)$
has the roots $\alpha^i$ for all $i$ in the set
$$
\{n-(h-1), n-(h-2), \cdots, n-2, n-1, 0, 1, 2, \cdots, h-2, h-1\}.
$$
The desired conclusion on the minimum distance then follows from the BCH bound.
\end{proof}

The first part of Theorem \ref{thm-bCode} can be simplified into the following.

\begin{theorem}
When $q=2$ and $m-2 \geq \ell \geq  m- \lfloor (m-2)/2 \rfloor$, the code $\cR_{(q, m, \ell)}$
is a reversible cyclic code and has parameters
$$\left[2^m-1, \ 2^m-2\sum_{j=0}^{m-1-\ell} \binom{m}{j}, \ d \geq 2(2^{m-\ell}-1)\right].$$
\end{theorem}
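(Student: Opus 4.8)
The plan is to derive this statement as the $q=2$ specialization of Theorem~\ref{thm-bCode}, so the only genuine work is (i) checking that the stated range for $\ell$ lies inside the range required by Theorem~\ref{thm-bCode}, and (ii) simplifying the minimum-distance bound and the dimension formula~\eqref{eqn-dimPRGM} into the claimed closed forms. The key observation that drives both simplifications is that when $q=2$ the decomposition $\ell=\ell_1(q-1)+\ell_0$ with $0\le \ell_0<q-1$ forces $\ell_0=0$ and $\ell_1=\ell$, so I would record this at the outset.

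First I would verify the hypothesis. For $q=2$ the requirement $q(m-1)-2\ge \ell\ge 1+(q-1)m-\lceil (q-1)m/2\rceil$ of Theorem~\ref{thm-bCode} reads $2m-4\ge \ell\ge 1+m-\lceil m/2\rceil=1+\lfloor m/2\rfloor$, whereas the present hypothesis is $m-2\ge \ell\ge m-\lfloor (m-2)/2\rfloor$. A short case check on the parity of $m$ shows $m-\lfloor(m-2)/2\rfloor\ge 1+\lfloor m/2\rfloor$ (with equality for even $m$ and a surplus of one for odd $m$) and $m-2\le 2m-4$ for $m\ge 2$; hence the stated range sits inside the range of Theorem~\ref{thm-bCode}. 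Consequently $\cR_{(2,m,\ell)}$ is reversible, and Lemma~\ref{lem-RMc} applies with $t=m-\ell-1$, giving $g(x)=(x-1)g_R(x)g_R^*(x)$ with $g_R$ and $g_R^*$ coprime, exactly as in the proof of Theorem~\ref{thm-bCode}.

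For the minimum distance I would substitute $\ell_0=0,\ \ell_1=\ell$ into the bound $d\ge 2\big((q-\ell_0)q^{m-\ell_1-1}-1\big)$, which collapses at once to $d\ge 2\big(2\cdot 2^{m-\ell-1}-1\big)=2(2^{m-\ell}-1)$. For the dimension I would avoid the double sum in~\eqref{eqn-dimPRGM} and instead count $\deg g_R$ directly: by~\eqref{eqn-generatorpolyPGRMcode} with $q=2$, the roots of $g_R$ are the $\alpha^j$ with $1\le j\le n-1$ and $\omega_2(j)<m-\ell$, i.e. $1\le \omega_2(j)\le m-\ell-1$; since the number of integers in $[0,2^m-1]$ of binary weight $w$ equals $\binom{m}{w}$ (and $2^m-1$, of weight $m$, is automatically excluded), this gives $\deg g_R=\sum_{w=1}^{m-\ell-1}\binom{m}{w}$. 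Using $g=(x-1)g_Rg_R^*$ with $\deg g_R^*=\deg g_R$, we get $\deg g=2\deg g_R+1$, so
\[
\dim \cR_{(2,m,\ell)} = n-\deg g = (2^m-1)-\Big(2\sum_{w=1}^{m-\ell-1}\binom{m}{w}+1\Big) = 2^m-2\sum_{j=0}^{m-1-\ell}\binom{m}{j},
\]
where the final step absorbs the term $-2$ into $2\binom{m}{0}$. This matches the asserted dimension; consistency with Theorem~\ref{thm-bCode} is then just the remark that $2^m-\sum_{j=0}^{m-1-\ell}\binom{m}{j}$ is the value of~\eqref{eqn-dimPRGM} at $q=2$.

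The routine but slightly fiddly step is the parity case analysis in the range inclusion; everything else is immediate once $\ell_0=0,\ \ell_1=\ell$ is recorded. The one point to watch is that the identity $\deg g=2\deg g_R+1$ relies on $g_R$ and $g_R^*$ being coprime, but this is already supplied by Lemma~\ref{lem-RMc} under the verified range, so no additional argument is needed.
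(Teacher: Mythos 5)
Your proposal is correct and is essentially the proof the paper intends, since the paper derives this theorem precisely as the $q=2$ specialization of Theorem~\ref{thm-bCode} (range check, Lemma~\ref{lem-RMc}, $g=(x-1)g_Rg_R^*$, BCH bound). Your one variation is a welcome simplification: instead of evaluating the alternating double sum \eqref{eqn-dimPRGM} at $q=2$, you compute $\deg g_R=\sum_{w=1}^{m-\ell-1}\binom{m}{w}$ directly by counting binary weights, which reaches the stated binomial form of the dimension without needing any combinatorial identity.
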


\begin{example}
Let $m=5$ and let $\alpha$ be a generator of $\gf(2^5)^*$ with $\alpha^5 + \alpha^2  + 1=0$.
Then $\cR_{(2, 5, 3)}$ has parameters $[31, 20, 6]$, and generator polynomial
\begin{eqnarray*}
g(x)=x^{11} + x^{10} + x^9 + x^7 + x^6 + x^5 + x^4 + x^2 + x + 1.
\end{eqnarray*}
$\cR_{(2, 5, 3)}$ has the best possible parameters for cyclic codes {\rm \cite[p. 250]{Dingbk15}}.
Its dual code has parameters $[31, 11, 10]$, while the best binary cyclic
code of length $31$ and dimension $11$ has minimum distance $11$ {\rm \cite[p. 250]{Dingbk15}}.
\end{example}

\begin{example}
Let $m=6$ and let $\alpha$ be a generator of $\gf(2^6)^*$ with $\alpha^6 + \alpha^4 + \alpha^3 + \alpha + 1=0$.
Then $\cR_{(2, 6, 4)}$ has parameters $[63, 50, 6]$, and generator polynomial
$$
g(x)=x^{13} + x^9 + x^7 + x^6 + x^4 + 1.
$$
$\cR_{(2, 6, 4)}$ has the best possible parameters for cyclic codes {\rm \cite[p. 260]{Dingbk15}}.
Its dual code has parameters $[63, 13, 24]$, and is the best possible linear code {\rm \cite[p. 258]{Dingbk15}}.
\end{example}

\begin{example}
Let $m=6$ and let $\alpha$ be a generator of $\gf(2^6)^*$ with $\alpha^6 + \alpha^4 + \alpha^3 + \alpha + 1=0$.
Then $\cR_{(2, 6, 3)}$ has parameters $[63, 20, 14]$, and generator polynomial
\begin{eqnarray*}
g(x) &=& x^{43} + x^{42} + x^{40} + x^{37} + x^{36} + x^{35} + x^{34} + x^{33} + x^{29} + x^{25} + \\
     & & x^{22} + x^{21} + x^{18} + x^{14} + x^{10} + x^9 + x^8 + x^7 + x^6 + x^3 + x + 1.
\end{eqnarray*}
Its dual code has parameters $[63, 43, 6]$, which are the best possible parameters
{\rm \cite[p. 260]{Dingbk15}}.
\end{example}

Note that the punctured generalized Reed-Muller codes $\cR_q(\ell, m)^*$ are in general not BCH codes.
So are the reversible codes $\cR_{(q, m, \ell)}$. The following problem is open and interesting.

\begin{problem}
Is it true that the minimum distance $d=2((q-\ell_0)q^{m-\ell_1-1}-1)$ for the codes $\cR_{(q, m, \ell)}$
of Theorem \ref{thm-bCode}?
\end{problem}

\section{Two classes of reversible BCH cyclic codes of length $(q^m-1)/(q-1)$ over $\gf(q)$}

In this section, we construct a class of reversible cyclic codes from a family of projective BCH codes.
Throughout this section, $n=(q^m-1)/(q-1)$ and $q \geq 3$. We first do some preparations.

Let $\delta \geq 2$ be a positive integer. Define
$$
J_{(q,n,\delta)}=\cup_{1 \leq i \leq \delta-1} C_i
$$
and
$$
-J_{(q,n,\delta)}=\{n-a: a \in J_{(q,n,\delta)}\}.
$$

We will need the following conclusion.

\begin{lemma}\label{lem-BCHdisj2}
Let $\delta = q^e$, where $e = \lfloor (m-1)/2 \rfloor$.
Then $J_{(q,n,\delta)} \cap (-J_{(q,n,\delta)}) = \emptyset$.
\end{lemma}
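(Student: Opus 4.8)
The plan is to convert the set-theoretic disjointness into a non-existence statement for a single congruence, and then kill that congruence with a symmetry plus a size estimate. First I would record the basic arithmetic: since $(q-1)n = q^m-1$ we have $q^m \equiv 1 \pmod n$, so every $q$-cyclotomic coset modulo $n$ has the form $C_a = \{a q^j \bmod n : 0 \le j \le m-1\}$. Because $-C_a = C_{n-a}$, both $J_{(q,n,\delta)}$ and $-J_{(q,n,\delta)}$ are unions of cosets, and unwinding the definitions shows that $J_{(q,n,\delta)} \cap (-J_{(q,n,\delta)}) \neq \emptyset$ is equivalent to the existence of an integer $0 \le j \le m-1$ and $i,i' \in \{1,\dots,q^e-1\}$ with
$$
i q^j \equiv -i' \pmod n, \qquad\text{i.e.}\qquad n \mid (i q^j + i').
$$
Thus the entire lemma reduces to showing this congruence has no solution.

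Next I would exploit a symmetry to tame the exponent $j$, which is the step that circumvents the wraparound difficulty. Multiplying $i q^j + i' \equiv 0$ by $q^{m-j}$ and using $q^m \equiv 1 \pmod n$ yields $i' q^{m-j} + i \equiv 0 \pmod n$, a relation of exactly the same shape with $(i,i')$ swapped and $j$ replaced by $m-j$. Since one of $j$, $m-j$ is at most $\lfloor m/2 \rfloor$, I may assume without loss of generality that $0 \le j \le \lfloor m/2 \rfloor$.

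Finally I would close with a size estimate. Here the arithmetic is tight and rests on the identity $e + \lfloor m/2 \rfloor = m-1$, valid for both parities of $m$, which gives $e+j \le m-1$. Then, using $i,i' \le q^e-1$,
$$
0 < i q^j + i' < q^{e+j} + q^e \le q^{m-1} + q^e,
$$
whereas $n = 1 + q + \cdots + q^{m-1}$ satisfies $n - q^{m-1} = 1 + q + \cdots + q^{m-2} > q^e$ because $e \le m-2$. Hence $0 < i q^j + i' < n$, contradicting $n \mid (i q^j + i')$.

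The hard part will be the ``large $j$'' regime, where $i q^j$ wraps around modulo $n$ and a naive bound on $i q^j \bmod n$ is not available; the clean way past it is the multiply-by-$q^{m-j}$ symmetry above, which avoids ever computing the wrapped residue. A secondary subtlety is that the final inequality only barely holds, so the precise value $e = \lfloor (m-1)/2 \rfloor$ (and hence $\delta = q^e$) is essentially forced; I would also dispatch the degenerate cases $m \le 2$ separately, where $J_{(q,n,\delta)}$ is empty and the claim is immediate.
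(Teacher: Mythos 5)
Your proposal is correct and takes essentially the same route as the paper's own proof: both reduce disjointness to the non-existence of a solution of $i q^{j} + i' \equiv 0 \pmod{n}$ with $1 \le i, i' \le q^e-1$, both use multiplication by $q^{m-j}$ to assume the exponent is at most $\lfloor m/2 \rfloor$ (the paper's bound $\lceil (m-1)/2 \rceil$ is the same number), and both finish with the size estimate $0 < i q^{j} + i' < n$. The only difference is that you write out the final inequality explicitly, using $e + \lfloor m/2 \rfloor = m-1$, where the paper simply says ``one can verify.''
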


\begin{proof}
Suppose on the contrary that $J_{(q,n,\delta)} \cap (-J_{(q,n,\delta)}) \neq \emptyset$.
Then there exist $a$, $1 \leq i \leq \delta-1$ and $1 \leq j \leq \delta-1$ such that
$$
a \in C_i\cap(-C_j),
$$
which implies that
$$
a \equiv iq^{\ell_1} \equiv -j q^{\ell_2} \pmod{n},
$$
where $0 \leq \ell_1 \leq m-1$ and $0 \leq \ell_2 \leq m-1$. Without loss of generality,
assume that $\ell_2 \geq \ell_1$. Then
$$
i+jq^{\ell_2 -\ell_1} \equiv 0 \pmod{n}.
$$
Let $\ell=\ell_2-\ell_1$. Then $0 \leq \ell \leq m-1$. We can further assume that
$\ell \leq \lceil (m-1)/2 \rceil$. Otherwise, we have
$$
iq^{m-\ell}+j \equiv 0 \pmod{n},
$$
where $m-\ell \leq \lceil (m-1)/2 \rceil$.

Since $\ell \leq \lceil (m-1)/2 \rceil$ by assumption and
$$
1 \leq i \leq \delta-1=q^{\lfloor (m-1)/2 \rfloor}-1 \mbox{ and }
1 \leq j \leq \delta-1=q^{\lfloor (m-1)/2 \rfloor}-1,
$$
one can verify that
$$
0 < i+jq^{\ell} < n,
$$
which shows that $i+jq^{\ell} \not\equiv 0 \pmod{n}$. This contradiction proves the lemma.
\end{proof}

One of the main results of this section is the following.

\begin{theorem}\label{thm-march24}
Let $\delta$ be an integer with $2 \leq \delta \leq q^{\lfloor (m-1)/2 \rfloor}$. Then the reversible BCH code
$\C_{(q,n,2\delta, 1-\delta)}$ is reversible and
has length $n=(q^m-1)/(q-1)$, dimension
$$
k=n-1-2m \left\lceil \frac{(\delta-1)(q-1)}{q} \right\rceil,
$$
and minimum distance $d \geq 2\delta$.
\end{theorem}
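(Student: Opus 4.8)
The plan is to treat the three assertions—reversibility, minimum distance, and dimension—separately, disposing of the first two quickly and reserving the real work for the dimension count.

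First I would record the defining set. By \eqref{eqn-BCHgeneratorPolyn}, with $b=1-\delta$ and designed distance $2\delta$, the generator polynomial is $g=\lcm(\m_{1-\delta}(x),\ldots,\m_{\delta-1}(x))$, so its set of root exponents is $Z=\bigcup_{i=1-\delta}^{\delta-1} C_i$. Reversibility is then immediate from Theorem \ref{thm-revBCHcodes1} applied with $t=\delta-1$ (which gives $b=-t=1-\delta$ and designed distance $2t+2=2\delta$). For the minimum distance I would observe that $Z$ contains the $2\delta-1$ consecutive integers $1-\delta,2-\delta,\ldots,\delta-1$ (each $i$ lies in $C_i$), so the BCH bound yields $d\geq 2\delta$.

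The dimension is the substantive part and amounts to computing $\deg g=|Z|$. Writing $J=J_{(q,n,\delta)}=\bigcup_{i=1}^{\delta-1}C_i$, I would split $Z=C_0\cup J\cup(-J)$. Since $0\notin J\cup(-J)$ and, by Lemma \ref{lem-BCHdisj2} (which gives $J\cap(-J)=\emptyset$ for $\delta=q^{\lfloor (m-1)/2\rfloor}$, hence for every smaller $\delta$ via the inclusion $J_{(q,n,\delta)}\subseteq J_{(q,n,q^{\lfloor (m-1)/2\rfloor})}$), these three pieces are pairwise disjoint, I get $|Z|=1+2|J|$, the summand $1$ accounting for the factor $x-1$.

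It remains to show $|J|=m\lceil(\delta-1)(q-1)/q\rceil$, which is where I would invoke Lemma \ref{lem-AKS}. First note $\ord_n(q)=m$: indeed $q^m\equiv 1\pmod n$, while any proper divisor $d\mid m$ satisfies $q^d-1<q^{m-1}<n$, ruling out $q^d\equiv 1$. Next I would verify that Lemma \ref{lem-AKS} applies throughout the range: its hypothesis holds because $q^{\lfloor m/2\rfloor}\le q^{m-1}<n\le q^m-1$, and its bound $nq^{\lceil m/2\rceil}/(q^m-1)=q^{\lceil m/2\rceil}/(q-1)$ exceeds $\delta-1\le q^{\lfloor (m-1)/2\rfloor}-1$ (a short comparison in the two parities of $m$). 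Consequently every $C_i$ with $1\le i\le\delta-1$ has size $m$, and every such $i$ with $q\nmid i$ is a coset leader. Combined with the elementary fact that a coset leader is never a multiple of $q$ (if $q\mid\ell$ then $\ell/q\equiv \ell q^{m-1}\pmod n$ lies in $C_\ell$ and is smaller), the coset leaders in $\{1,\ldots,\delta-1\}$ are exactly the non-multiples of $q$, of which there are $(\delta-1)-\lfloor(\delta-1)/q\rfloor=\lceil(\delta-1)(q-1)/q\rceil$. Each such leader indexes a distinct size-$m$ coset contained in $J$, and conversely the leader of every $C_i$ in $J$ lies in this range, so $|J|=m\lceil(\delta-1)(q-1)/q\rceil$. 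Hence $|Z|=1+2m\lceil(\delta-1)(q-1)/q\rceil$, and $k=n-|Z|$ gives the claimed dimension.

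I expect the main obstacle to be purely this bookkeeping rather than any deep structural difficulty: the genuinely delicate steps are the two range verifications—that Lemma \ref{lem-AKS} covers all of $\{1,\ldots,\delta-1\}$ and that Lemma \ref{lem-BCHdisj2} transfers to every admissible $\delta$—together with the bijection between coset leaders in the range and the distinct cosets appearing in $J$. Once these are in place, the identity $(\delta-1)-\lfloor(\delta-1)/q\rfloor=\lceil(\delta-1)(q-1)/q\rceil$ finishes the count, since Lemmas \ref{lem-AKS} and \ref{lem-BCHdisj2} already encapsulate the hard cyclotomic-coset analysis.
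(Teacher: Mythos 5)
Your proof is correct and takes essentially the same route as the paper's: the paper computes $\deg(g_u)$ for the narrow-sense code via Lemma~\ref{lem-AKS}, uses Lemma~\ref{lem-BCHdisj2} to conclude $g(x)=(x-1)g_u(x)g_u^*(x)$, and applies the BCH bound, which is precisely your decomposition $Z=C_0\cup J\cup(-J)$ expressed at the level of generator polynomials rather than defining sets. The only difference is that you make explicit several details the paper leaves implicit, namely the verification that $\ord_n(q)=m$ and that the range hypothesis of Lemma~\ref{lem-AKS} covers all of $\{1,\ldots,\delta-1\}$, and the monotonicity argument transferring Lemma~\ref{lem-BCHdisj2} from $\delta=q^{\lfloor (m-1)/2\rfloor}$ to every smaller $\delta$.
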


\begin{proof}
Let $g_u(x)$ denote the generator polynomial of the BCH code $\C_{(q, n, \delta,1)}$. It follows from
Lemma \ref{lem-AKS} that
$$
\deg(g_u(x))=m \left\lceil \frac{(\delta-1)(q-1)}{q} \right\rceil.
$$
Hence, $\C_{(q,n,2\delta, 1-\delta)}$ is reversible.
By definition, $\C_{(q,n,2\delta, 1-\delta)}$ has generator polynomial
$$
g(x)=\lcm(x-1, g_u(x), g_u^*(x)),
$$
where $g_u^*(x)$ is the reciprocal of $g_u(x)$. Notice that $2 \leq \delta \leq q^{\lfloor (m-1)/2 \rfloor}$.
By Lemma \ref{lem-BCHdisj2}, we deduce that
$$
g(x)=(x-1) g_u(x) g_u^*(x).
$$
The conclusion on the dimension of $\C_{(q,n,2\delta, 1-\delta)}$ then follows. The lower bound on the
minimum distance comes from the BCH bound.
\end{proof}

\begin{example}
The following are examples of the code of Theorem \ref{thm-march24}.
\begin{itemize}
\item When $(q, m, \delta)=(3,4,2)$, $\C_{(q,n,2\delta, 1-\delta)}$ has parameters $[40,31,4]$.
\item When $(q, m, \delta)=(3,4,3)$, $\C_{(q,n,2\delta, 1-\delta)}$ has parameters $[40,23,8]$,
      which are the best possible for cyclic codes {\rm \cite[p. 306]{Dingbk15}}.
\item When $(q, m, \delta)=(5,3,2)$, $\C_{(q,n,2\delta, 1-\delta)}$ has parameters $[31,24,5]$,
which are the best parameters for linear codes according to the Database.
\item When $(q, m, \delta)=(5,3,3)$, $\C_{(q,n,2\delta, 1-\delta)}$ has parameters $[31,18,8]$.
\item When $(q, m, \delta)=(5,3,4)$, $\C_{(q,n,2\delta, 1-\delta)}$ has parameters $[31,12,12]$.
\item When $(q, m, \delta)=(5,3,5)$, $\C_{(q,n,2\delta, 1-\delta)}$ has parameters $[31,6,19]$.
\item When $(q, m, \delta)=(4,4,3)$, $\C_{(q,n,2\delta, 1-\delta)}$ has parameters $[85,68,6]$.
\end{itemize}
\end{example}

\begin{lemma}\label{lem-Join2}
Let $m$ be a positive even integer and $\delta=q^{m/2}$. Define $\ell=(q^{m/2}-1)/(q-1)$. Then
$s \ell$ is a coset leader for each $1 \leq s \leq q-1$, $|C_{s\ell}|=m$ and $C_{s\ell}=-C_{s\ell}$. In addition,
$$
J_{(q,n,\delta)} \cap (-J_{(q,n,\delta)}) = \bigcup_{1 \leq s \leq q-1} C_{s\ell}.
$$
\end{lemma}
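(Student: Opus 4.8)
We are working with $n=(q^m-1)/(q-1)$, $q\geq 3$, $m$ even, $\delta=q^{m/2}$, and $\ell=(q^{m/2}-1)/(q-1)$. Note that $m/2$ of course divides $m$, and $\ell$ is precisely the ``repunit'' $1+q+q^2+\cdots+q^{m/2-1}$ in base $q$. The key arithmetic fact I would exploit is that $n=\ell\cdot(q^{m/2}+1)$, which follows from the factorisation $q^m-1=(q^{m/2}-1)(q^{m/2}+1)$ after dividing by $q-1$. This identity tells us that $q^{m/2}\ell\equiv -\ell\pmod n$, since $(q^{m/2}+1)\ell\equiv 0\pmod n$. That single congruence is the engine for almost everything claimed.

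\begin{proof}[Proof plan]
The plan is to break the statement into its three assertions and handle them in order, with the congruence $q^{m/2}\ell\equiv-\ell\pmod n$ doing the heavy lifting throughout. First I would establish $C_{s\ell}=-C_{s\ell}$ and compute $|C_{s\ell}|$. Multiplying $s\ell$ by $q^{m/2}$ gives $s\ell q^{m/2}\equiv -s\ell\pmod n$, so $n-s\ell=-s\ell$ lies in $C_{s\ell}$; hence $C_{s\ell}=-C_{s\ell}$ for every $1\leq s\leq q-1$. To pin down the size, I would argue that $\ell\mid \ord_n(q)$-type considerations force $|C_{s\ell}|$ to be a divisor of $m$; since $q^{m/2}\cdot s\ell\equiv -s\ell\not\equiv s\ell$ (using $q\geq 3$ and $s\leq q-1$ to rule out $2s\ell\equiv 0\pmod n$), the coset is not fixed at step $m/2$, so $|C_{s\ell}|$ is a proper multiple argument gives $|C_{s\ell}|=m$. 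I would verify $2s\ell\not\equiv 0\pmod n$ by noting $0<2s\ell<2(q-1)\ell=2(q^{m/2}-1)<n$ for $m\geq 4$.

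Next I would prove $s\ell$ is a coset leader, i.e.\ that $s\ell$ is the smallest element of $C_{s\ell}$. The cleanest route is to write each residue $s\ell q^j\bmod n$ explicitly in base $q$ and check that none is smaller than $s\ell$. Because $\ell=\sum_{i=0}^{m/2-1}q^i$ is the all-ones pattern of length $m/2$, the products $s\ell q^j$ have a transparent base-$q$ structure, and the reduction modulo $n$ (equivalently, using $q^{m/2}\ell\equiv-\ell$) cyclically shifts and negates these patterns. I expect the smallest residue to be $s\ell$ itself, namely the pattern $s\cdot(1,1,\dots,1)$ occupying the low-order $m/2$ digits. This explicit-digit bookkeeping is the step I expect to be the main obstacle, since one must carefully track carries when $s\geq 2$ and confirm that every cyclic shift produces a strictly larger integer; it is routine but fiddly, analogous to the case analysis in the proof of Lemma~\ref{lem-antiCosets}.

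Finally I would identify the intersection $J_{(q,n,\delta)}\cap(-J_{(q,n,\delta)})$. Since $\delta=q^{m/2}$, the set $J_{(q,n,\delta)}=\bigcup_{1\leq i\leq q^{m/2}-1}C_i$ collects all cyclotomic cosets of representatives in $[1,q^{m/2}-1]$. The inclusion $\bigcup_{s}C_{s\ell}\subseteq J\cap(-J)$ is immediate from $C_{s\ell}=-C_{s\ell}$ together with $s\ell\leq(q-1)\ell=q^{m/2}-1\leq\delta-1$, so each $C_{s\ell}$ sits inside $J$ and equals its own negative. For the reverse inclusion I would adapt the argument of Lemma~\ref{lem-BCHdisj2}: if $a\in C_i\cap(-C_j)$ with $1\leq i,j\leq q^{m/2}-1$, then $i+jq^{t}\equiv0\pmod n$ for some $t$ reducible to the range $0\leq t\leq m/2$. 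Writing $i,j$ in base $q$ with at most $m/2$ digits and comparing against the factorisation $n=\ell(q^{m/2}+1)$ forces the solution to have the repunit shape, i.e.\ $i=j=s\ell$ for some $s$; any other digit pattern yields $0<i+jq^t<n$, a contradiction exactly as in Lemma~\ref{lem-BCHdisj2}. This shows every element of $J\cap(-J)$ lies in some $C_{s\ell}$, completing the characterisation.
\end{proof}
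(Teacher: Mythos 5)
Your overall architecture mirrors the paper's proof (establish the properties of $C_{s\ell}$ from the congruence $q^{m/2}\ell\equiv-\ell\pmod{n}$, then characterise the intersection by analysing $i+jq^{t}\equiv 0\pmod{n}$), but two of your steps have genuine gaps. First, your deduction that $|C_{s\ell}|=m$ is logically flawed: you argue that $|C_{s\ell}|$ divides $m$ and does not divide $m/2$ (since $q^{m/2}s\ell\equiv-s\ell\not\equiv s\ell$), hence equals $m$. But a divisor of $m$ that fails to divide $m/2$ need not be $m$; for $m=12$ the divisor $d=4$ divides $m$ but not $6$. To close this hole you would additionally have to exclude every even $d<m$ with $m/d$ odd, i.e.\ show $s\ell(q^{d/2}+1)\not\equiv 0\pmod{n}$ for all such $d$, which you never address. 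Second, the coset-leader claim--which you yourself call ``the main obstacle''--is not proved but only declared ``routine but fiddly''. The paper settles both points at once by exactly the computation you defer: it evaluates $s\ell q^{k}\bmod{n}$ for every $1\le k\le m-1$ (three cases: for $k<m/2$ the product is already reduced; for $k=m/2$ it equals $n-s\ell$, which also gives $C_{s\ell}=-C_{s\ell}$; for $k>m/2$ it equals $\frac{q^m-1-s(q^{m/2}-1)q^{k-m/2}}{q-1}$) and checks that each value exceeds $s\ell$, which yields leadership and $|C_{s\ell}|=m$ simultaneously. Without that computation, or a repaired divisibility argument, the first half of the lemma is unproved.

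For the intersection, your route is genuinely different from the paper's and, if completed, would be cleaner: the paper runs a digit-by-digit case analysis over all $t$ up to $m-1$ with several subcases, whereas you reduce to $t\in[0,m/2]$ and exploit the factorisation $n=\ell(q^{m/2}+1)$. That idea does work: for $t<m/2$ one indeed has $0<i+jq^{t}<n$, and at $t=m/2$ any solution of $i+jq^{m/2}\equiv 0\pmod{n}$ must equal $cn=c\ell+c\ell q^{m/2}$ with $1\le c\le q-1$, whence $i=j=c\ell$ by uniqueness of the two base-$q^{m/2}$ digits. However, your stated mechanism--``any other digit pattern yields $0<i+jq^{t}<n$''--is false precisely at the critical value $t=m/2$: for example $(i,j)=(1,q^{m/2}-1)$ gives $i+jq^{m/2}=q^m-q^{m/2}+1$, which is larger than $n$ (it lies strictly between $(q-2)n$ and $(q-1)n$); such patterns are excluded not because the sum is small but because it is not a multiple of $n$. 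So the key case of your intersection argument rests on an incorrect claim, even though the correct repair (the multiple-of-$n$ analysis above) is within easy reach.
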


\begin{proof}
Let $m$ be even and $\bar{m}=m/2$. Recall that
$$
n=\frac{q^m-1}{q-1} \mbox{ and } \ell = \frac{q^{\bar{m}}-1}{q-1}
$$

We first prove that $s\ell$ is a coset leader and $|C_{s\ell}|=m$ for each $s$ with $1 \leq s \leq q-1$.
To this end, we consider $s\ell q^k \bmod{n}$ by distinguishing the following three cases.

\subsubsection*{Case I}

When $1 \leq k \leq \bar{m}-1$, it is obvious that
$$
s\ell q^k =s\frac{q^{\bar{m}+k}-q^k}{q-1}<n.
$$
As a result, $s\ell q^k  \bmod{n} =s\ell q^k  > s\ell$.

\subsubsection*{Case II}

When $k=\bar{m}$, $s\ell q^k \bmod{n}= n - s \ell >s\ell $.

\subsubsection*{Case III}

When $\bar{m}+1 \leq k \leq m-1$, we have
\begin{eqnarray*}
s\ell q^k & \equiv & s \left( \sum_{i=0}^{k-1-\bar{m}} q^i + \sum_{j=k}^{m-1} q^j  \right) \pmod{n} \\
               & \equiv & -s \sum_{i=k-\bar{m}}^{k-1} q^i \pmod{n}.
\end{eqnarray*}
It then follows that
\begin{eqnarray*}
s\ell q^k \bmod{n} &=& n - s \sum_{i=k-\bar{m}}^{k-1} q^i \\
&=& \frac{q^m-1 -s (q^{\bar{m}}-1)q^{k-\bar{m}}}{q-1} \\
& > & s\ell.
\end{eqnarray*}

Collecting the conclusions in Cases I, II and III yields the desired conclusions on $s\ell$ above.
We now proceed to prove the rest of the conclusions of this lemma.

Let $a$ and $b$ be two coset leaders in $J_{(q, n, \delta)}$ such that $C_a=-C_b$. Then there
exists a $j$ with $0 \leq j \leq m-1$ such that
\begin{eqnarray}\label{eqn-mmaineqn}
a+bq^j \equiv 0 \pmod{n}.
\end{eqnarray}

By assumption, $a \not\equiv 0 \pmod{q}$,  $b \not\equiv 0 \pmod{q}$, and
$$
1 \leq a \leq q^{\bar{m}}-1, \ 1 \leq b \leq q^{\bar{m}}-1.
$$
Let
$$
a=\sum_{i=0}^{\bar{m}-1} a_i q^i \mbox{ and } b=\sum_{i=0}^{\bar{m}-1} b_i q^i,
$$
where $0 \leq a_i \leq q-1$, $0 \leq b_i \leq q-1$, $a_0 \neq 0$ and $b_0 \neq 0$.

Below we continue our proof by considering the following three cases.

\subsubsection*{Case 1}

If $j \leq \bar{m}-1$, then $\bar{m}+j-1 \leq m-2$. Consequently,
$$
0 < a+bq^j = \sum_{i=\bar{m}}^{\bar{m}-1+j}  b_{i-j} q^i + \sum_{i=j}^{\bar{m}-1} (a_i + b_{i-j}) q^i + \sum_{i=0}^{j-1} a_i q^i < n.
$$
This means that
$$
a+bq^j \bmod{n} = a+bq^j \neq 0,
$$
which is contrary to \eqref{eqn-mmaineqn}.

\subsubsection*{Case 2}

If $j = \bar{m}$, then
\begin{eqnarray*}
bq^j &=& \sum_{i=1}^{\bar{m}} b_{\bar{m}-i} q^{m-i} \\
& \equiv & \sum_{i=2}^{\bar{m}} (b_{\bar{m}-i} - b_{\bar{m}-1})q^{m-i}  - b_{\bar{m}-1} \sum_{i=0}^{\bar{m}-1} q^i \pmod{n}.
\end{eqnarray*}
We then obtain
$$
a+bq^j \equiv T \pmod{n},
$$
where
$$
T= \sum_{i=2}^{\bar{m}} (b_{\bar{m}-i} - b_{\bar{m}-1})q^{m-i}  +  \sum_{i=0}^{\bar{m}-1}  (a_i - b_{\bar{m}-1}) q^i.
$$
Note that the highest power of $q$ in the expression of $T$ is at most $m-2$. We know that $-n < T <n$. It then
follows from \eqref{eqn-mmaineqn} that all the coefficients in the expression of $T$ are zero. This implies that
$$
a_0=a_1=\cdots=a_{\bar{m}-1}=b_0=b_1=\cdots=b_{\bar{m}-1}.
$$
Recall that $a_0 \neq 0$ and $b_0 \neq 0$. We then deduce that $a=b=s\ell$ for some $s$ with $1 \leq s \leq q-1$.
Furthermore, $C_{s\ell}=-C_{s\ell}$.

\subsubsection*{Case 3}

If $\bar{m}+1 \leq j \leq m-1$, then
\begin{eqnarray}\label{eqn-fifi}
bq^j &\equiv & \sum_{h=0}^{j-\bar{m}-1} b_{m-j+h} q^h +  \sum_{h=j}^{m-1} b_{h-j} q^h \pmod{n}  \nonumber \\
& \equiv & \sum_{k=j}^{m-2} (b_{k-j}-b_{m-j-1})q^k - b_{m-j-1} \sum_{h=j-\bar{m}}^{j-1} q^h + \sum_{h=0}^{j-\bar{m}-1} (b_{m-j+h}-b_{m-j-1})q^h \pmod{n}.
\end{eqnarray}

\subsubsection*{Case 3.1}

If $b_{k-j}-b_{m-j-1}=0$ for all $k$ with $j \leq k \leq m-2$, then
$$
b_0=b_1= \cdots= b_{m-j-1} \neq 0.
$$
It then follows from \eqref{eqn-fifi} that
$$
bq^j \bmod{n} = n - b_{m-j-1} \sum_{h=j-\bar{m}}^{j-1} q^h + \sum_{h=0}^{j-\bar{m}-1} (b_{m-j+h}-b_{m-j-1})q^h.
$$
Note that $\bar{m} \leq j-1 \leq m-2$ and $1 \leq a \leq q^{\bar{m}}-1$. We arrive at
$$
0 < a +(bq^j \bmod{n}) <n,
$$
which means that
$$
a + bq^j \not\equiv 0 \pmod{n}.
$$
This is contrary to \eqref{eqn-mmaineqn}.

\subsubsection*{Case 3.2}

If $b_{k-j}-b_{m-j-1} \neq 0$ for some $k$ with $j \leq k \leq m-2$, let $k$ be the largest such one.

\subsubsection*{Case 3.2.1}

If  $b_{k-j}-b_{m-j-1} < 0$, it follows from \eqref{eqn-fifi} that
$$
bq^j \bmod{n} = n +
 \sum_{h=j}^{k} (b_{h-j}-b_{m-j-1})q^h - b_{m-j-1} \sum_{h=j-\bar{m}}^{j-1} q^h + \sum_{h=0}^{j-\bar{m}-1} (b_{m-j+h}-b_{m-j-1})q^h.
$$
Recall that $j \leq k \leq m-2$ and $1 \leq a \leq q^{\bar{m}}-1$. We deduce that
$$
0 < a +(bq^j \bmod{n}) <n,
$$
which shows that
$$
a + bq^j \not\equiv 0 \pmod{n}.
$$
This is contrary to \eqref{eqn-mmaineqn}.

\subsubsection*{Case 3.2.2}

If  $b_{k-j}-b_{m-j-1} > 0$, it follows from \eqref{eqn-fifi} that
$$
bq^j \bmod{n} =
 \sum_{h=j}^{k} (b_{h-j}-b_{m-j-1})q^h - b_{m-j-1} \sum_{h=j-\bar{m}}^{j-1} q^h + \sum_{h=0}^{j-\bar{m}-1} (b_{m-j+h}-b_{m-j-1})q^h.
$$
Recall that $\bar{m} \leq k \leq m-2$ and $1 \leq a \leq q^{\bar{m}}-1$. We conclude that
$$
0 < a +(bq^j \bmod{n}) <n,
$$
which implies that
$$
a + bq^j \not\equiv 0 \pmod{n}.
$$
This is contrary to \eqref{eqn-mmaineqn}.

Summarizing all the conclusions in Cases 1, 2 and 3, we know that \eqref{eqn-mmaineqn} holds if and only if
$$
a=b=s\ell \mbox{ and } j=\frac{m}{2},
$$
where $1 \leq s \leq q-1$. This completes the proof of this lemma.
\end{proof}

\begin{lemma}\label{lem-march20}
Let $m \geq 4$ be even and $n=(q^m-1)/(q-1)$. Let $a$ be an integer such that
$q^{(m-2)/2} \leq a \leq q^{m/2}$ and $a \not\equiv 0 \pmod{q}$.
\begin{enumerate}
\item When $q$ is even, $a$ is a coset leader with $|C_a|=m$ except that
$$
a=i+1+i \frac{q^{m/2}-q}{q-1},
$$
where
$$
i \in \left\{ \frac{q}{2}, \frac{q}{2} +1, \cdots, \frac{q}{2} +\frac{q-4}{2} \right\}.
$$
\item When $q=3$, $a$ must be a coset leader.

When $q>3$ is odd, $a$ is a coset leader except that
$$
a=i+1+i \frac{q^{m/2}-q}{q-1},
$$
where
$$
i \in \left\{ \frac{q+1}{2}, \frac{q+1}{2} +1, \cdots, \frac{q+1}{2} +\frac{q-5}{2} \right\}.
$$

In addition, if $q$ is odd and $a$ is a coset leader, then $|C_a|=m$ except that
$$
a=\frac{q^{m/2}+1}{2}
$$
with $|C_a|=m/2$.
\end{enumerate}

\end{lemma}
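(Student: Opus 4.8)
The plan is to work directly with the $q$-adic digits of $a$ and the two congruences governing arithmetic modulo $n=(q^m-1)/(q-1)$: namely $q^m\equiv 1\pmod n$ (so $\ord_n(q)=m$ and every coset size divides $m$) and $\sum_{i=0}^{m-1}q^i=n\equiv 0\pmod n$. Write $\bar m=m/2$. Since $a\not\equiv 0\pmod q$ rules out $a=q^{\bar m}$, the hypothesis $q^{\bar m-1}\le a\le q^{\bar m}$ means $a=\sum_{i=0}^{\bar m-1}a_iq^i$ has exactly $\bar m$ digits with $a_0\ge 1$ and $a_{\bar m-1}\ge 1$. That $a$ is a coset leader is the assertion $aq^k\bmod n\ge a$ for all $1\le k\le m-1$, and $|C_a|$ is the least $k>0$ with $aq^k\equiv a\pmod n$; so the whole lemma reduces to computing the residues $aq^k\bmod n$ and comparing them with $a$.

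I would first settle the coset-size claim, which is the cleaner half. For $1\le k\le\bar m-1$ one has $0<a(q^k-1)\le (q^{\bar m}-1)(q^{\bar m-1}-1)<n$, so $aq^k\not\equiv a$; as any proper divisor of $m$ is at most $\bar m$, the only remaining candidate period is $k=\bar m$. Now $aq^{\bar m}\equiv a\pmod n$ is equivalent to $n\mid a(q^{\bar m}-1)$, and since $n(q-1)=(q^{\bar m}-1)(q^{\bar m}+1)$ this simplifies to $(q^{\bar m}+1)\mid a(q-1)$. Writing $a(q-1)=t(q^{\bar m}+1)$ with $1\le t\le q-2$ and using $q^{\bar m}+1\equiv 2\pmod{q-1}$, one finds no solution when $q$ is even (so $|C_a|=m$ throughout Part 1) and the unique solution $a=(q^{\bar m}+1)/2$ when $q$ is odd, which accounts exactly for the half-size coset in Part 2.

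The substantial part is the coset-leader claim, and here I would run a case analysis on $k$ in the spirit of Lemma \ref{lem-Join2}. For $1\le k\le\bar m-1$ the shift stays below $n$, so $aq^k\bmod n=aq^k>a$ and nothing can go wrong; every exception must therefore surface in the reflected range $\bar m\le k\le m-1$, where one rewrites $aq^k$ modulo $n$ by folding the high powers $q^{\bar m},\dots,q^{m-1}$ back down via $\sum_{i=0}^{m-1}q^i\equiv 0$. Carrying out this reduction and comparing digit-by-digit with $a$ shows the residue exceeds $a$ unless the digit string of $a$ is nearly constant; concretely, the violations occur precisely for $a=(i+1)+i(q+q^2+\cdots+q^{\bar m-1})$, i.e. digits $(i+1,i,i,\dots,i)$, where an appropriate shift yields a strictly smaller near-constant residue. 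A short magnitude estimate then pins down the admissible middle digit $i$, giving $q/2\le i\le q-2$ for even $q$ and $(q+1)/2\le i\le q-2$ for odd $q>3$. The shift at the lower endpoint is explained by the size computation above: for odd $q$ the borderline value $i=(q-1)/2$ gives exactly $a=(q^{\bar m}+1)/2$, which is a genuine coset leader (merely with $|C_a|=m/2$) rather than a non-leader, while for $q=3$ the exceptional range is empty, so every admissible $a$ is a leader.

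The main obstacle I anticipate is the carry-sensitive evaluation of $aq^k\bmod n$ for $\bar m\le k\le m-1$: unlike the single pattern $a=s\ell$ treated in Lemma \ref{lem-Join2}, here $a$ runs over all admissible digit strings, so one must track how carries propagate when the folded high part is added to the surviving low part, and isolate exactly the near-constant patterns whose reduced residue drops to or below $a$. Getting the boundary of the exceptional set right—distinguishing strict inequality from equality at $i=(q-1)/2$—is the delicate point, but it is controlled by the clean divisibility criterion $(q^{\bar m}+1)\mid a(q-1)$ obtained in the size analysis.
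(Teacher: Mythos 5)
Your proposal is correct, and for the heart of the lemma (deciding which $a$ fail to be coset leaders) it follows essentially the same route as the paper: expand $a$ in its $q$-adic digits $a_0,\dots,a_{\bar m-1}$ with $a_0\neq 0\neq a_{\bar m-1}$, observe that shifts $1\le k\le \bar m-1$ keep $aq^k<n$ so nothing can go wrong there, fold the high powers back via $\sum_{i=0}^{m-1}q^i\equiv 0 \pmod{n}$ for $\bar m\le k\le m-1$, and isolate the near-constant digit strings $(i+1,i,\dots,i)$ as the only patterns whose residue drops below $a$; the paper's Cases 2 and 3, with subcases keyed to the largest index where a digit disagrees with $a_{\bar m-1}$, are exactly the ``carry-sensitive evaluation'' you anticipate, and the paper's criterion that $aq^{\bar m}\bmod n - a$ equals $\bigl(a_0-a_1\frac{q+1}{q-1}\bigr)(q^{\bar m}-1)$ is precisely your ``short magnitude estimate'' pinning down the admissible $i$. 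The one genuinely different element is your treatment of the coset sizes: you reduce $aq^{\bar m}\equiv a\pmod{n}$ to the divisibility $(q^{\bar m}+1)\mid a(q-1)$ using $n(q-1)=(q^{\bar m}-1)(q^{\bar m}+1)$, then solve it outright (no solution for even $q$; only $a=(q^{\bar m}+1)/2$ for odd $q$, forced by $q^{\bar m}+1\equiv 2\pmod{q-1}$ and $1\le t\le q-2$), whereas the paper obtains this only as a byproduct of the equality case inside the digit analysis. Your shortcut is cleaner: it gives a self-contained proof of the $|C_a|$ assertions independent of the heavy case analysis, and it transparently explains why the exceptional range of $i$ starts at $q/2$ for even $q$ but at $(q+1)/2$ for odd $q$, with the borderline $i=(q-1)/2$ producing a genuine leader of half-size coset rather than a non-leader. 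The one caveat is that your middle paragraph asserts, rather than verifies, that the folded residue exceeds $a$ for every non-near-constant digit string and every $\bar m\le k\le m-1$; that assertion is true, but it is where essentially all of the paper's labor sits (its Cases 2.1, 2.2.2, 3.1, 3.2.1, 3.2.2), so to turn the plan into a proof that verification must still be carried out in full.
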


\begin{proof}
Let $\bar{m}=m/2$. Let $a$ be such that $q^{(m-2)/2} \leq a \leq q^{m/2}$ and $a \not\equiv 0 \pmod{q}$.
Then the $q$-adic expansion of $a$ is of the form
$$
a=\sum_{i=0}^{\bar{m}-1} a_i q^i,
$$
where $0 \leq a_i \leq q-1$, $a_0 \neq 0$ and $a_{\bar{m}-1}  \neq 0$. Then
$$
aq^j=\sum_{i=0}^{\bar{m}-1} a_i q^{i+j}
$$
for all $j \geq 0$. To prove the desired conclusions of this lemma, we below consider $aq^j \bmod{n}$ for
$1 \leq j \leq m-1$ by distinguishing the following three cases.

\subsubsection*{Case 1: $1 \leq j \leq \bar{m}-1$}
In this case, $aq^j \bmod{n}= aq^j \bmod{n} >a$.

\subsubsection*{Case 2: $ j = \bar{m}$}

In this case, we have
\begin{eqnarray}\label{eqn-ccase2}
aq^j  \equiv \sum_{i=\bar{m}}^{m-2} (a_{i-\bar{m}} -a_{\bar{m}-1})q^i - a_{\bar{m}-1} \sum_{i=0}^{\bar{m}-1} q^i \pmod{n}.
\end{eqnarray}
We continue our discussions of Case 2 by distinguishing the following two subcases.

\subsubsection*{Case 2.1}
In this subcase, we assume that $a_i-a_{\bar{m}-1} =0$ for all $i$ with $0 \leq i \leq \bar{m}-2$. It then follows from \eqref{eqn-ccase2}
that
\begin{eqnarray*}
aq^j \bmod{n} = n -a_0 \frac{q^{\bar{m}}-1}{q-1} = \frac{q^m-1-a_0(q^{\bar{m}}-1)}{q-1 } =\frac{(q^{\bar{m}}-1)(q^{\bar{m}}+1-a_0)}{q-1} > a.
\end{eqnarray*}

\subsubsection*{Case 2.2}
In this subcase, let $k$ be the largest such that $a_k-a_{\bar{m}-1}  \neq 0$ and $0 \leq k \leq \bar{m}-2$. It then follows from \eqref{eqn-ccase2}
that
\begin{eqnarray}\label{eqn-ccase2.2}
aq^j  \equiv \sum_{i=0}^{k} (a_{i} -a_{\bar{m}-1})q^{\bar{m}+i} - a_{\bar{m}-1} \sum_{i=0}^{\bar{m}-1} q^i \pmod{n}.
\end{eqnarray}

\subsubsection*{Case 2.2.1}
If $a_k-a_{\bar{m}-1}  > 0$, it follows from \eqref{eqn-ccase2.2} that
\begin{eqnarray}\label{eqn-ccase2.2.1}
aq^j \bmod{n}  = \sum_{i=0}^{k} (a_{i} -a_{\bar{m}-1})q^{\bar{m}+i} - a_{\bar{m}-1} \sum_{i=0}^{\bar{m}-1} q^i.
\end{eqnarray}

When $k \geq 1$, we have that $\bar{m}+k \geq (m+2)/2$. It then follows from \eqref{eqn-ccase2.2.1} that
$aq^j \bmod{n}  \geq q^{\bar{m}} > a$.

When $k=0$, by assumption,
$$
a_1-a_{\bar{m}-1}=a_2-a_{\bar{m}-1}= \cdots = a_{\bar{m}-2}-a_{\bar{m}-1}=0
$$
and
$$
a_0-a_{\bar{m}-1}>0, \ a_0 \neq 0, \ a_{\bar{m}-1} \neq 0.
$$
Consequently,
$$
a_1=a_2=\cdots=a_{\bar{m}-2}=a_{\bar{m}-1}.
$$
By \eqref{eqn-ccase2.2.1}, we obtain
\begin{eqnarray}\label{eqn-sg1}
aq^j \bmod{n} &=& (a_0-a_{\bar{m}-1})q^{\bar{m}} - a_{\bar{m}-1} \sum_{i=0}^{\bar{m}-1} q^i \nonumber \\
                       &=& a_0q^{\bar{m}} - a_1 \frac{q^{\bar{m}+1}-1}{q-1}.
\end{eqnarray}
By definition and the discussions above, we get
\begin{eqnarray}\label{eqn-sg2}
a=a_0+a_1 \frac{q^{\bar{m}}-q}{q-1}.
\end{eqnarray}
Combining \eqref{eqn-sg1} and \eqref{eqn-sg2}, we arrive at
\begin{eqnarray}\label{eqn-sg3}
aq^j \bmod{n} -a = \left( a_0 - a_1 \frac{q+1}{q-1} \right)(q^{\bar{m}}-1).
\end{eqnarray}

If $q$ is even, then $\gcd(q-1, q+1)=1$. As a result,
\begin{eqnarray}\label{eqn-sg4}
a_0 - a_1 \frac{q+1}{q-1} \neq 0.
\end{eqnarray}
In this case, it can be verified that the total number of pairs $(a_0, a_1) \in \{1,2, \cdots, q-1\}^2$ such that
$a_0>a_1$ and
$$
a_0 - a_1 \frac{q+1}{q-1} < 0
$$
is equal to $(q-2)/2$, and those pairs are $(i+1, i)$, where
\begin{eqnarray}\label{eqn-sg5}
i \in \left\{ \frac{q}{2}, \frac{q}{2}+1, \cdots, \frac{q}{2} + \frac{q-4}{2} \right\}.
\end{eqnarray}
Consequently, all the $a$'s with  $q^{(m-2)/2} \leq a \leq q^{m/2}$ and $a \not\equiv 0 \pmod{q}$ are coset
leaders except that
$$
a=(i+1)+i \frac{q^{\bar{m}}-q}{q-1},
$$
where $i$ satisfies \eqref{eqn-sg5}.

If $q$ is odd, then $\gcd(q-1, q+1)=2$.  The only pair $(a_0, a_1) \in \{1,2, \cdots, q-1\}^2$ such that
$a_0>a_1$ and
\begin{eqnarray}\label{eqn-sg6}
a_0 - a_1 \frac{q+1}{q-1} = 0
\end{eqnarray}
is $((q+1)/2, (q-1)/2)$. In this case,
$$
a=\frac{q^{m/2}+1}{2}
$$
and $aq^{\bar{m}} \bmod{n}=a$. It then follows from the conclusion of Case 1 that this $a$ is a coset leader
with $|C_a|=m/2$.

It can be verified that the total number of pairs $(a_0, a_1) \in \{1,2, \cdots, q-1\}^2$ such that
$a_0>a_1$ and
\begin{eqnarray}\label{eqn-sg7}
a_0 - a_1 \frac{q+1}{q-1} < 0
\end{eqnarray}
is equal to $(q-3)/2$, and those pairs are $(i+1, i)$, where
\begin{eqnarray}\label{eqn-sg8}
i \in \left\{ \frac{q+1}{2}, \frac{q+1}{2}+1, \cdots, \frac{q+1}{2} + \frac{q-5}{2} \right\}.
\end{eqnarray}
Consequently, all the $a$'s with  $q^{(m-2)/2} \leq a \leq q^{m/2}$ and $a \not\equiv 0 \pmod{q}$ are coset
leaders except that
$$
a=(i+1)+i \frac{q^{\bar{m}}-q}{q-1},
$$
where $i$ satisfies \eqref{eqn-sg8}. This completes the discussions in Case 2.2.1.

\subsubsection*{Case 2.2.2}

If $a_k-a_{\bar{m}-1}  < 0$, it follows from \eqref{eqn-ccase2.2} that
\begin{eqnarray}\label{eqn-ccase2.2.2}
aq^j \bmod{n}  = n + \sum_{i=0}^{k} (a_{i} -a_{\bar{m}-1})q^{\bar{m}+i} - a_{\bar{m}-1} \sum_{i=0}^{\bar{m}-1} q^i.
\end{eqnarray}
Note that $\bar{m}+k \leq m-2$. It then follows from \eqref{eqn-ccase2.2.2} that
$$
aq^j \bmod{n} \geq q^{\bar{m}}>a.
$$

\subsubsection*{Case 3: $\bar{m}+1 \leq j \leq m -1$}

In this case, let $\bar{j}=j-(\bar{m}+1)$. Then $0 \leq \bar{j} \leq \bar{m}-2$. Note that $(q^m-1) \equiv 0 \pmod{n}$.
One can check that
\begin{eqnarray}\label{eqn-sg10}
aq^j \equiv T \bmod{n},
\end{eqnarray}
where
$$
T=\sum_{u=0}^{\bar{j}} (a_{\bar{m}-1-\bar{j}+u}-a_{\bar{m}-\bar{j}-2})q^u
    - a_{\bar{m}-\bar{j}-2} \sum_{u=\bar{j}+1}^{\bar{j}+\bar{m}} q^u + \sum_{u=\bar{j}+\bar{m}+1}^{m-2} (a_{u-(\bar{j}+\bar{m}+1)}- a_{\bar{m}-\bar{j}-2} )q^u.
$$

\subsubsection*{Case 3.1}

If $a_u- a_{\bar{m}-\bar{j}-2} =0$ for all $u$ with $0 \leq u \leq \bar{m}-\bar{j}-3$, then
$$
0 \neq a_0=a_1= \cdots = a_{\bar{m}-\bar{j}-2}.
$$
In this case,
$$
aq^j  \bmod{n} = n + \sum_{u=0}^{\bar{j}} (a_{\bar{m}-1-\bar{j}+u}-a_{\bar{m}-\bar{j}-2})q^u
    - a_{\bar{m}-\bar{j}-2} \sum_{u=\bar{j}+1}^{\bar{j}+\bar{m}} q^u.
$$
Note that $\bar{m}+\bar{j} \leq m-2$. We then deduce that
$$
aq^j \bmod{n} \geq q^{\bar{m}}>a.
$$

\subsubsection*{Case 3.2}

If $a_u- a_{\bar{m}-\bar{j}-2}  \neq 0$ for some $u$ with $0 \leq u \leq \bar{m}-\bar{j}-3$, let $k$ be the largest such $u$.
By definition,
$$
0 \leq k \leq \bar{m}-\bar{j}-3.
$$

\subsubsection*{Case 3.2.1}

If  $a_k- a_{\bar{m}-\bar{j}-2}  > 0$, then $T >0$. Note that $k \geq \bar{m}+\bar{j}+1 \geq \bar{m}+1$. We have
$$
aq^j \bmod{n} = T > a.
$$

\subsubsection*{Case 3.2.2}

If  $a_k- a_{\bar{m}-\bar{j}-2}  < 0$, then $T <0$. Note that $k + \bar{m}+\bar{j}+1 \leq m-2$. We have
$$
aq^j \bmod{n} = n-T > a.
$$

Collecting all the conclusions in Cases 1, 2 and 3, we complete the proof of this lemma.
\end{proof}

\begin{theorem}\label{thm-march301}
Let $m \geq 4$ be even and $2 \leq \delta \leq q^{m/2}$. Define
$$
\epsilon =\left\lfloor \frac{(\delta-2)(q-1)}{q^{m/2}-1}\right\rfloor.
$$
Then the BCH code
$\C_{(q,n,\delta, 1)}$
has length $n=(q^m-1)/(q-1)$, minimum distance $d \geq \delta$, and dimension
\begin{eqnarray*}
k=\left\{
\begin{array}{ll}
n-m \left\lceil (\delta-1)(q-1)/q \right\rceil+(2\epsilon -(q-2))\frac{m}{2}
& \mbox{ if $\epsilon \geq \lfloor (q-1)/2\rfloor$,} \\
n-m \left\lceil (\delta-1)(q-1)/q \right\rceil & \mbox{ if $\epsilon < \lfloor (q-1)/2\rfloor$.}
\end{array}
\right.
\end{eqnarray*}
\end{theorem}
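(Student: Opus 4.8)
The plan is to compute the dimension $k = n - \deg(g_{(q,n,\delta,1)}(x))$ by counting, with multiplicity of cyclotomic coset sizes, the number of zeros of the generator polynomial of the narrow-sense BCH code $\C_{(q,n,\delta,1)}$. By definition its defining set is $\bigcup_{1 \le i \le \delta-1} C_i = J_{(q,n,\delta)}$, so $\deg(g) = |J_{(q,n,\delta)}|$ and $k = n - |J_{(q,n,\delta)}|$. The main task is therefore to count $|J_{(q,n,\delta)}|$ precisely. First I would recall that Lemma~\ref{lem-AKS} already handles the ``generic'' count: for the range $2 \le \delta \le q^{\lfloor m/2 \rfloor}$ it gives that the consecutive integers $1, 2, \ldots, \delta-1$ supply coset leaders of full size $m$ except for those divisible by $q$, yielding a baseline degree $m\lceil (\delta-1)(q-1)/q\rceil$. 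This explains the first term in both cases of the formula.

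\textbf{The correction term.} The delicate point is that Lemma~\ref{lem-AKS}'s guarantee of full-size, distinct cosets is only valid up to roughly $q^{\lceil m/2\rceil}$, and here $\delta$ may be as large as $q^{m/2}$, so as $\delta$ grows past $q^{(m-2)/2}$ the integers in the range $q^{(m-2)/2} \le a \le q^{m/2}$ need separate analysis -- exactly the content of Lemma~\ref{lem-march20}. I would use Lemma~\ref{lem-march20} to identify precisely which integers $a$ in this upper range \emph{fail} to be coset leaders (they were already counted inside some earlier coset) and which \emph{special} coset leader $a = (q^{m/2}+1)/2$ has reduced size $m/2$ rather than $m$. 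Each non-leader $a$ found in Lemma~\ref{lem-march20} means we overcounted the baseline degree, and the half-size special coset means a further discrepancy of $m/2$; both corrections contribute to the term $(2\epsilon - (q-2))\tfrac{m}{2}$. The quantity $\epsilon = \lfloor (\delta-2)(q-1)/(q^{m/2}-1)\rfloor$ should be interpreted as counting how many of the exceptional values $a = i+1 + i\frac{q^{m/2}-q}{q-1}$ from Lemma~\ref{lem-march20} fall below $\delta-1$, i.e.\ how many exceptions are actually active for the given $\delta$.

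\textbf{Assembling the dimension.} With the exceptions enumerated, I would split into the two regimes of the theorem according to whether $\epsilon \ge \lfloor (q-1)/2\rfloor$ or $\epsilon < \lfloor (q-1)/2\rfloor$. When $\epsilon$ is small (the second case) none of the exceptional values from Lemma~\ref{lem-march20} has yet entered the defining set, so the count agrees with the baseline $n - m\lceil (\delta-1)(q-1)/q\rceil$ and no correction is needed. When $\epsilon$ is large enough (the first case), a computable number of exceptional non-leaders and the half-size coset have entered, and a careful bookkeeping of how many exceptional pairs $(a_0,a_1)$ are captured -- reconciling the even-$q$ and odd-$q$ subcases of Lemma~\ref{lem-march20} -- produces the additive correction $(2\epsilon - (q-2))\tfrac{m}{2}$. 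Finally, the lower bound $d \ge \delta$ is immediate from the BCH bound since $g$ vanishes at the $\delta-1$ consecutive powers $\beta, \beta^2, \ldots, \beta^{\delta-1}$.

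\textbf{Main obstacle.} The hardest part will be the exact matching of the correction term: I must verify that the number of exceptional values $a$ from Lemma~\ref{lem-march20} that lie in $\{1,\ldots,\delta-1\}$ is correctly captured by $\epsilon$, and that each such exception contributes the right multiple of $m/2$ (accounting separately for ordinary non-leaders, which remove a full $m$ from the overcount, versus the unique half-size coset $a=(q^{m/2}+1)/2$, which removes $m/2$). Pinning down this combinatorial identity so that the two terms collapse exactly into $(2\epsilon - (q-2))\tfrac{m}{2}$, uniformly across the parities of $q$, is where the real care is required; the remaining steps are direct applications of the cited lemmas.
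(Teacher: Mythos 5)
Your proposal is correct and takes essentially the same route as the paper's own proof: both get $d \geq \delta$ from the BCH bound and compute the dimension by combining Lemma~\ref{lem-AKS} (the baseline count $m\lceil (\delta-1)(q-1)/q\rceil$ over integers not divisible by $q$) with Lemma~\ref{lem-march20} (the exceptional non-leaders of the form $1+i(q^{m/2}-1)/(q-1)$ and the unique half-size coset leader $(q^{m/2}+1)/2$), using $\epsilon$ to track how many exceptional values lie below $\delta-1$ and splitting into the same two regimes. The only refinement the paper makes to your bookkeeping is that $\epsilon$ counts all multiples of $(q^{m/2}-1)/(q-1)$ up to $\delta-2$, so the number of active non-leaders is $\epsilon-\lfloor (q-1)/2\rfloor$ rather than $\epsilon$ itself, and adding the $m/2$ deficit of the half-size coset then collapses the correction to $(2\epsilon-(q-2))\frac{m}{2}$ exactly as you anticipated.
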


\begin{proof}
Let $m$ be even. The lower bound on the minimum distance comes from the BCH bound. We prove
the conclusion on the dimension only for the case that $q$ is odd, and omit the proof of the
conclusion for
the other case, which is similar.

Let $q$ be odd. When $\epsilon \geq (q-1)/2$, it follows from Lemmas \ref{lem-AKS} and
\ref{lem-march20} that the total number of non-coset-leaders $b$ with $(q^{m/2}+1)/2 \leq b \leq \delta-1$
is equal to
$$
\epsilon-\frac{q+1}{2}+1=\epsilon-\frac{q-1}{2}.
$$
In this case, $\hat{a}:=(q^{m/2}+1)/2 \leq \delta-1$. Hence, $\hat{a}$ is a coset leader with
$|C_{\hat{a}}|=m/2$.
It follows again from Lemmas \ref{lem-AKS} and \ref{lem-march20} that the total number of coset
leaders $a$ with $1 \leq a \leq \delta-1$ is equal to
$$
\left\lceil \frac{(\delta-1)(q-1)}{q} \right\rceil - \left(\epsilon-\frac{q-1}{2}\right).
$$
For all these coset leaders $a$ we have $|C_a|=m$ except that $a=\hat{a}$.
The desired conclusion on the dimension then follows.

When $\epsilon < (q-1)/2$, we have $\delta-1 < (q^{m/2}+1)/2$, it follows from Lemmas \ref{lem-AKS} and \ref{lem-march20}, every integer $a$ with $1 \leq a \leq \delta-1$ and $a \not\equiv
0 \pmod{q}$ is a coset leader with $|C_a|=m$. The desired conclusion on the dimension then follows.
\end{proof}

\begin{corollary}
Let $m \geq 4$ be even and $\delta=q^{m/2}$.
Then the code
$\C_{(q,n,\delta, 1)}$
has length $n=(q^m-1)/(q-1)$, dimension
$$
k=n-q^{(m-2)/2}(q-1)m+(q-2)\frac{m}{2},
$$
and minimum distance $d \geq \delta+1$.
\end{corollary}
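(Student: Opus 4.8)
The plan is to obtain the dimension by specialising Theorem \ref{thm-march301} to $\delta=q^{m/2}$, and then to sharpen the generic BCH bound $d\geq\delta$ to $d\geq\delta+1$ by exhibiting one additional consecutive root of the generator polynomial. The dimension is pure bookkeeping on top of Theorem \ref{thm-march301}; the only genuinely new ingredient is the minimum distance.

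First I would verify the hypotheses and evaluate $\epsilon$. Writing $N=q^{m/2}$, the quantity inside the floor in Theorem \ref{thm-march301} is
$$
\frac{(\delta-2)(q-1)}{q^{m/2}-1}=\frac{(N-2)(q-1)}{N-1}=(q-1)-\frac{q-1}{N-1}.
$$
Since $m\geq 4$ forces $N=q^{m/2}>q$, we have $0<(q-1)/(N-1)<1$, so this value lies strictly between $q-2$ and $q-1$, whence $\epsilon=q-2$. One then checks $\epsilon=q-2\geq\lfloor(q-1)/2\rfloor$ for every $q\geq 3$, so the first branch of the dimension formula of Theorem \ref{thm-march301} is the relevant one.

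Next I would substitute. In that branch the correction term becomes $(2\epsilon-(q-2))\frac{m}{2}=(q-2)\frac{m}{2}$, and a short computation gives
$$
\left\lceil\frac{(\delta-1)(q-1)}{q}\right\rceil=\left\lceil\frac{(q^{m/2}-1)(q-1)}{q}\right\rceil=q^{(m-2)/2}(q-1),
$$
because $(q^{m/2}-1)(q-1)/q=q^{m/2}-q^{m/2-1}-1+1/q$ has fractional part $1/q$. Feeding these two expressions into the formula of Theorem \ref{thm-march301} yields exactly $k=n-q^{(m-2)/2}(q-1)m+(q-2)\frac{m}{2}$, as claimed.

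The decisive step is the minimum distance. By construction the generator polynomial of $\C_{(q,n,\delta,1)}$ vanishes at $\beta^i$ for every $i\in J_{(q,n,\delta)}=\bigcup_{1\leq i\leq\delta-1}C_i$, which already contains the $\delta-1$ consecutive exponents $1,2,\ldots,\delta-1$ and hence gives $d\geq\delta$. To push this to $d\geq\delta+1$ I would observe that the exponent $\delta=q^{m/2}$ itself re-enters the zero set through a cyclotomic coset: since $q^{m/2}<q^{m-1}<n$ (using $m/2<m-1$ for $m\geq 4$), we have $q^{m/2}\bmod n=q^{m/2}$, so $q^{m/2}\in C_1\subseteq J_{(q,n,\delta)}$. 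Thus $\beta^{\delta}$ is also a zero of the generator polynomial, the zero set contains the $\delta$ consecutive exponents $1,2,\ldots,\delta$, and the BCH bound delivers $d\geq\delta+1$. I expect this observation---that the coset $C_1$ silently returns the exponent $\delta$ to the defining set---to be the only subtle point, since it is precisely what distinguishes $\delta=q^{m/2}$ from a generic $\delta$ and is not available through reversibility (the narrow-sense code $\C_{(q,n,\delta,1)}$ is in general not reversible here).
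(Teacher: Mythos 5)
Your proof is correct and takes essentially the same route as the paper: the dimension comes from specialising Theorem \ref{thm-march301} (your computation $\epsilon=q-2$ and the evaluation of the ceiling term is exactly the bookkeeping the paper leaves implicit), and your bound $d\geq\delta+1$ is precisely the paper's assertion that the Bose distance is $\delta+1$, for which your observation that $\delta=q^{m/2}\in C_1$ extends the consecutive run of zeros to $1,2,\ldots,\delta$ supplies the justification the paper omits.
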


\begin{proof}
The conclusion on the dimension follows from Theorem \ref{thm-march301}. The
improvement on the lower bound of the minimum distance is due to the fact
that the Bose distance is $\delta+1$ in this case.
\end{proof}

\begin{theorem}\label{thm-march302}
Let $m \geq 4$ be even and $2 \leq \delta \leq q^{m/2}$. Define
$$
\epsilon =\left\lfloor \frac{(\delta-2)(q-1)}{q^{m/2}-1}\right\rfloor, \
\bar{\epsilon} =\left\lfloor \frac{(\delta-1)(q-1)}{q^{m/2}-1}\right\rfloor.
$$
Then the BCH code
$\C_{(q,n,2\delta, 1-\delta)}$ is reversible and
has length $n=(q^m-1)/(q-1)$, minimum distance $d \geq 2\delta$, and dimension
\begin{eqnarray*}
k=\left\{
\begin{array}{ll}
n-1-2m \left\lceil (\delta-1)(q-1)/q \right\rceil+(2\epsilon -(q-2))m + \bar{\epsilon} m
& \mbox{ if $\epsilon \geq \lfloor (q-1)/2\rfloor$,} \\
n-1-2m \left\lceil (\delta-1)(q-1)/q \right\rceil + \bar{\epsilon} m & \mbox{ if $\epsilon < \lfloor (q-1)/2\rfloor$.}
\end{array}
\right.
\end{eqnarray*}
\end{theorem}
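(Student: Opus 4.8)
The plan is to reduce the entire statement to a computation with the defining set of the code. First I would observe that $\C_{(q,n,2\delta,1-\delta)}$ is the BCH code whose generator polynomial has the $2\delta-1$ consecutive roots $\beta^{1-\delta},\ldots,\beta^{-1},\beta^0,\beta^1,\ldots,\beta^{\delta-1}$. Since $1-\delta=-(\delta-1)$ and $2\delta=2(\delta-1)+2$, Theorem \ref{thm-revBCHcodes1} (applied with $t=\delta-1$) gives reversibility at once, and the BCH bound gives $d\geq 2\delta$. The length is $n=(q^m-1)/(q-1)$ by definition, so only the dimension remains. Writing $J=J_{(q,n,\delta)}$, the generator polynomial is $g(x)=\lcm(x-1,g_u(x),g_u^*(x))$, where $g_u$ is the generator polynomial of the narrow-sense code $\C_{(q,n,\delta,1)}$ with defining set $J$, and $g_u^*$ is its reciprocal with defining set $-J$. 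Hence the defining set of $g$ is $\{0\}\cup J\cup(-J)$, and since $0\notin J\cup(-J)$, inclusion--exclusion yields $\deg(g)=1+2|J|-|J\cap(-J)|$, so that $k=n-1-2|J|+|J\cap(-J)|$.

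Next I would substitute the known quantities. The size $|J|=\deg(g_u)$ equals $n$ minus the dimension of $\C_{(q,n,\delta,1)}$, which is supplied by Theorem \ref{thm-march301}; splitting on whether $\epsilon\geq\lfloor (q-1)/2\rfloor$ reproduces exactly the two branches appearing in the claimed formula. After this substitution the only remaining unknown in $k=n-1-2|J|+|J\cap(-J)|$ is the overlap $|J\cap(-J)|$, and verifying that this overlap equals $\bar{\epsilon}\,m$ is where the real work lies, since then both cases collapse directly onto the asserted expressions.

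To evaluate $|J\cap(-J)|$ I would combine Lemma \ref{lem-Join2} with a monotonicity argument. Because $\delta\leq q^{m/2}$ we have $J\subseteq J_{(q,n,q^{m/2})}$ and $-J\subseteq -J_{(q,n,q^{m/2})}$, so by Lemma \ref{lem-Join2}, with $\ell=(q^{m/2}-1)/(q-1)$, the intersection satisfies $J\cap(-J)\subseteq J_{(q,n,q^{m/2})}\cap(-J_{(q,n,q^{m/2})})=\bigcup_{1\leq s\leq q-1}C_{s\ell}$. As $J\cap(-J)$ is a union of complete $q$-cyclotomic cosets, each of its cosets must be one of the $C_{s\ell}$; and a self-paired coset $C_{s\ell}=-C_{s\ell}$ lies in $J\cap(-J)$ precisely when its leader satisfies $s\ell\leq\delta-1$. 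Thus $J\cap(-J)=\bigcup_{1\leq s\leq q-1,\ s\ell\leq\delta-1}C_{s\ell}$. The number of admissible $s$ is $\lfloor(\delta-1)/\ell\rfloor=\lfloor(\delta-1)(q-1)/(q^{m/2}-1)\rfloor=\bar{\epsilon}$, which is at most $q-1$ since $\delta\leq q^{m/2}$; because each $|C_{s\ell}|=m$ by Lemma \ref{lem-Join2}, we conclude $|J\cap(-J)|=\bar{\epsilon}\,m$. Substituting into $k=n-1-2|J|+|J\cap(-J)|$ then delivers both cases of the theorem.

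I expect the main obstacle to be the rigorous justification that $J\cap(-J)$ contains \emph{only} the size-$m$ cosets $C_{s\ell}$ and that no smaller coset intrudes — in particular the coset $C_{(q^{m/2}+1)/2}$ of size $m/2$ that Lemma \ref{lem-march20} produces when $q$ is odd. This is precisely the point where Lemma \ref{lem-Join2} is indispensable: it already pins down the \emph{full} intersection at the extreme value $\delta=q^{m/2}$ to be exactly $\bigcup_{s}C_{s\ell}$, all of size $m$, so any such anomalous coset lies in $J$ but not in $-J$ and therefore never contributes to the overlap. Once this containment is secured, the counting is routine.
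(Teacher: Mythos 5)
Your proposal is correct and follows essentially the same route as the paper: reversibility plus the BCH bound for the distance, Theorem \ref{thm-march301} for the degree of the narrow-sense generator polynomial $g_u(x)$, and Lemma \ref{lem-Join2} to evaluate the overlap, which the paper phrases as $\deg(\gcd(g_u(x), g_u^*(x)))$ and you phrase as $|J\cap(-J)|$. Your explicit monotonicity-and-counting argument showing $|J\cap(-J)|=\bar{\epsilon}\,m$ in fact spells out a step the paper only asserts (and states with a typo, writing $\bar{\epsilon}$ where $\bar{\epsilon}\,m$ is meant).
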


\begin{proof}
Notice that the code $\C_{(q,n,2\delta, 1-\delta)}$ is reversible. The lower bound on the minimum
distance comes from the BCH bound.
Let $g_{(q, n, \delta, 1)}(x)$ denote the generator polynomial of the code $\C_{(q,n,\delta, 1)}$
of Theorem \ref{thm-march301}. It then follows from Theorem \ref{thm-march301} that
\begin{eqnarray*}
\deg(g_{(q, n, \delta, 1)}(x)) =\left\{
\begin{array}{ll}
m \left\lceil (\delta-1)(q-1)/q \right\rceil-(2\epsilon -(q-2))\frac{m}{2}
& \mbox{ if $\epsilon \geq \lfloor (q-1)/2\rfloor$,} \\
m \left\lceil (\delta-1)(q-1)/q \right\rceil & \mbox{ if $\epsilon < \lfloor (q-1)/2\rfloor$.}
\end{array}
\right.
\end{eqnarray*}

By definition, the generator polynomial $g_{(q, n, 2\delta, 1-\delta)}(x)$ of $\C_{(q,n,2\delta, 1-\delta)}$ is given by
\begin{eqnarray*}
g_{(q, n, 2\delta, 1-\delta)}(x)
&=& \lcm(x-1, g_{(q, n, \delta, 1)}(x), g_{(q, n, \delta, 1)}^*(x)) \\
&=& (x-1) \frac{g_{(q, n, \delta, 1)}(x) g_{(q, n, \delta, 1)}^*(x)}{\gcd(g_{(q, n, \delta, 1)}(x), g_{(q, n, \delta, 1)}^*(x))},
\end{eqnarray*}
where $g_{(q, n, \delta, 1)}^*(x)$ is the reciprocal of $g_{(q, n, \delta, 1)}(x)$.
Consequently,
$$
\deg(g_{(q, n, 2\delta, 1-\delta)}(x))=1+2 \deg(g_{(q, n, \delta, 1)}(x))-\deg(\gcd(g_{(q, n, \delta, 1)}(x), g_{(q, n, \delta, 1)}^*(x))).
$$
By Lemma \ref{lem-Join2}, we have
$$
\deg(\gcd(g_{(q, n, \delta, 1)}(x), g_{(q, n, \delta, 1)}^*(x)))=\bar{\epsilon}.
$$
The desired conclusion on the dimension of $\C_{(q,n,2\delta, 1-\delta)}$ then follows.
\end{proof}

For the two parameters $\bar{\epsilon}$ and $\epsilon$ defined in Theorem \ref{thm-march302},
we have $\bar{\epsilon}=\epsilon$ except in a few cases where $\bar{\epsilon}=\epsilon+1$.

\begin{corollary}\label{cor-lastone}
Let $m \geq 4$ be even and $\delta=q^{m/2}$.
Then the reversible BCH code
$\C_{(q,n,2\delta, 1-\delta)}$
has length $n=(q^m-1)/(q-1)$, dimension
$$
k=n-1-2mq^{(m-2)/2}(q-1)+(2q-3)m,
$$
and minimum distance $d \geq 2\delta+2$.
\end{corollary}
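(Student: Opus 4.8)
The plan is to read this off as the specialization $\delta = q^{m/2}$ of Theorem~\ref{thm-march302}, handling the dimension by direct substitution and then upgrading the minimum-distance bound by a Bose-distance argument. First I would evaluate the two floor parameters. Writing $Q = q^{m/2}$, I note that $(Q-2)(q-1)/(Q-1) = (q-1) - (q-1)/(Q-1)$ with $0 < (q-1)/(Q-1) < 1$ since $Q-1 \geq q^2-1 > q-1$ for $m \geq 4$; hence $\epsilon = q-2$. Likewise $\bar\epsilon = \lfloor (Q-1)(q-1)/(Q-1)\rfloor = q-1$. Because $q \geq 3$ throughout this section, $\epsilon = q-2 \geq \lfloor (q-1)/2\rfloor$, so the first branch of the dimension formula in Theorem~\ref{thm-march302} is the relevant one.

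Next I would simplify the ceiling term. From $(\delta-1)(q-1)/q = q^{m/2} - q^{m/2-1} - 1 + 1/q$, whose integer part is $q^{m/2}-q^{m/2-1}-1$ and whose fractional part is $1/q \in (0,1)$, I get $\lceil (\delta-1)(q-1)/q\rceil = q^{(m-2)/2}(q-1)$. Substituting this together with $\epsilon = q-2$ and $\bar\epsilon = q-1$ into the first branch, the two correction terms become $(2\epsilon-(q-2))m = (q-2)m$ and $\bar\epsilon m = (q-1)m$, which add to $(2q-3)m$; this yields precisely the claimed dimension $k = n - 1 - 2mq^{(m-2)/2}(q-1) + (2q-3)m$. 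This portion is routine arithmetic.

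The genuine content is the improvement from $d \geq 2\delta$ to $d \geq 2\delta+2$, which I would establish by producing a longer run of consecutive roots than the designed distance provides. The crucial observation is that $\delta = q^{m/2}$ itself lies in the cyclotomic coset $C_1$: one has $q^{m/2} < n$ for $m \geq 4$ and $q \geq 3$, while $q^{m/2}\cdot q^{m/2} = q^m \equiv 1 \pmod n$ (as $q^m-1 = (q-1)n$), so $q^{m/2} = 1\cdot q^{m/2} \bmod n \in C_1$. Consequently $\beta^{\delta}$ is already a root of the generator polynomial $g_{(q,n,\delta,1)}$, and hence of $g_{(q,n,2\delta,1-\delta)}$, which contains $\lcm(g_{(q,n,\delta,1)}, g_{(q,n,\delta,1)}^*)$ as a factor.

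Finally, since $\C_{(q,n,2\delta,1-\delta)}$ is reversible, its generator polynomial is self-reciprocal by Theorem~\ref{thm-ReversibleCyclicCodes}, so $\beta^{-\delta}$ is a root as well. Together with the roots $\beta^i$ for $i \in \{-(\delta-1), \ldots, \delta-1\}$ guaranteed by the BCH construction with $b = 1-\delta$ and designed distance $2\delta$, the generator polynomial then has $\beta^i$ as a root for every $i$ in the block $\{-\delta, -\delta+1, \ldots, \delta-1, \delta\}$ of $2\delta+1$ consecutive integers modulo $n$, and the BCH bound gives $d \geq 2\delta+2$. I expect this last step --- noticing that $\delta$ falls inside $C_1$ and then using reversibility to extend the consecutive run symmetrically on both ends --- to be the only non-mechanical ingredient; everything else is substitution into the preceding theorem.
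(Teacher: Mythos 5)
Your proof is correct and takes essentially the same approach as the paper: the dimension comes from specializing Theorem~\ref{thm-march302} (with $\epsilon=q-2$, $\bar{\epsilon}=q-1$, and $\lceil(\delta-1)(q-1)/q\rceil=q^{(m-2)/2}(q-1)$), and the distance improvement is precisely the paper's Bose-distance observation, which you justify explicitly by noting that $\delta=q^{m/2}\in C_1$, so $\beta^{\delta}$ (and by self-reciprocality $\beta^{-\delta}$) is a root, extending the consecutive root run to $\{-\delta,\ldots,\delta\}$. Your write-up simply supplies the details that the paper's two-line proof leaves implicit.
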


\begin{proof}
The conclusion on the dimension follows from Theorem \ref{thm-march302}. The
improvement on the lower bound of the minimum distance is due to the fact
that the Bose distance is $\delta+1$ in this case.
\end{proof}

\begin{example}
Let $(q, m, \delta)=(3, 4, 9)$. Then the code $\C_{(q,n,2\delta, 1-\delta)}$ has parameters $[40,3, 20]$.
\end{example}

\begin{example}
Let $(q, m, \delta)=(4, 4, 16)$. Then the code $\C_{(q,n,2\delta, 1-\delta)}$ has parameters $[85, 8, 34]$.
\end{example}

\section{Concluding remarks}

The main contributions of this paper are the following:
\begin{itemize}
\item The construction of all reversible cyclic codes over finite fields documented in Section \ref{sec-allRevCodes}.
\item The construction of the family of reversible cyclic codes of length $n=q^\ell +1$ over $\gf(q)$ and the analysis of
         their parameters (see Theorem \ref{thm-antidelta}).
\item The analysis of the family of reversible cyclic codes of length $n=q^m-1$ over $\gf(q)$ (see
         Theorem \ref{thm-bCode}).
\item The analysis of the family of reversible cyclic codes of length $n=(q^m-1)/(q-1)$ over $\gf(q)$
          (see Theorem \ref{thm-march302}).
\end{itemize}

The dimensions of all these codes were settled.  Lower bounds on all the reversible cyclic codes were
derived from the BCH bound. In most cases, we conjecture that the lower bounds are actually the minimum distances
of the codes. However, it is extremely difficult to determine the minimum distance of these cyclic codes. The reader
is cordially invited to settle the open problems and conjectures proposed in this paper.

\section*{Acknowledgements}

The authors are very grateful to the reviewers and the Associate Editor, Prof. Chaoping Xing, for their detailed comments and suggestions that much improved the presentation and quality of this paper.

\end{document}